\theoremstyle{thmstyleone}%
\newtheorem{theorem}{Theorem}
\newtheorem{proposition}[theorem]{Proposition}%
\newtheorem{corollary}[theorem]{Corollary}
\newtheorem{lemma}[theorem]{Lemma}
\theoremstyle{remark}
\newtheorem{claim}{Claim}
\theoremstyle{thmstyletwo}%
\theoremstyle{thmstylethree}%
\newtheorem{definition}{Definition}%
\theoremstyle{definition}
\newtheorem{assumption}{Assumption}
\newcommand{\note}[1]{{\textcolor{black}{{#1}}}}
\newcommand{\newchange}[1]{{\textcolor{black}{{#1}}}}
\newcommand{\newrevision}[1]{{\textcolor{black}{{#1}}}}
\newcommand{\commentout}[1]{%
}
\begin{document}

\title[Inferring Individual Direct Causal Effects Under Heterogeneous Peer Influence]{Inferring Individual Direct Causal Effects Under Heterogeneous Peer Influence}


\author*[1]{\fnm{Shishir} \sur{Adhikari}}\email{sadhik9@uic.edu}

\author*[1]{\fnm{Elena} \sur{Zheleva}}\email{ezheleva@uic.edu}


\affil*[1]{\orgdiv{Department of Computer Science}, \orgname{University of Illinois Chicago}, \orgaddress{\city{Chicago}, \postcode{60608}, \state{Illinois}, \country{USA}}}




\abstract{
Causal inference is crucial for understanding the effectiveness of policies and designing personalized interventions. Causal inference involves estimating the causal effects of treatments on outcomes of interest after modeling appropriate assumptions.
Most causal inference approaches assume that a unit's outcome is independent of the treatments or outcomes of other units. However, this assumption is unrealistic when inferring causal effects in networks where a unit's outcome can be influenced by the treatments and outcomes of its neighboring nodes, a phenomenon known as interference. Causal inference in networks should explicitly account for interference. In interference settings, the direct causal effect measures the impact of the unit's own treatment while controlling for the treatments of peers. Existing solutions to estimating direct causal effects under interference consider either homogeneous influence from peers or specific heterogeneous influence mechanisms (e.g., based on local neighborhood structure). In this work, we define \textit{heterogeneous peer influence} (HPI) as the general interference that occurs when a unit's outcome may be influenced differently by different peers based on their attributes and relationships, or when each network node may have a different susceptibility to peer influence. This paper presents IDE-Net, a framework for estimating individual, i.e., unit-level, direct causal effects in the presence of HPI where the mechanism of influence is not known a priori. We first propose a structural causal model for networks that can capture different possible assumptions about network structure, interference conditions, and causal dependence and that enables reasoning about causal effect identifiability and discovery of potential heterogeneous contexts. We then propose a novel graph neural network-based estimator to estimate individual direct causal effects. We show empirically that state-of-the-art methods for individual direct effect estimation produce biased results in the presence of HPI, and that our proposed estimator is robust.}

\keywords{causal inference, interference, peer influence, structural causal model}



\maketitle

\vspace{-1.5em}
\section{Introduction}

Causal inference is pivotal in informed decision-making across various domains, including vaccination distribution~\cite{barkley-aas20}, policy making~\cite{patacchini-jcbo17}, and online advertising~\cite{nabi-frontiers22}. {The goal of causal inference is to estimate the causal effect of treatments on outcomes of interest, and thus help with improving policy effectiveness and targeted interventions.}
When the outcome of a unit can be caused not only by its own treatment but also by the treatments of others, this outcome \emph{interference} has to be accounted for in causal effect estimation~\cite{leavitt-jasp51,halloran-ep95,ugander-kdd13,aral-ohen16}. 
Three main causal effects of interest in interference settings are \textit{direct effects} induced by a unit's own treatment, \textit{peer (or indirect) effects} induced by the treatments of other units, and \textit{total effects} induced by both unit's and others' treatments~\cite{hudgens-jasa08}. These causal effects are measured on population-level as average effects and on unit level as individual (or heterogeneous) effects.

In the presence of interference, direct effect estimation, which is the focus of this work, needs to take into consideration the extent to which peer treatments influence the unit's outcome and separate that influence from the influence of the unit's treatment itself. There is a subtle but important difference between peer effect estimation and direct effect estimation under peer influence. Peer effect estimation considers the difference in a unit's outcome under two different treatment regimes of other units (e.g., all other units being treated and all other units not being treated). Direct effect estimation under peer influence considers the difference in a unit's outcome between two different treatments of the unit itself while \emph{controlling for} the treatment and attributes of other units where the treatment regime of other units is fixed.

Peer influence is modeled through \emph{exposure mapping}~\cite{aronow-aas17}, which is a function that maps peer treatments and other contexts to a representation referred to as \textit{peer exposure} that summarizes exposure to peer treatments, reduces high dimensionality, and is invariant to irrelevant contexts (e.g., permutation).
Most methods assume homogeneous (or equal) influence from each peer and use exposure mapping like the fraction of treated peers. 
However, \textit{heterogeneous peer influence} (HPI) can occur when the influence of each peer on a unit varies based on the unit's and peer traits, relationship characteristics, and network properties. Recent methods focus on HPI due to specific contexts, such as local neighborhood structure~\cite{yuan-www21}, known node attributes (e.g., roles like parent-child or siblings)~\cite{qu-arxiv21} or node types (e.g., customer and product)~\cite{lin-ecmlkdd23}, known edge weights~\cite{forastiere-jasa21}, node attribute similarity~\cite{zhao-arxiv22}, and group interactions~\cite{ma-kdd22}. 


\begin{figure*}[!t]
    \centering
    \includegraphics[width=\linewidth]{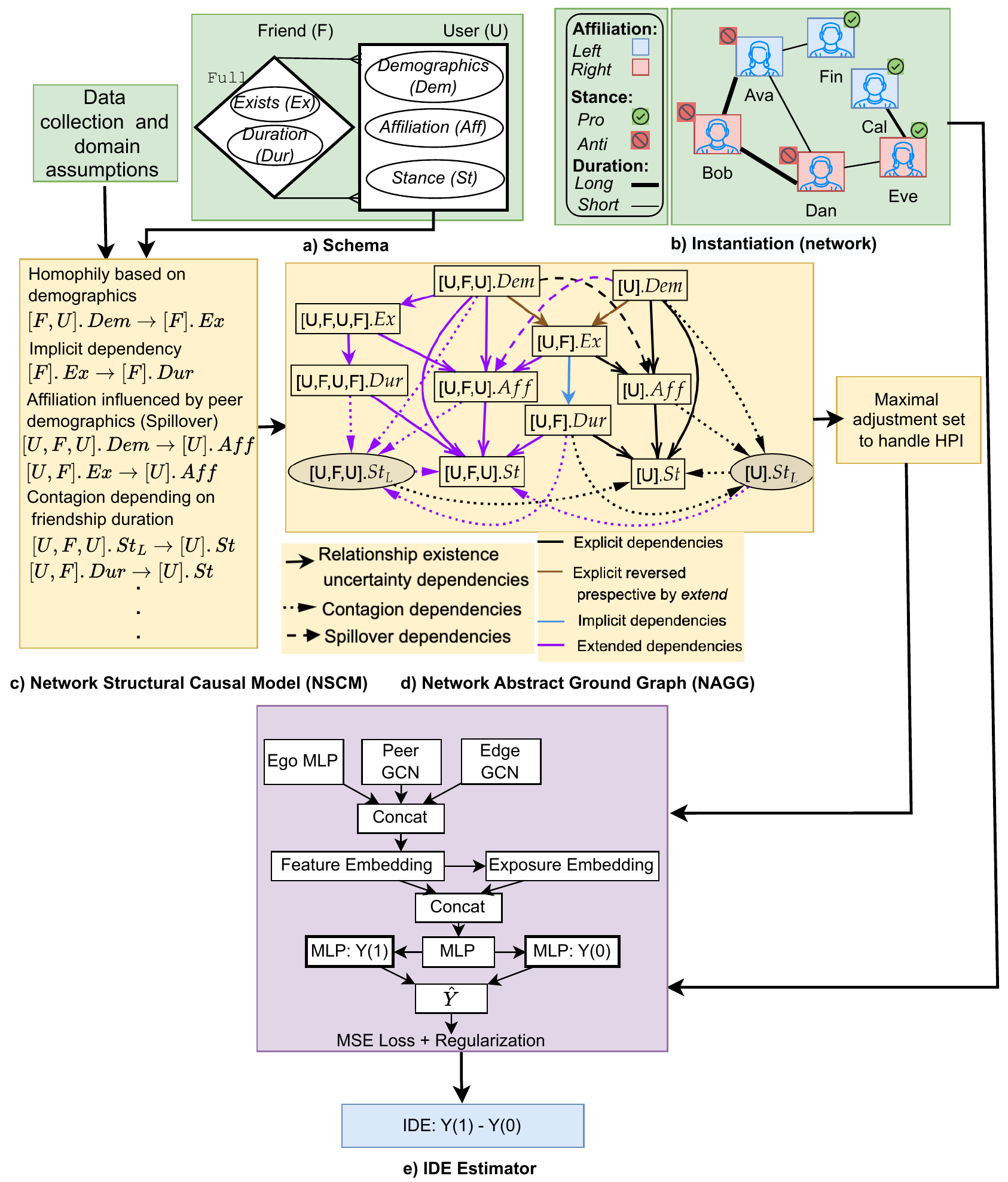}
    \vspace{-1.5em}
    \caption{{Overview of the IDE-Net framework through an example query of whether a user's political affiliation (X) affects the user's stance (Y). NSCM\newrevision{, like structural equation model,} encodes the domain and data assumptions \newrevision{(e.g., homophily and selection bias)} and NAGG\newrevision{, like a graphical model,} enables causal reasoning to find a maximal adjustment set of relational variables, if the IDE is identifiable. The IDE Estimator uses a GNN-based feature representation of the maximal adjustment set and network features and it learns to predict counterfactual outcomes.}}
    \label{fig:overview}
    \vspace{-1em}
\end{figure*}
\begin{figure}[!t]
    \begin{minipage}[b]{0.48\textwidth}
    \centering
    \includegraphics[width=\linewidth]{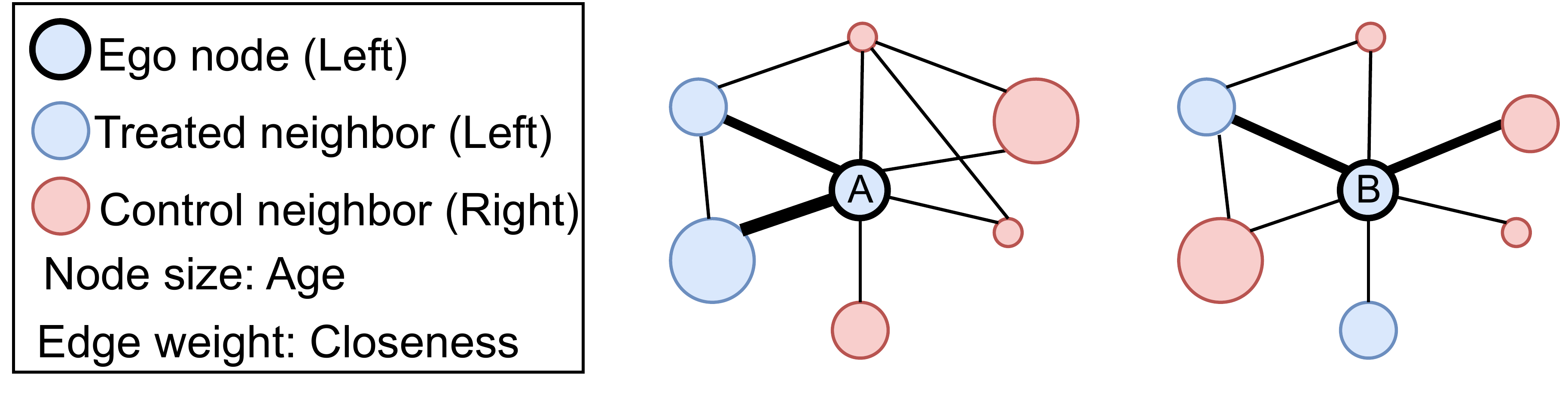}
    \caption{Two ego networks with the same number of treated and untreated friends but possibly different peer exposure conditions. }
    \label{fig:eg2-exp}
    \end{minipage}
    \begin{minipage}[b]{0.48\textwidth}
    \centering
    \includegraphics[width=0.8\linewidth]{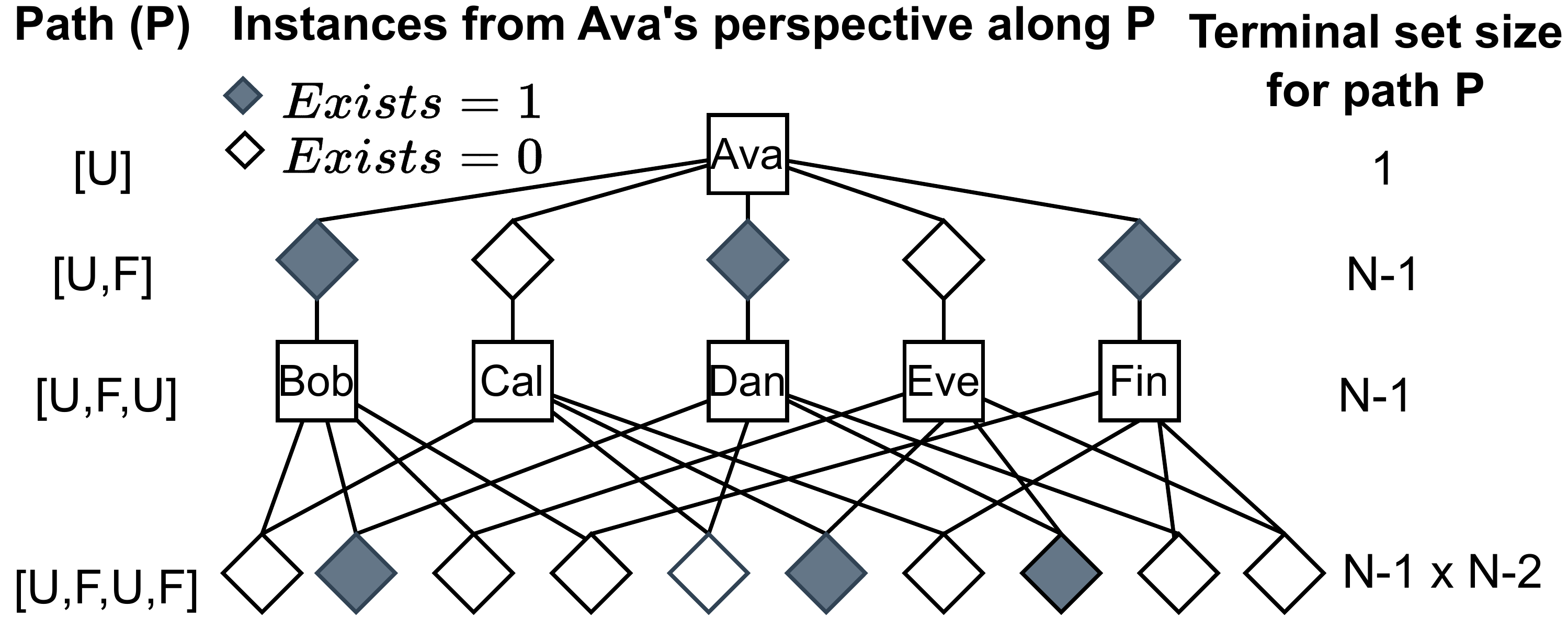}
    \caption{Relational paths (i.e., traversal in the schema) for a toy social network. The paths capture an ego, the ego's relationships, other nodes, and their relationships (except with the ego).}
    \label{fig:eg2-paths}
    \vspace{-1em}
    \end{minipage}
\end{figure}

Instead of assuming specific HPI contexts, our work focuses on estimating \textit{individual} direct effects (IDE) when the HPI contexts and heterogeneous susceptibilities to peer influence are not known a priori. As a motivating example, consider a toy social network, depicted in Fig. \ref{fig:overview}a), abstracted by a \textit{schema} (Fig. \ref{fig:overview}b)) with entity class User~\texttt{(U)}, relationship class Friend~\texttt{(F)}, and attributes associated with each class. The instantiation of the schema is an attributed network capturing the existence (\texttt{Ex}) and duration (\texttt{Dur}) of friendship for users with different demographics (\texttt{Dem}), political affiliations (\texttt{Aff}), and stances (\texttt{St}) on a policy issue (e.g., vaccination). The stances of individuals in the social network can be influenced by their political affiliation and the affiliations and/or stances of their peers. Understanding the causal impact of an individual's affiliation (the treatment) on the individual's own stance (the outcome) requires individual direct effect (IDE) estimation while controlling for indirect effects from peer treatments, outcomes, and other confounding factors, such as age. To illustrate heterogeneous peer influence, let us consider two ego networks (i.e., {the subnetwork of} a unit, its neighbors, and edges between them) for units A and B, depicted in Fig. \ref{fig:eg2-exp}, where the units have the same number of treated and untreated friends but different network structures and peer attributes. Methods assuming homogeneous peer influence would produce the same representation of peer exposure (e.g., 33\% treated) for the two units. However, peer exposure should be able to capture {more complex mechanisms of influence} due to node attributes (e.g., similarity of age), edge attributes (e.g., closeness), and network structure (e.g., peer degree). For example, {the treated peers of unit A are closest peers to the ego and are mutual friends}, and they may influence more collectively {than the treated peers of unit B}. A misspecified {exposure mapping} may not control the indirect effects {which would result} in a biased IDE estimate. 
{Similarly, peer exposure captures a unit's environment in the network and it may be an effect modifier, i.e., a context that leads to different IDEs for different individuals, and its misspecification could bias the IDE estimates.}
{Under unknown HPI mechanisms, exposure mapping should be expressive to automatically capture relevant contexts but be invariant to irrelevant contexts. Moreover, a feature mapping accounting for confounders and effect modifiers is crucial for IDE estimation. To ensure expressive feature extraction for IDE estimation, our framework, \emph{IDE-Net}, depicted in Figure \ref{fig:overview}, employs causal modeling and reasoning in network settings followed by feature extraction with graph neural networks (GNN) to capture contexts that can explain the underlying HPI mechanism, confounding, and effect modification. 
We propose an expressive \textit{Network Structural Causal Model (NSCM)} that encodes assumptions about the domain (e.g., interference conditions) and data collection (e.g., known selection bias). NSCM extends the relational causal model (RCM)~\cite{maier-thesis14,lee-thesis18} to incorporate relationship existence in networks (e.g., due to homophily, i.e., the tendency of similar individuals to be friends), latent variables, and selection bias. 
While existing methods prefer certain network models (e.g., distinct clusters~\cite{mcnealis-arxiv23}) or mechanisms (e.g., equilibrium~\cite{ogburn-jrssa20}), our model does not assume any specific network model, and by design, like RCM, is general for any network instantiation. 
The \textit{Network Abstract Ground Graph (NAGG)}, a graphical model derived from NSCM, enables reasoning about the identification of causal effects similar to the structural causal model (SCM)~\cite{pearl-book09}. We prove that NAGG is sound and complete for reasoning about relational d-separation. Using NSCM and NAGG, we can identify a maximal adjustment set of relational variables for IDE estimation. {The advantage of modeling causal relationships at the schema level with the NSCM and NAGG instead of at the instance level is that we can model generalizable mechanisms (e.g., homophily) in terms of abstract concepts.
}
Next, we propose a novel graph neural network (GNN)-based estimator that encourages expressive representation of the maximal adjustment set of relational variables to estimate IDEs by controlling potential confounders, heterogeneous peer influence contexts, and effect modifiers. {The use of GNN helps us to automatically capture features beyond unit's own covariates such as local network structure around the unit and neighbor covariates.} We use regularizations in the loss functions of the causal effect estimator to promote invariance by reducing sensitivity to irrelevant features. We empirically show the robustness of our approach for the estimation of average and individual direct effects under HPI.}
\newchange{To summarize, we adapt causal modeling in network and IDE estimation under realistic setting of heterogeneous peer influence with unknown mechanism of influence. Sections 2–5 of the paper are organized to present related work, problem setup using SCM framework, causal modeling and identifiability in network setting, and estimation of IDE, respectively. Then, in Section 6, we present the experimental setup and key results before discussing future work and concluding the paper.}
\section{Related Work} \label{sec:related}
Heterogeneous treatment effect estimation often assumes no interference and 
has focused on discovering heterogeneous contexts and estimating individual treatment effects (ITE) from i.i.d.~data~\cite{athey-pnas16,kunzel-pnas19,qidong-acml20} and from network data but without interference~\cite{guo-wsdm20,gilad-arxiv21}. A prominent approach for ITE estimation~\cite{johansson-icml16,shalit-icml17} involves representation learning with neural networks to balance covariates in the treatment and control groups and predict counterfactual outcomes. We focus on representation learning with graph neural networks (GNNs)~\cite{kipf-iclr16,xu-iclr18} for ITE estimation in networks with interference. 
{In contrast to other works in interference settings using GNN \newrevision{to summarize neighborhood covariates}, our GNN model is designed for increased expressiveness of node, edge, and network features from a unit's perspective to encourage controlling unknown HPI contexts for better IDE estimation.}

There are two main lines of work on modeling causal inference with interference. The first one relies on the potential outcomes framework~\cite{rubin-jep74} and typically assumes that the variables that satisfy the unconfoundedness assumption~\cite{rubin-jasa80} are known a priori~\cite{hudgens-jasa08,forastiere-jasa21,qu-arxiv21,bargagli-arxiv20}. The second one relies on graphical models~\cite{pearl-book09} which allow for the modeling of causal assumptions and reasoning about causal effect identification without requiring a priori knowledge about which variables would meet the unconfoundedness assumption. These works include causal diagrams for interference~\cite{ogburn-ss14}{, structural equation models (SEMs)~\cite{ogburn-jasa22}}, chain graphs~\cite{shpitser-arxiv17,ogburn-jrssa20}, structural causal models (SCMs) at the unit level with linear structural equations~\cite{zhang-neurips22}, declarative languages~\cite{salimi-sigmod20}, and relational causal models (RCMs)~\cite{arbour-kdd16,maier-thesis14,lee-thesis18}. Multiple works have considered identifiability and estimation under specific SCMs with interference~\cite{shalizi-sar11,cristali-neurips22,fatemi-arxiv23,hayes-arxiv22}. Our causal modeling builds on RCMs because of their flexibility to represent and reason about causal dependence in relational domains 
and because they lend themselves to learning causal models from data~\cite{maier-uai13,lee-uai20}. {\citet{ogburn-jasa22} study causal inference in social network using SEMs with summary variables but that work does not consider unknown HPI contexts like ours.}

Most methods for estimating individual or heterogeneous causal effects under interference~\cite{bargagli-arxiv20,jiang-cikm22,ogburn-jasa22,cai-cikm23,chen-icml24} assume homogeneous interference between units. \newrevision{These methods adapt techniques like adversarial training~\cite{jiang-cikm22}, propensity score reweighting~\cite{cai-cikm23}, and doubly robust estimation via targeted learning~\cite{chen-icml24} for causal effect estimation in homogeneous interference settings.} Recent works solving a diverse set of problems have {implicitly or explicitly} addressed heterogeneous peer influence (HPI) due to known contexts~\cite{qu-arxiv21,forastiere-jasa21} or specific contexts~\cite{yuan-www21,ma-kdd22,tran-aaai22,zhao-arxiv22,lin-ecmlkdd23}. Some of these works refer to HPI as heterogeneous interference~\cite{qu-arxiv21,zhao-arxiv22,lin-ecmlkdd23}. \citet{tran-aaai22} study peer contagion effects with homogeneous influence but different unit-level susceptibilities to the influence. \citet{yuan-www21} capture peer exposure with causal network motifs, i.e., recurrent subgraphs in a unit's ego network with treatment assignments. \citet{ma-kdd22} focus on addressing heterogeneous influence due to group interactions utilizing hypergraphs. 
\citet{zhao-arxiv22} deal with heterogeneity due to node attribute similarity using attention weights to estimate peer exposure and causal effects. \citet{lin-ecmlkdd23} consider heterogeneity due to multiple entities and relationships in networks.
In contrast, we focus on causal modeling in networks that helps to identify relational random variables that capture unknown HPI contexts, confounders, and effect modifiers. %
{Our work focuses on addressing mispecified exposure mappings due to unknown HPI, but there are other works studying causal inference with mispecified exposure mappings~\cite{savje-bio24}.}
\section{Causal Inference Problem Setup}
\subsection{Data model}
We represent the network as an undirected graph {\small $G=(\mathcal{V},\mathcal{E})$} with a set of {\small $N=|\mathcal{V}|$} vertices and a set of edges {\small $\mathcal{E}$}. We assume all vertices {\small$\mathcal{V}$} belong to a single entity class {\small $E$} (e.g., User) and the edges {\small $\mathcal{E}$} belong to a single relationship class {\small $R$} (e.g., Friend). We denote node attributes with {\small $\mathcal{A}(E)$} and edge attributes with {\small $\mathcal{A}(R)$}.
Let {\small $X=<X_1,...,X_i,...,X_N>$} be a random variable comprising the treatment variables {\small $X_i$} for each node $v_i$ in the network and {\small $Y_i$} be a random variable for $v_i$'s outcome.
Let {\small $\bm{\pi}=<\pi_1,...,\pi_i,...,\pi_N>$} be an assignment to {\small $X$} with {\small $\pi_i \in \{0,1\}$} assigned to {\small $X_i$}. 
{Let {\small $X_{-i}=X\setminus X_i$} and {\small $\bm{\pi}_{-i}=\bm{\pi} \setminus \pi_i$} denote random variable and its value for treatment assignment to other units except $v_i$.
\subsection{Causal estimand}
We focus on estimating individual direct effects when the contexts for \emph{heterogeneous peer influence} (HPI) and effect modification are not known a priori {and use the structural causal model (SCM) framework to define the causal estimand}. Let {\small $\mathcal{Z}_i$} denote unknown HPI contexts that will be identified as functions of node attributes, edge attributes, and network structure (in \S \ref{sec:repr}).}
\newchange{
\begin{definition}[Heterogeneous Peer Influence (HPI)]
    Heterogeneous peer influence is a general form of interference where an attribute of a unit, $A_i$, is influenced by attribute of other units, $B_{-i}$, with strength of influence depending on context $\mathcal{Z}_i$, i.e., $A_i = f_A(\phi(B_{-i}, \mathcal{Z}_i), ...)$, where $f_A$ and $\phi$ are functions, $A,B \in \mathcal{A}$ are attributes, and $...$ indicate other terms in $f_A$ impacting $A_i$.
\end{definition}
}

The \textit{individual direct effect} (IDE) for a unit $v_i \in \mathcal{V}$, denoted as {\small $\tau_i$} 
, with treatment {\small $X_i=1$} vs. {\small $X_i=0$} given observed or assigned treatments of other units {\small $X_{-i}=\bm{\pi}_{-i}$} and contexts {\small $\mathcal{Z}_i$} is defined as 
{\small
\begin{equation}
    \begin{split}
    \label{eq:dir_eff_intermediate}
    \tau_i = E[Y_i(X_i=1, X_{-i}=\bm{\pi}_{-i}) - Y_i(X_i=0, X_{-i}=\bm{\pi}_{-i})| X_{-i}=\bm{\pi}_{-i}, \mathcal{Z}_i], 
    \end{split}
\end{equation}
}
where the counterfactual outcomes of unit $v_i$ (e.g., {\small $Y_i(X_i=1, X_{-i}=\bm{\pi}_{-i})$}) express the idea that the outcome is influenced by the entire treatment assignment vector {\small $\bm{\pi}$} due to interference. Equation \ref{eq:dir_eff_intermediate} captures that peer exposure and other effect modifiers are defined by treatments of other units {\small $X_{-i}$} and contexts {\small $\mathcal{Z}_i$}.
Equation \ref{eq:dir_eff_intermediate} is {similar} to \citet{hudgens-jasa08}'s Individual Direct Causal Effect estimand with fixed neighborhood treatment assignments. 
{Our work focuses on challenges of IDE estimation under unknown HPI mechanisms and the scope of the IDE estimand (Eq. \ref{eq:dir_eff_intermediate}) is to deal with known/observed neighborhood assignments and not different possible or counterfactual neighborhood assignments. Estimation of peer and other network effects with counterfactual neighborhood assignments under HPI is outside the scope of this work. The estimand above is important  to evaluate the effectiveness of treatments (e.g., vaccination) for each unit, taking into account the network environment and HPI.}

Using the consistency {rule}~\footnote{Consistency appears as a rule or axiom in the definition of counterfactual in the SCM framework whereas it appears as an assumption for effect estimation in the potential outcome framework~\cite{pearl-epi10}.}~\cite{pearl-book09,pearl-epi10} in causal inference that renders outcome generation independent of treatment assignment mechanisms (e.g., natural or intervention or counterfactual), we rewrite $\tau_i$ {by excluding neighborhood counterfactual term, i.e.,}
{\small 
\begin{equation}
        \begin{split}
        \label{eq:dir_eff}
        &\tau_i = E[Y_i(X_i=1)| X_{-i}=\bm{\pi}_{-i}, \mathcal{Z}_i] - E[Y_i(X_i=0) | X_{-i}=\bm{\pi}_{-i}, \mathcal{Z}_i].
\end{split}
\end{equation}
}
Equation \ref{eq:dir_eff}, similar to \citet{ma-kdd22}, differs from the ``insulated" individual effects~\cite{arbour-kdd16,jiang-cikm22} that consider the effects with no peer exposure. In \S \ref{sec:iden},  we discuss the identifiability of individual effects $\tau_i$ with experimental and observational data.


\subsection{Preliminaries on the relational causal model (RCM)}
\note{To model the causal relationships between units}, prior works~\cite{arbour-kdd16} rely \note{on relational causal models (RCM) and abstract ground graphs (AGG)~\cite{maier-thesis14,lee-thesis18}. The relational causal model (RCM) provides a principled way to define random variables for causal modeling and reasoning in relational settings \newrevision{similar to the structural causal model (SCM).  A \textbf{\textit{relational causal model}} (RCM) {\small $\mathcal{M}(\mathcal{S}, \mathbf{D}, \mathcal{P})$} encapsulates a schema {\small $\mathcal{S}$}, a set of relational dependencies {\small $\mathbf{D}$}, and optionally parameters {\small $\mathcal{P}$}. An AGG is a graphical model with relational variables as nodes, and dependencies between the relational variables as arcs.} AGG can be used for causal reasoning, including using d-separation and do-calculus similar to SCM~\cite{pearl-book09} for expressing counterfactual terms in Equation \ref{eq:dir_eff} in terms of observational or experimental distribution and identifying heterogeneous contexts $\mathcal{Z}$ in terms of relational variables. We provide the main RCM concepts here and give a more complete overview of RCMs in the Appendix.
Social networks like the one depicted in Fig. \ref{fig:overview}a) can be abstracted by a \textbf{\textit{relational schema}} {\small $\mathcal{S} = (E,R, \mathcal{A})$} comprising an entity class {\small $E$} (e.g., User), a relationship class {\small $R=<E, E>$} (e.g., Friend), entity attributes {\small $\mathcal{A}(E)$}, and relationship attributes {\small $\mathcal{A}(R)$}. The relational schema's partial instantiation, known as a \textbf{\textit{relational skeleton}}, specifies the entity instances (e.g., user nodes) and relationship instances (e.g., friendship edges) (Fig. \ref{fig:overview}b)).}

\note{An important concept in RCMs is \textbf{\textit{relational random variable}} which allows us to capture sets of random variables, such as the political affiliations of friends. It is defined from the perspective of an \textit{item class}} {\small $I\in\{E,R\}$}, i.e., an entity or a relationship class \note{(e.g., User), and consists of a \textbf{\textit{relational path}} and} an attribute \note{reached following the path. A relational path {\small $P=[I_j,...,I_k]$} is an alternating sequence of entity and relationship classes and captures path traversal in the schema. For the social network example, the relational paths capture the self, i.e., \texttt{[U]}, and friendship paths from the perspective of a user such as immediate friends, i.e., \texttt{[U,F,U]}, and friends of friends, i.e., \texttt{[U,F,U,F,U]}. The relational variables capture the self's attributes like stance \texttt{[U].St} and a multi-set of the attributes of immediate friends like political affiliation \texttt{[U,F,U].Aff}, and so on. The values of the relational variables \texttt{[U].St} and \texttt{[U,F,U].Aff} for user \texttt{Ava} are \{\texttt{Anti}\} and \{\texttt{Right,Right,Left}\}, respectively.
}

\note{Similar to SCM, the causal mechanism assumptions of node and edge attributes can be encoded by using \textbf{\textit{relational dependencies}} to link relational random variables to their causes. For example, the dependency \texttt{[U,F,U].Aff}$\rightarrow$\texttt{[U].St} captures that the stance of a user is influenced by the political affiliations of immediate friends. 
A RCM alone is insufficient for capturing all conditional independences in the data and thus to reason about sufficient conditioning set for causal identification~\cite{maier-thesis14,arbour-kdd16}.} 
\commentout{RCM only defines a template for how dependencies should be applied to data instantiations. A ground graph is a directed graphical model with nodes as attributes of entity instances (like {\small \texttt{Bob.Aff}} and {\small \texttt{Ava.St}})  and relationship instances (like {\small \texttt{<Bob-Ava>.Dur}}), and arcs as dependencies between the nodes applied from RCM. An acyclic ground graph has the same semantics as a standard graphical model~\cite{getoor-isrl07}, and it facilitates reasoning about the conditional independencies of attribute instances using standard d-separation~\cite{pearl-book88}. However, it is more desirable to reason about dependencies that generalize across all ground graphs by using expressive relational variables and abstract concepts.} 
\note{A higher-order representation known as an \textbf{\textit{abstract ground graph}} (AGG)~\cite{maier-thesis14,lee-thesis18} has been developed for this purpose with nodes including relational variables and arcs including dependencies $\mathbf{D}$ in the RCM and additional extended dependencies to capture all conditional independence. For example, dependency \texttt{[U,F,U].Aff}$\rightarrow$\texttt{[U].St}, suggesting influence from immediate friends, is extended to other dependencies like \texttt{[U].Aff}$\rightarrow$\texttt{[U,F,U].St} and \texttt{[U,F,U,F,U].Aff}$\rightarrow$\texttt{[U,F,U].St}. These extended dependencies are automatically identified by the \textit{extend operator}~\cite{maier-thesis14}, which is described in the Appendix. AGG is theoretically shown to be sound and complete in abstracting all conditional independences} encodeded in the RCM~\cite{maier-thesis14,lee-uai15}. 

\section{Network Structural Causal Model (NSCM)}\label{sec:repr}
\note{To enable reasoning about heterogeneous peer influence, we propose the Network Structural Causal Model (NSCM) and Network Abstract Ground Graph (NAGG), which extend RCM and AGG and provide a more expressive representation of causal assumptions. Specifically, unlike RCM, NSCM can encode causal assumptions related to relationship existence (e.g., related to homophily or preferential attachment), which are essential to reasoning about heterogeneous peer influence. Moreover, this is the first work to consider the implications of latent attributes and selection bias in RCM-like models. NSCM and NAGG are of independent interest for causal modeling and reasoning in networks beyond the scope of this paper on IDE estimation.
}

\textbf{Modeling relationship existence}. 
\newchange{We refer the network {\small $G(\mathcal{V}, \mathcal{E})$} as a single entity type and single relationship type (SESR) model~\cite{maier-thesis14} with a schema {\small $\mathcal{S} = (E,R, \mathcal{A})$}.} 
We \note{first} redefine relational skeleton, \note{i.e., the schema instantiation}, to incorporate relationship existence uncertainty, similar to DAPER~\cite{heckerman-isrl-07} and PRM~\cite{getoor-isrl07} models.

\begin{definition}[Relational skeleton with relationship existence]\label{def:skeleton}
A \textit{relational skeleton} {\small $\sigma \in \Sigma_{\mathcal{S}}$} for a schema {\small $\mathcal{S}$} specifies sets of entity instances {\small $\sigma(E)=\mathcal{V} \in G$} and relationship instances {\small $\sigma(R)=\Sigma_{\sigma(E)}$}, where {\small $\Sigma_{\sigma(E)}$} is a set of edges in a complete graph with {\small $|\sigma(E)|$} entity instances as nodes and {\small $\Sigma_{\mathcal{S}}$} is all possible skeletons.
\end{definition}
\begin{definition}[Relationship existence indicator]\label{def:ex_ind}
A relationship existence indicator {\small $Exists \in \mathcal{A}(R) \wedge Exists \in \{0,1\}$} is a relationship class attribute that indicates the existence of an edge in {\small $G$}.
\end{definition}
The above two definitions distinguish between relationship instances {\small $\sigma(R)$} and edges {\small $\mathcal{E} \in G$}. 
Relationship instances, 
connect all pairs of users, whereas user edges exist only for relationship instances whose Exists attribute equals 1.
We allow attributes in the schema {\small $\mathcal{S}$} to be marked as latent type or selection type. A \textit{latent attribute set} {\small $\mathbf{L} \in \mathcal{A}$} indicates attributes that are always unmeasured. A \textit{selection attribute set} {\small $\mathbf{S} \in \mathcal{A}$} indicates attributes that are responsible for selection (bias).
Next, we present new theoretical properties regarding relational paths due to these extended definitions.

\newchange{A relational path's beginning and ending item classes are called the base item class and terminal item class, respectively. A terminal set comprises instances of the terminal item class reached from an instance of the base item class after traversing a relational path. For example, the terminal sets for a base user \texttt{Ava} and paths \texttt{[U]} and \texttt{[U,F,U]}, respectively, consists of the self, i.e., \{\texttt{Ava}\} and others, i.e., \{\texttt{Bob, ..., Fin}\}. The path traversal follows \textit{bridge-burning semantics} (BBS)~\cite{maier-thesis14} where nodes at lower depth are not visited again. A breadth-first traversal preserves BBS by exploring all nodes at each depth and not visiting explored nodes. \note{As shown in Figure \ref{fig:eg2-paths}}, BBS traverses {\small $\sigma$} along {\small $P$} in breadth-first order, beginning at {\small $i_j$}, to obtain terminal set {\small $P|_j^\sigma$} as tree leaves at level {\small $|P|$}. A relational path {\small $P$} is valid if its terminal set for any skeleton is non-empty, i.e. {\small $\exists_{\sigma \in \Sigma_{\mathcal{S}}},\exists_{j \in \sigma(I)},P|_j^\sigma \ne \emptyset$}.
\begin{lemma}\label{lemma:int}
In a SESR model, the terminal sets of any two relational paths under BBS, starting with a base item class $I$, do not intersect.\end{lemma}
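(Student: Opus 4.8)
The plan is to reduce the statement to the standard fact that the level sets of a breadth-first search tree are pairwise disjoint, after exploiting that in a SESR model a relational path starting at a fixed base item class is completely determined by its length.

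First I would observe that, because there is a single entity class $E$ and a single relationship class $R=\langle E,E\rangle$, at every step of a relational path there is a unique admissible next item class: from $E$ one can only move to $R$, and from $R$ one can only move to $E$. Consequently, for a fixed base item class $I$ there is exactly one valid relational path of each length, so two \emph{distinct} paths $P_1,P_2$ starting at $I$ necessarily satisfy $|P_1|\neq|P_2|$. This is precisely where the SESR assumption is essential: in a model with several entity or relationship classes, distinct same-length paths could coexist and the argument would have to be refined.

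Next I would fix a skeleton $\sigma\in\Sigma_{\mathcal S}$ and a base instance $i_j$ and translate the definition of the terminal set into the language of the bridge-burning breadth-first traversal rooted at $i_j$. Using the given fact that this traversal explores all instances at each depth while never revisiting an already-explored instance, I would argue that in the SESR setting the unique length-$\ell$ path sweeps out exactly the instances lying at level $\ell$ of the BFS tree; that is, $w\in P|_j^\sigma$ if and only if the first-discovery depth of $w$ equals $\ell=|P|$. Establishing this equivalence, which connects the formal path-traversal definition of $P|_j^\sigma$ to the BFS-level characterization, is the main obstacle and the only genuine bookkeeping in the proof, so it is the step I would treat most carefully.

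Once that equivalence is in place, disjointness is immediate: bridge-burning semantics assign to each reachable instance a unique first-discovery depth, since once an instance is burned it is never reached again at a greater depth; hence the level sets of the BFS tree are pairwise disjoint. Because distinctness of $P_1,P_2$ forces $|P_1|\neq|P_2|$, their terminal sets are the level-$|P_1|$ and level-$|P_2|$ sets of the same BFS tree and therefore cannot share an instance, giving $P_1|_j^\sigma\cap P_2|_j^\sigma=\emptyset$. Equivalently one could run an induction on $\max(|P_1|,|P_2|)$ and derive a contradiction from an instance reached at two different depths, but the direct depth-uniqueness argument is cleaner; I would also note that the naive parity argument (paths of different length parity terminate in different item classes and so are trivially disjoint) is subsumed by this single depth-uniqueness argument.
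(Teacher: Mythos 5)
Your proof is correct and is in substance the same as the paper's: the paper likewise uses the SESR structure to conclude that two distinct paths with the same base item class must have different lengths (phrased there as one path being a prefix of the other) and then invokes bridge-burning semantics to conclude that the terminal set of the shorter (prefix) path is never revisited by the longer path. Your BFS level-set formulation with unique first-discovery depth is just a cleaner restatement of that same depth-disjointness argument, not a different route.
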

\begin{proposition}\label{prop:paths}
A SESR model with relationship existence uncertainty has the maximum length of valid relationship paths $[I_j,...,I_k]$ of four for $I_j \in E$ and five for $I_j \in R$ under bridge burning semantics (BBS) for any skeleton $\sigma \in \Sigma_\mathcal{S}$.
\end{proposition}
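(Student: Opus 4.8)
The plan is to exploit the one structural consequence of relationship existence uncertainty that drives everything: by Definition \ref{def:skeleton}, every skeleton $\sigma \in \Sigma_{\mathcal{S}}$ has $\sigma(R) = \Sigma_{\sigma(E)}$, i.e., the relationship instances form the \emph{complete} graph on the entity instances $\sigma(E)$. This is the feature that collapses the usual RCM path lengths (where, e.g., ``friends of friends'' $[E,R,E,R,E]$ can be non-empty). Because the schema is SESR, every relational path alternates between $E$ and $R$, so a path based at $E$ is entity-terminal exactly when $|P|$ is odd and relationship-terminal when $|P|$ is even (and symmetrically for a path based at $R$). First I would show that, under BBS on a complete graph, the entity-level terminal sets saturate the entire entity set after only a couple of hops, and then invoke Lemma \ref{lemma:int} to force every longer entity-terminal path to be empty.

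For $I_j \in E$, a breadth-first traversal gives $[E]|_j^\sigma = \{i_j\}$ and, since $i_j$ is adjacent to every other entity in the complete graph, $[E,R,E]|_j^\sigma$ equals all entities except $i_j$; hence these two terminal sets together exhaust $\sigma(E)$. The first strictly longer entity-terminal path is $[E,R,E,R,E]$, and by Lemma \ref{lemma:int} its terminal set is disjoint from those of $[E]$ and $[E,R,E]$, so it must be empty for every $\sigma$ and is therefore invalid. The relationship-terminal path $[E,R,E,R]$, by contrast, has a non-empty terminal set (the edges among the non-base entities) whenever $|\sigma(E)| \geq 3$, so length four is attained. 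Finally, once the entity level at $|P| = 5$ is empty, traversing one more step keeps it empty, which rules out $[E,R,E,R,E,R]$ and all longer relationship-terminal paths; this establishes the maximum of four for $I_j \in E$.

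The argument for $I_j \in R$ is analogous. With base edge $i_j = (e_a, e_b)$, we have $[R,E]|_j^\sigma = \{e_a, e_b\}$ and $[R,E,R,E]|_j^\sigma$ equal to all remaining entities, each reached through an edge incident to $e_a$ or $e_b$, so these two entity-terminal sets again cover $\sigma(E)$. By Lemma \ref{lemma:int} the next entity-terminal path $[R,E,R,E,R,E]$ is empty, whereas the relationship-terminal path $[R,E,R,E,R]$ is non-empty (the edges among entities other than $e_a, e_b$) whenever $|\sigma(E)| \geq 4$. The same parity-propagation of emptiness then kills all longer paths, giving a maximum valid length of five for $I_j \in R$.

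The only place where genuine care is needed is the saturation claim: that $[E,R,E]$ (respectively $[R,E,R,E]$) reaches \emph{every} remaining entity under BBS. This is exactly where the completeness of $\sigma(R)$ is used, and where one must verify that bridge-burning prunes nothing at these shallow levels. Once that fact is nailed down, Lemma \ref{lemma:int} converts ``all entities already visited'' into ``all longer entity-terminal sets are empty'' essentially for free, and the alternating parity of SESR paths propagates the emptiness to the intervening relationship-terminal paths, closing both bounds.
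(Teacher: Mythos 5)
Your proposal is correct and takes essentially the same route as the paper's proof: a breadth-first enumeration of terminal sets under BBS on the complete relationship skeleton guaranteed by Definition \ref{def:skeleton}, showing that all instances are exhausted by path length four (base in $E$) and five (base in $R$), so the next level's terminal set is empty for every skeleton, while non-emptiness at the maximal lengths holds for $N \geq 3$ and $N \geq 4$ respectively. Your only cosmetic deviation is to track entity-level saturation alone and let Lemma \ref{lemma:int} plus the alternating parity propagate emptiness to longer paths, whereas the paper's Claims \ref{claim:pathE} and \ref{claim:pathR} count both entity and relationship instances level by level; the substance is identical.
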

The proofs are presented in \S \ref{appendix:proof} and are based on Definitions \ref{def:skeleton} and \ref{def:ex_ind} and BBS exploration. Lemma \ref{lemma:int} allows us to rule out the complexity of intersecting paths~\cite{maier-thesis14,lee-uai15} in the SESR model.}
\newchange{The primary implication of Definitions \ref{def:skeleton}, \ref{def:ex_ind}, and Proposition \ref{prop:paths} is that relational paths are now interpreted differently than they were under RCM. The entire network is represented from the perspective of users by four paths {\small \texttt{[U]}, \texttt{[U,F]}, \texttt{[U,F,U]},} and {\small \texttt{[U,F,U,F]}} mapping to an ego, the ego's relationships, other nodes, and relationships between other nodes. This differs from RCM that captures entity or relationship instances at n-hops~\cite{arbour-kdd16}. We allow n-hop relations to change due to interventions or over time, and it is reflected in the {\small $Exists$} attribute of instances reached from paths {\small \texttt{[U,F]}} and {\small \texttt{[U,F,U,F]}}.}

Next, we can define relational variables and relational dependencies to encode relationship existence, latent variables, and selection bias.
To encode that a user's stance is influenced by affiliations of immediate friends, we need two dependencies {\small \texttt{[U,F,U].Aff}$\rightarrow$\texttt{[U].St}} and {\small \texttt{[U,F].Ex}$\rightarrow$\texttt{[U].St}}. 
The dependency {\small \texttt{[F,U].Dem$\rightarrow$[F].Ex}} from the relationship class perspective captures the existence of an edge due to homophily where users with similar demographics tend to be friends.
We model latent confounding and selection bias with dependencies of the form {\small $P.L_i \rightarrow [I].Y$} and {\small $P.X \rightarrow [I].S_i$}, respectively, where {\small $L_i \in \mathbf{L}$} and {\small $S_i \in \mathbf{S}$}. For example, the dependencies {\small \texttt{[F,U].L}$\rightarrow$\texttt{[F].Ex}} and {\small \texttt{[U].L}$\rightarrow$\texttt{[U].St}} jointly represent latent homophily and latent confounding. 
To represent contagion, we could introduce relational dependency with time-lagged attributes of the form {\small $[I].Y_{L}\rightarrow [I].Y$} and {\small $[I,...,I].Y_{L}\rightarrow [I].Y$}. The time-lagged attributes should mirror incoming and outgoing dependencies of {\small $Y$}, and these attributes could be unobserved i.e., {\small $Y_{L} \in \mathbf{L}$}.
We define \textit{implicit dependencies} {\small $\mathbf{D_i}:=\{\forall_{Y \in {Exists'}}, [R].Exists \rightarrow [R].Y\}$} to capture that the relationship attributes except {\small $Exists$}, i.e., {\small ${Exists'}:=\mathcal{A}(R)\setminus Exists$}, have non-default values only if the edge exists. For example, the duration of friendship is defined only if friendship exists.
\textbf{Network structural causal model (NSCM)}. A \textit{network structural causal model} {\small $\mathcal{M}(\mathcal{S}, \mathbf{D}, \mathbf{f})$} encapsulates a schema {\small $\mathcal{S}$}, a set of relational dependencies {\small $\mathbf{D}$}, and a set of functions {\small $\mathbf{f}$}. In contrast to RCM, the dependencies {\small $\mathbf{D}$} are more expressive, and the parameters are defined as functions {\small $\mathbf{f}$}. The functions relate each canonical (i.e., of path length $1$) variable {\small $[I].Y$} to its causes, i.e., {\small $[I].Y = f_{Y}(pa([I].Y, \mathbf{D}), \epsilon_Y)$}, where {\small $f_{Y} \in \mathbf{f}$}, {\small $pa([I].Y, \mathbf{D}):=\{P.X|P.X \rightarrow [I].Y \in \mathbf{D}\}$}, and {\small $\epsilon_Y$} is an exogenous noise.
Generally, the functions {\small $\mathbf{f}$} in the model {\small $\mathcal{M}$} are unknown and estimated from the data. 

\textbf{Network abstract ground graph (NAGG)}.
Similar to an abstract ground graph (AGG), a network abstract ground graph ({\small $NAGG_{\mathcal{M}B}$}), for the NSCM {\small $\mathcal{M}$} and a perspective {\small $B \in I$}, is a graphical model to enable reasoning about relational d-separation\note{, an extension of d-separation for relational domains~\cite{maier-thesis14}}. 
\note{To exemplify the power of this representation based on NSCM, rather than a RCM,} we consider a NAGG from the perspective of an entity class (e.g., User). \note{Let} {\small $RV$} \note{denote the} set of all relational variables of the form {\small $[B,...,I_k].X$}, with a combination of all paths {\small $\{ \texttt{[U]}, \texttt{[U,F]}, \texttt{[U,F,U]},  \texttt{[U,F,U,F]}\}$} (Proposition \ref{prop:paths}) and all attributes $X$ of instances reached following the path. \note{Let {\small $RVE \subset RV \times RV$} denote} a set of arcs between pairs of relational variables obtained from explicit dependencies $\mathbf{D}$, implicit dependencies $\mathbf{D_i}$, and extended dependencies obtained \note{with the} extend operator, similar to RCM.
Figure \ref{fig:overview}d depicts a NAGG from the perspective of User class for an NSCM {\small $\mathcal{M}(\mathcal{S}, \mathbf{D})$} \note{for} the toy social network schema. The NSCM encodes the causal hypotheses with explicit relational dependencies {\small $\mathbf{D}$} indicated by black and brown \note{arrows}. These dependencies capture: (1) the existence of friendship depends on demographics \note{(enabled by NSCM but not RCM)}; (2) stance is affected by one's affiliation; (3) one's demographics confound affiliation and stance while peer demographics influence one's affiliation; and (4) a user's current stance is influenced by other users' previous latent stances (enabled by NSCM but not RCM), depending on friendship duration. All the explicit dependencies and the implicit dependency (see blue arrow) for relationship existence uncertainty, are extended (see purple arrows) to capture dependencies between all relational variables for a given perspective. 
\commentout{
\begin{figure}[!t]
    \centering
    \includegraphics[width=0.8\linewidth]{images/OnlyRelationalPaths.pdf}
    \caption{Relational paths (i.e., traversal in the schema) for a toy social network. The paths capture an ego (e.g., Ava), the ego's relationships, other nodes, and their relationships (except with the ego).}
    \label{fig:eg2-paths}
    \vspace{-1em}
\end{figure}
}


We extend the theoretical properties of soundness and completeness of AGG~\cite{maier-thesis14} to NAGG with relationship existence uncertainty, latent attributes, and selection attributes. With these theoretical properties, presented in Appendix \ref{appendix:proof}, NAGG enables reasoning about the identification of causal effects in networks using counterfactual reasoning with the SCM~\cite{pearl-book09}.
\textbf{Identification of causal effects}.\label{sec:iden}
Given a NAGG, we can reason \note{about} the identifiability of causal effects using do-calculus (e.g., backdoor adjustment)~\cite{pearl-book09}. Here, we discuss IDE identification in Eq. \ref{eq:dir_eff} using NSCM and NAGG under Assumption \ref{assum:pre}:

\begin{assumption}[A1]\label{assum:pre}
The network {\small $G$} and its attributes are measured before treatment assignments and treatments are immutable from assignment to outcome measurement.
\end{assumption}

Let node attributes and edge attributes be denoted by {\small $\mathbf{Z_n}$} and {\small $\mathbf{Z_e}$}. We denote relational variables from entity class perspective {\small $[E].\mathbf{Z_n}, [E,R].\mathbf{Z_e}$, $[E,R,E].\mathbf{Z_n}$}, and {\small $[E,R,E,R].\mathbf{Z_e}$} with the notations {\small $Z_i\in \mathbb{R}^{d}, Z_r\in \mathbb{R}^{N'\times d'}, Z_{-i}\in \mathbb{R}^{N'\times d},$} and {\small $Z_{-r}\in \mathbb{R}^{N'\times N'-1 \times d'}$}, respectively, where {\small $N'=|\mathcal{V}|-1$ and $<d,d'>$} are constants. Let {\small $E_r \in \{0,1\}^{N'\times 1}$} and {\small $E_{-r} \in \{0,1\}^{N'\times N'-1 \times 1}$} be the relationship existence indicator variables {\small $[E,R].Exists$} and {\small $[E,R,E,R].Exists$}, respectively. The underlying data generation of treatment and outcome can be described by functions {\small $f_X$} and {\small $f_Y$} in the NSCM under Assumption \ref{assum:pre} as: \\
{\small $X_i = f_X(Z_i, Z_r, Z_{-i}, Z_{-r}, E_r, E_{-r}, X_i^L, X_{-i}^L,\epsilon_X)$} and \\
{\small $Y_i = f_Y(X_i, X_{-i}, Z_i, Z_r, Z_{-i}, Z_{-r}, E_r, E_{-r}, Y_i^L, Y_{-i}^L, \epsilon_Y)$,}
where {\small $\{X_i^L, X_{-i}^L,Y_i^L, Y_{-i}^L\}$} are latent time-lagged variables for treatment and outcome.

\begin{lemma}\label{lemma:adjustment}
     If there exists any adjustment set {\small $\mathbf{W}$} that satisfies the unconfoundedness condition~\cite{rubin-jasa80}, i.e., {\small $\{Y_i(X_i=1),Y_i(X_i=0)\} \perp\!\!\!\!\perp  X_i | \mathbf{W}$}, then the adjustment set {\small ${X_{-i}} \cup \mathcal{Z}_i$}, where {\small $\mathcal{Z}_i = \{Z_i, Z_r, Z_{-i}, Z_{-r}, E_r, E_{-r}\}$} also satisfies the condition under Assumption \ref{assum:pre}.
\end{lemma}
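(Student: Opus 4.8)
The plan is to recast the unconfoundedness condition as a backdoor (d-separation) statement on the ground graph of the NSCM and then exploit the fact that $\mathcal{Z}_i$ is exactly the set of \emph{observed} parents of $X_i$. First I would read off the local structure directly from the structural equations $f_X$ and $f_Y$ under Assumption \ref{assum:pre}: every argument of $f_X$ other than the latent time-lagged terms $\{X_i^L, X_{-i}^L\}$ and the noise $\epsilon_X$ belongs to $\mathcal{Z}_i = \{Z_i, Z_r, Z_{-i}, Z_{-r}, E_r, E_{-r}\}$, so $\mathcal{Z}_i$ is precisely the set of observed direct causes of $X_i$, while the observed causes of $Y_i$ are $X_i$, $X_{-i}$, and $\mathcal{Z}_i$. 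By Assumption \ref{assum:pre} the network and all of $\mathcal{Z}_i$, together with the immutable peer treatments $X_{-i}$, are fixed before $X_i$ is assigned, so every element of $\mathcal{Z}_i \cup X_{-i}$ is a non-descendant of $X_i$; this certifies that conditioning on $\mathcal{Z}_i \cup X_{-i}$ never blocks a causal path nor conditions on a descendant of treatment, as required by the backdoor criterion.

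The core step is to show that conditioning on $\mathcal{Z}_i$ already blocks every backdoor path from $X_i$ to $Y_i$ except, possibly, paths whose interior is entirely latent. Since every backdoor path must leave $X_i$ through one of its parents, and the only parents are the observed set $\mathcal{Z}_i$ and the latent terms $\{X_i^L, X_{-i}^L\}$, conditioning on $\mathcal{Z}_i$ (a non-collider on any such path) blocks every path departing through an observed parent. The only remaining candidates are paths of the form $X_i \leftarrow X_i^L \rightsquigarrow Y_i$ and $X_i \leftarrow X_{-i}^L \rightsquigarrow Y_i$ that connect the latent treatment terms to the latent outcome parents $\{Y_i^L, Y_{-i}^L\}$ of $f_Y$. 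Here the hypothesis is used: if such a path were active given $\mathcal{Z}_i$, its interior nodes are unobserved, so it could not be deactivated by conditioning on any \emph{observed} variables, and hence no admissible $\mathbf{W}$ could render $X_i$ independent of $\{Y_i(X_i=1),Y_i(X_i=0)\}$, contradicting the existence of $\mathbf{W}$. Therefore all latent backdoor paths are inactive and $\mathcal{Z}_i$ alone satisfies the backdoor criterion; by the soundness of the NAGG for relational d-separation (established earlier) this translates into $\{Y_i(X_i=1),Y_i(X_i=0)\} \perp\!\!\!\!\perp X_i \mid \mathcal{Z}_i$.

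Finally I would argue that augmenting $\mathcal{Z}_i$ with $X_{-i}$ cannot reopen a blocked path. The only way adding $X_{-i}$ can create activity is by conditioning on a collider at $X_{-i}$; any path entering or leaving that collider through an observed parent of $X_{-i}$ is immediately re-blocked, since the observed parents of $X_{-i}$ lie in $\mathcal{Z}_i$ and act as conditioned non-colliders. The residual possibility is a path that enters and leaves $X_{-i}$ only through its latent parents, but then it again has an entirely latent interior on the relevant segment and is ruled out exactly as in the previous paragraph. Combining this with the counterfactual form $Y_i(X_i=x) = f_Y(x, X_{-i}, \mathcal{Z}_i, Y_i^L, Y_{-i}^L, \epsilon_Y)$ yields $\{Y_i(X_i=1),Y_i(X_i=0)\} \perp\!\!\!\!\perp X_i \mid \mathcal{Z}_i \cup X_{-i}$, which is the claim.

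The step I expect to be the main obstacle is making the collider analysis of this last paragraph fully rigorous. Because a candidate $\mathbf{W}$ need not contain $X_{-i}$, a collider activated at $X_{-i}$ by our augmented set might be kept closed under $\mathbf{W}$, so the ``contradicts the existence of $\mathbf{W}$'' shortcut does not apply verbatim to collider-activated paths; I must instead rely on the NSCM structure, namely that the non-latent parents of $X_{-i}$ are exactly the peer-side covariates already contained in $\mathcal{Z}_i$, to guarantee such paths are re-blocked. Stating this cleanly in the relational-d-separation language of the NAGG, and simultaneously verifying that conditioning on the parent set $\mathcal{Z}_i$ introduces no M-structure collider bias (which is excluded precisely because every element of $\mathcal{Z}_i$ is a parent of $X_i$ rather than a bare collider of two latents), is the delicate part of the argument.
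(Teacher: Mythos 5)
Your proof breaks down exactly at the step you flagged, and in one place you did not flag. First, the claim that M-structure bias is ``excluded precisely because every element of $\mathcal{Z}_i$ is a parent of $X_i$'' is graph-theoretically false: a node $Z \in \mathcal{Z}_i$ with $Z \rightarrow X_i$ can simultaneously sit as a collider between two latents on the path $X_i \leftarrow L_1 \rightarrow Z \leftarrow L_2 \rightarrow Y_i$, which departs $X_i$ through the \emph{latent} parent $L_1$, so your observed-parent blocking argument never touches it; conditioning on $Z$ opens this path, while a candidate $\mathbf{W}$ that simply omits $Z$ keeps it closed, so your ``no observed $\mathbf{W}$ could deactivate it'' contradiction does not apply. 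Second, for the collider at $X_{-i}$, the fallback you propose --- that the observed parents of $X_{-i}$ lie in $\mathcal{Z}_i$ --- covers only paths flanked by observed parents. The NSCM under Assumption \ref{assum:pre} genuinely admits all-latent flanks, e.g. $X_i \leftarrow L \rightarrow X_j \leftarrow X_j^L \rightarrow Y_j^L \rightarrow Y_i$ (latent homophily on treatments combined with the mirrored lagged dependencies that encode contagion), and since in your formulation $\mathbf{W}$ is unconstrained, a $\mathbf{W}$ excluding $X_{-i}$ leaves this collider closed and satisfies unconfoundedness while $\{X_{-i}\} \cup \mathcal{Z}_i$ opens it --- a direct counterexample to the lemma as you have set it up.

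The paper's proof escapes both problems by constraining the class of admissible adjustment sets rather than appealing to graph structure: because the estimand in Eq.~\ref{eq:dir_eff} conditions on $X_{-i}$, and because its formalization of Assumption \ref{assum:pre} marks all background covariates as potential selection variables, the paper requires $X_{-i} \in \mathbf{W}$, $\mathbf{S_v} \subseteq \mathbf{W}$, and $\mathbf{W} \cap \mathbf{L_v} = \emptyset$. With this restriction, any collider opened by the maximal set $\{X_{-i}\} \cup \mathcal{Z}_i \cup \mathbf{S_v}$ --- whether at $X_{-i}$ or at a selected background covariate --- is equally opened under \emph{every} admissible $\mathbf{W}$ (the paper's Case II), so the contradiction with the hypothesis goes through symmetrically; pure latent confounding (Case I) and latent paths into peer outcomes (Case III, reduced to Cases I/II via the extended canonical dependencies of the NSCM) are then handled by an argument like your all-latent-interior one. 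To repair your proof you must import these conditioning requirements into the lemma's hypothesis; note also that your argument ignores the selection variables entirely, whereas the adjustment set the paper actually validates is $\{X_{-i}\} \cup \mathcal{Z}_i \cup \mathbf{S_v}$.
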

The proof, in \S \ref{sec:ap-iden}, is based on the fact that {\small ${X_{-i}} \cup \mathcal{Z}_i$} is a valid maximal adjustment set for the NSCM under A\ref{assum:pre}. In general, given a NSCM, we can construct a NAGG and find a maximal adjustment set, if exists, for the treatment and outcome of interest.
\begin{corollary}\label{prop:iden}
 Under Assumption \ref{assum:pre}, the counterfactual term {\small $E[Y_i(X_i=\pi_i)|X_{-i},\mathcal{Z}_i]$} can be estimated as {\small $E[Y_i|X_i=\pi_i, X_{-i}, \mathcal{Z}_i]$} from experimental data or from observational data if Lemma \ref{lemma:adjustment}'s condition holds.
\end{corollary}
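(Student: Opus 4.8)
The plan is to reduce the corollary to two elementary causal-inference principles already available in the paper: the \emph{consistency rule} (invoked via the footnote following Eq.~\ref{eq:dir_eff}) and the \emph{unconfoundedness} (ignorability) condition, which Lemma~\ref{lemma:adjustment} guarantees for the specific adjustment set $X_{-i}\cup\mathcal{Z}_i$. Since the statement has two branches, experimental and observational, I would first isolate the common algebraic core and then discharge each branch by establishing the appropriate ignorability claim. Throughout I treat $Y_i(X_i=\pi_i)$ as shorthand for the potential outcome under the observed/natural neighborhood assignment $X_{-i}$, consistent with the passage from Eq.~\ref{eq:dir_eff_intermediate} to Eq.~\ref{eq:dir_eff}.

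The algebraic core is a two-step identity. First, whenever $\{Y_i(X_i=1),Y_i(X_i=0)\}\perp\!\!\!\!\perp X_i \mid (X_{-i},\mathcal{Z}_i)$ holds, conditioning additionally on the event $\{X_i=\pi_i\}$ leaves the conditional expectation of $Y_i(X_i=\pi_i)$ unchanged, so
\[
E[Y_i(X_i=\pi_i)\mid X_{-i},\mathcal{Z}_i]
= E[Y_i(X_i=\pi_i)\mid X_i=\pi_i, X_{-i},\mathcal{Z}_i].
\]
Second, applying the consistency rule on the event $\{X_i=\pi_i\}$, where the realized outcome coincides with the potential outcome, $Y_i=Y_i(X_i=\pi_i)$, replaces the potential outcome by the factual one:
\[
E[Y_i(X_i=\pi_i)\mid X_i=\pi_i, X_{-i},\mathcal{Z}_i]
= E[Y_i\mid X_i=\pi_i, X_{-i},\mathcal{Z}_i].
\]
Chaining the two identities yields exactly the claimed equality, conditional on having the ignorability premise in hand. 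It then remains to supply that premise in each regime. For observational data this is immediate: Lemma~\ref{lemma:adjustment} states that if any unconfounded adjustment set exists, then $X_{-i}\cup\mathcal{Z}_i$ is itself unconfounded under Assumption~\ref{assum:pre}, which is precisely the required conditional independence. For experimental data I would instead argue that randomized assignment of $X_i$ makes $X_i$ independent of the potential outcomes, hence conditionally independent given any pre-treatment set such as $X_{-i}\cup\mathcal{Z}_i$ (measured before assignment by Assumption~\ref{assum:pre}), so the premise holds by design without appeal to Lemma~\ref{lemma:adjustment}.

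The step I expect to require the most care is the first identity, specifically justifying that conditioning on $\{X_i=\pi_i\}$ is legitimate under interference. The subtlety is that $Y_i(X_i=\pi_i)$ is the outcome holding $X_{-i}$ at its observed value; I would make explicit that the ignorability statement concerns $X_i$ alone given $(X_{-i},\mathcal{Z}_i)$, so that $X_{-i}$ is held fixed rather than intervened upon, matching the estimand of Eq.~\ref{eq:dir_eff}. I would also record the implicit positivity/overlap requirement $P(X_i=\pi_i\mid X_{-i},\mathcal{Z}_i)>0$ needed for the conditional expectations to be well-defined, which holds under randomization and is the standard regularity assumption in the observational case.
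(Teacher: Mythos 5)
Your proposal is correct in substance, but it takes a genuinely different route from the paper. The paper's proof works inside the SCM/graphical framework: it first rewrites the counterfactual $E[Y_i(X_i=\pi_i)\mid X_{-i},\mathcal{Z}_i]$ as the interventional expression $E[Y_i\mid do(X_i=\pi_i), X_{-i},\mathcal{Z}_i]$, which is legitimate because $X_{-i}$ and $\mathcal{Z}_i$ are non-descendants of $X_i$ under Assumption~\ref{assum:pre}, and then strips the $do$ operator via Pearl's second rule of do-calculus, verified by checking that $\{X_{-i},\mathcal{Z}_i\}$ d-separates $X_i$ and $Y_i$ in the mutilated graph $\mathcal{G}_{\bar{X_i}}$ derived from the NSCM/NAGG; in the observational case that d-separation is obtained by contradiction from Lemma~\ref{lemma:adjustment}'s unconfoundedness. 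Your two-step identity --- ignorability licenses inserting the conditioning event $\{X_i=\pi_i\}$, and consistency then swaps $Y_i(X_i=\pi_i)$ for the factual $Y_i$ --- is the standard potential-outcomes algebraic counterpart of that graphical argument, and is more elementary and self-contained; your explicit recording of the positivity requirement $P(X_i=\pi_i\mid X_{-i},\mathcal{Z}_i)>0$ is a regularity condition the paper leaves implicit. Your careful remark that ignorability concerns $X_i$ alone with $X_{-i}$ held fixed (not intervened upon) also matches the paper's treatment of the estimand.

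The one step that needs repair is your experimental branch. The paper explicitly allows \emph{correlated} treatment assignments across units (the bi-directed arc between treatments in Figure~\ref{fig:prf_exp}, e.g., cluster designs), and in that setting your inference ``randomization makes $X_i$ independent of the potential outcomes, hence conditionally independent given any pre-treatment set'' is not valid as stated: marginal independence does not imply conditional independence, and with correlated assignments $X_i$ is not independent of $X_{-i}$. The correct premise is joint exogeneity of the entire treatment vector $(X_i,X_{-i})$ with respect to potential outcomes and covariates --- i.e., $(X_i,X_{-i})\perp\!\!\!\!\perp (\{Y_i(1),Y_i(0)\},\mathcal{Z}_i)$ by design --- from which the needed conditional ignorability $\{Y_i(1),Y_i(0)\}\perp\!\!\!\!\perp X_i\mid X_{-i},\mathcal{Z}_i$ does follow. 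This is exactly the fact the paper encodes graphically by making both treatments exogenous nodes (with a possible correlation arc) and then checking the backdoor condition. With that one-line strengthening, your proof is equivalent in content to the paper's.
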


The proof for Corollary \ref{prop:iden}, in \S \ref{sec:ap-iden}, follows from do-calculus~\cite{pearl-book09} applied to NAGG derived from NSCM. \S \ref{sec:ap-iden} discusses the implications on the identifiability when the assumptions are violated.
\section{IDE Estimation Framework}\label{sec:estimation}
Now that we have defined the graphical models necessary for reasoning about IDE identifiability, we are ready to describe the IDE Estimator which is the last step in the \emph{IDE-Net} framework (Fig. \ref{fig:overview}e)). The estimator consists of two key steps: (1) mapping raw inputs to feature representations, and (2) estimating IDE using \textit{counterfactual outcome (CFO)} prediction. While the second step is fairly straightforward and can be performed with supervised learning (e.g.,~\cite{shalit-icml17}), a key challenge for IDE estimation is the representation of the relational variables whose values are multi-sets and their ability to capture the underlying heterogeneous peer influence (HPI) strength to control indirect effects. We address this challenge by capturing relational variables in the maximal adjustment set and their interactions with graph neural networks (GNNs) to learn an expressive representation of potential HPI contexts. The novelty of our work compared to previous research which uses GNNs for causal effect estimation (e.g., ~\cite{jiang-cikm22,cai-cikm23}) is the consideration of unknown HPI contexts.
Next, we describe feature and exposure embeddings to map the raw attributed network to relational variables and their interactions to capture potential confounders, effect modifiers, and HPI contexts.


\textbf{Feature mappings}. 
The inputs to the framework are adjacency matrix {\small $\mathbf{A}$} with {\small $n$} nodes and {\small $m/2$} undirected edges, treatment {\small $X \in \{0,1\}^n$}, outcome {\small $Y \in \mathbb{R}^n$}, node attributes {\small $\mathbf{Z}_n \in \mathbb{R}^{n \times d}$} and edge attributes {\small $\mathbf{Z}_e \in \mathbb{R}^{m \times d'}$}. Ideally, a feature mapping should capture sufficient statistics about the underlying distribution of relational variables in {\small $\mathcal{Z}_i$} and their interactions. The feature mapping {\small $\phi_f$} maps network structure, node attributes, and edge attributes in {\small $\mathcal{Z}_i$} to an embedding vector of dimension {\small $l$}, i.e., {\small $\phi_f(\mathbf{Z}_n, \mathbf{Z}_e, \mathbf{A}) \rightarrow \mathbb{R}^{n\times l}$}. 

\newchange{We propose GNN feature embeddings {\small $h_f$} by concatenating outputs of ego multi-layer perception (MLP) {\small $h_i$}, the peer graph convolution network (GCN) {\small $h_{-i}$}, and edge GCN {\small $h_r$} modules capturing relational variables {\small $\phi_i(Z_i)$}, {\small $\phi_{-i}(Z_{-i}, Z_r, E_r)$}, and {\small $\phi_{r}(Z_r, Z_{-r}, E_r, E_{-r})$}, respectively. Let {\small $\Theta$} denote MLP with ReLU, and $<\mathbf{v_i}, \mathbf{v_j}>$ be the indices of nodes with edges, i.e., non-zero $\mathbf{A}$, then, we define,
{\small
\begin{align*}
    \begin{split}
    &h_i = \Theta(\mathbf{Z_n}),\\
    &h_{-i}=\mathbf{D}^{-\frac{1}{2}}\overset{row}{\textstyle \sum}\mathbf{A} \odot \Theta(\mathbf{Z_n[v_j]} || \mathbf{Z_e}),\\
    &h'_{r}=\overset{row}{\textstyle \sum}\mathbf{A} \odot \Theta(\mathbf{Z_e}), h_r= h'_{r}||\mathbf{D}^{-\frac{1}{2}}\mathbf{A}h'_{r},\text{ and }\\ &h_f = h_i||h_{-i}||h_{r},
    \end{split}
\end{align*}
} where {\small $\mathbf{D}$} is a degree matrix used for normalization, {\small $\odot$} is index-wise product operator, and {\small $||$} is a concatenation operator. Here, {\small $h_{-i}$} is a normalized GCN and {\small $h_{r}$} is GCN after peer aggregation {\small ($h'_{r}$)}. Existing work has shown mean aggregation, i.e., normalization using {\small $\mathbf{D}^{-1}$} could make GNNs less expressive than sum aggregation~\cite{xu-iclr18}. However, using the sum aggregation may make representations unstable, especially in scale-free networks (e.g., online social networks) that have some nodes with high degrees, and we choose $\mathbf{D}^{-\frac{1}{2}}$ for normalization. 
We can extract l-hop deep features using $h^l = \mathbf{D}^{-\frac{1}{2}}\mathbf{A}\Theta(h^{l-1})$, where $h^1 = h_f$.}


\textbf{Exposure mapping}. The exposure mapping $\phi_e$ maps the treatment of other units, the feature embeddings, raw network, and edge attributes to an embedding vector of dimension $k$, i.e., $\phi_e(X, h_{f}, \mathbf{A}, \mathbf{Z_e}) \rightarrow \mathbb{R}^{n \times k}$.
Ideally, the exposure mapping under unknown HPI mechanisms should be expressive to capture relevant contexts but also invariant to irrelevant contexts. \newchange{To maintain expressiveness, we consider exposure mapping $h_e$ as a weighted fraction of peers with multiple candidate weights based on edge attributes, peer features, similarity with peers, and network structure, i.e.,
\begin{align*}
    h_e = \frac{\overset{row}{\textstyle \sum} \mathbf{A} \odot h_w \odot X_i[\mathbf{v_j}]}{\overset{row}{\textstyle \sum} \mathbf{A} \odot h_w}| h_w := \Theta(\mathbf{Z_e})||(\mathbf{A}\odot(\mathbf{A}\mathbf{A}^T))[\mathbf{v_i}, \mathbf{v_j}]||h_f[\mathbf{v_j}]||sim(h_f[\mathbf{v_i}], h_f[\mathbf{v_j}]),
\end{align*}
where $h_w \in \mathbb{R}^{m \times k}$ represents edge weights capturing underlying peer influence strength, [.] indicates indexing, and $sim(a,b)=e^{-(a-b)^2}$ captures similarity. The edge attributes, network connections, peer features, and similarity with peers are incorporated in $h_w$ as potential influence strengths.} The embedding $h_e$ is passed to the downstream IDE estimator to strengthen invariance to irrelevant mechanisms for outcome prediction.

\textbf{Individual direct effect (IDE) estimation}. 
The feature and exposure embeddings, capturing representation for each node in the network, can be plugged into existing ITE estimators. For exposition and reproducibility, we demonstrate IDE estimation with the TARNet approach~\cite{shalit-icml17} that uses CFO prediction agnostic of treatments, i.e., {\small $\hat{Y}(1) = \Theta_1(h), \hat{Y}(0)=\Theta_0(h)$}, and {\small 
$\hat{Y}=\hat{Y}(1) \text{ if } X=1 \text{ else } \hat{Y}(0)$}, where {\small $\Theta$} is a MLP and {\small $h=\Theta(h_f||h_e)$}. We optimize the model to minimize the loss
    {\small $\mathcal{L} = (Y - \hat{Y})^2 + \lambda_s e^{-\gamma \times \sigma_{\hat{\tau}}^2} \sigma_{\hat{\tau}}^2$}, where {\small $\hat{\tau}=\hat{Y(1)}-\hat{Y(0)}$} is estimated effect and {\small $\sigma_{\hat{\tau}}^2$} is the variance.
Here, the second term is a regularization to smooth variance in estimated causal effects controlled by scale parameter $\lambda_s$ and decay parameter $\gamma$. Due to the decay term, the regularization gets stronger for smaller variances and weaker for larger variances smoothing the variance of estimated effects. 


Here, the choice of embeddings and estimator in the framework is for illustrative purposes. The embeddings could use a different GNN (e.g., GAN~\cite{velickovic-iclr18}) and the estimator with a different learner (e.g., X-learner~\cite{kunzel-pnas19}). One of the reasons to choose this specific representation and estimator is based on the fact that recent works without HPI relied on them~\cite{guo-wsdm20,jiang-cikm22}, which helps with fair empirical comparison to the baselines.
The choice of estimator in \emph{IDE-Net} is driven by the fact that we needed an estimator that can handle network data. While some ITE estimators have been extended to work for network data, including causal trees~\cite{bargagli-arxiv20}, inverse probability weighting (IPW)~\cite{qu-arxiv21}, and doubly-robust~\cite{mcnealis-arxiv23}, not all of them are suitable for observational data, unknown HPI contexts, or are agnostic to network model assumptions. 
\vspace{-0.5em}
\begin{figure*}[t]
    \centering
    \subfigure[Synthetic networks]{
        \includegraphics[width=0.40\linewidth,keepaspectratio]{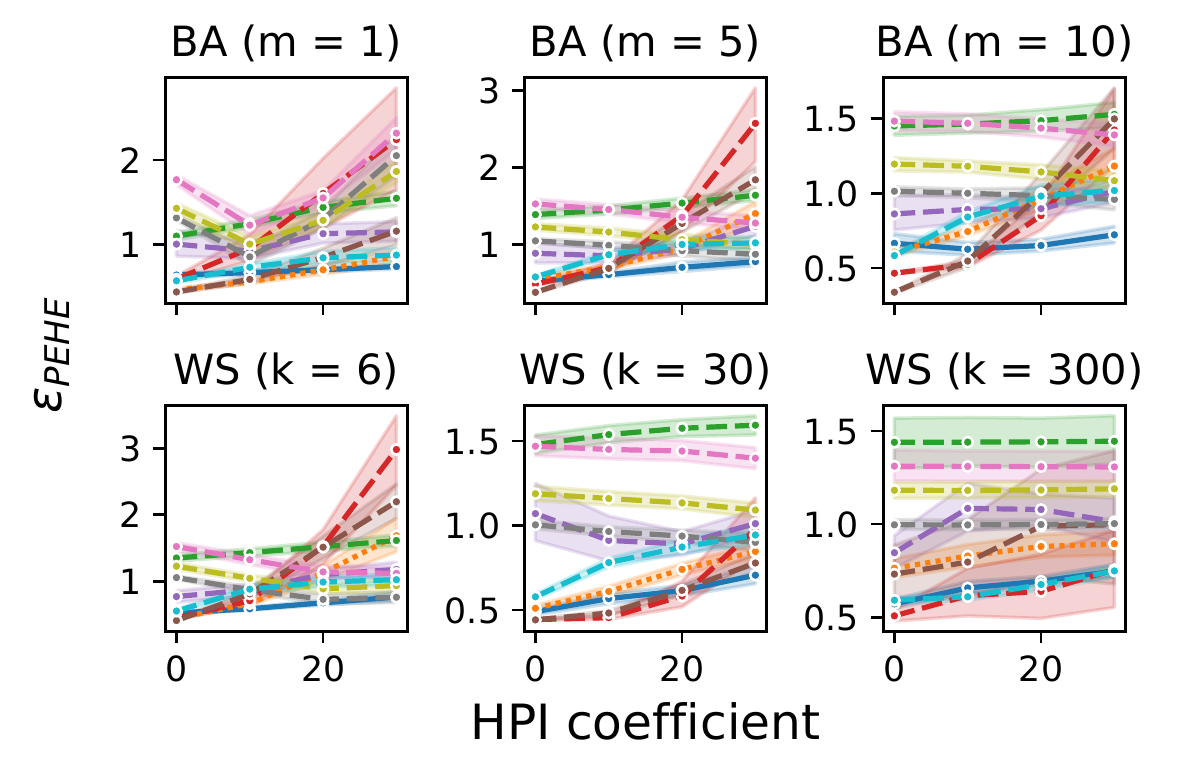}

    \label{fig:syn_cane_all_pehe}
    }
    \subfigure[Semi-synthetic networks]{
        \includegraphics[width=0.54\linewidth,keepaspectratio]{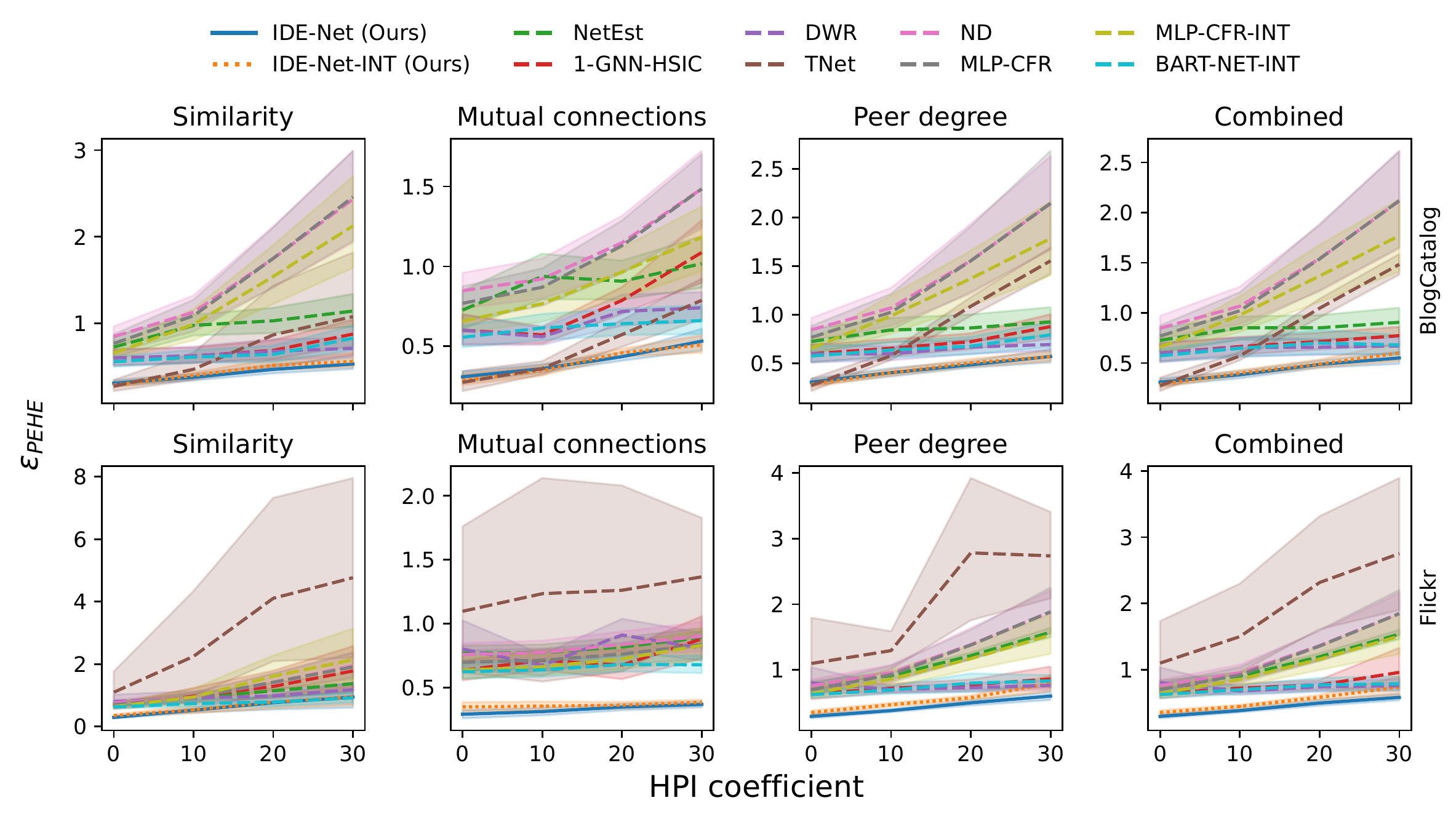}
    \label{fig:semi_cane_all_pehe}
    }
    \caption{Estimation error for {IDEs with effect modification} and HPI based on similarity, mutual connections, and peer degree.}
    \label{fig:cane_pehe}
    \vspace{-1em}
\end{figure*}


{\section{Experiments} \label{sec:experiment}
Here, we investigate the necessity of addressing HPI for robust IDE estimation.
We evaluate errors in the estimation of constant and heterogeneous direct effects in the presence of HPI due to various mechanisms. 


\subsection{Experimental Setup} \label{sec:exp_setup}
As it is common in causal inference tasks, we rely on synthetic and semi-synthetic datasets for evaluation because the ground truth causal effects and heterogeneity contexts are unknown in real-world data. 
We investigate the robustness of estimators for various network topologies and edge densities using synthetic data. For semi-synthetic data, we use real-world networks and attributes and generate the treatments and outcomes to evaluate estimators in more realistic and complex network settings. Similar to other state-of-the-art (SOTA) approaches~\cite{arbour-kdd16,jiang-cikm22,yuan-www21}, we assume one-hop interference.}

\textbf{Synthetic data}. 
We consider two random network models: (1) the Barab{\'a}si Albert (BA) model~\cite{albert-rmp02}, which captures preferential attachment phenomena (e.g., social networks), and (2) the Watts Strogatz (WS) model~\cite{watts-nature98}, which captures small-world phenomena. We generate both networks fixing {\small $N=3000$} and controlling the sparsity of edges (with the preferential attachment parameter {\small $m$} for the BA model and the mean degree parameter {\small $k$} with a fixed rewiring probability of {\small $0.5$} for the WS model). The topology and sparsity of networks should be considered to evaluate robustness because, as shown in \S \ref{sec:results}, they impact the underlying peer influence and thus the estimation of IDE. 
We 
defer the generation of node attributes {\small $\{C,Z\} \subset \mathbf{Z_n}$} (confounder and effect modifier) and edge attributes {\small $\{Z_r\} \subset \mathbf{Z_e}$} to \S \ref{sec:ap-data-gen}.
 
\textit{Treatment model}. For observational data, a unit's treatment may be influenced by its own covariates and peer treatments or covariates. We generate treatment {\small $X_i$} as {\small $X_i \sim \theta (a(\tau_c\mathbf{W}_x \cdot \frac{\sum E_r \odot C_{-i}}{\sum E_r} + (1-\tau_c)\mathbf{W}_x \cdot C_i))$}, where {\small $\theta$} denotes Bernoulli distribution, {\small $a$} is an activation function, {\small $\tau_c\in[0,1]$} controls influence from peers, and {\small $\mathbf{W}_x$} is the weight matrix.

\textit{Outcome model}. We define the outcome {\small $Y_i$} as a function of unit's treatment ({\small $\tau_d$}), attribute effect modifier ({\small $\tau_a$}), network effect modifier ({\small $\tau_n$}), HPI contexts ({\small $\tau_p, \tau_{Z_r},\tau_{Z_{-i}},\tau_{E_{-r}}$}), confounder ({\small $\tau_{c1}$}), and random noise ({\small $\epsilon$}). The detailed functional form of the model is in the Appendix \ref{sec:ap-data-gen}. 
The coefficients {\small $\tau_a$} and {\small $\tau_n$} control treatment effect modification based on node attributes and network location (e.g., degree centrality), respectively. We model HPI based on the strength of ties ({\small $\tau_{Z_r}$}), attribute similarity ({\small $\tau_{Z_{-i}}$}), local structure captured by mutual friend count ({\small $\tau_{E_{-r}}$}), and their combination. The peer exposure coefficient ({\small $\tau_p$}) controls the strength of HPI.

\textbf{Semi-synthetic data}. Following previous work~\cite{guo-wsdm20}, we use two real-world social networks, BlogCatalog and Flickr. 
We use LDA~\cite{blei-jmlr03} to reduce the dimensionality of raw features to $50$. The 50-dimensional features are randomly assigned to either confounding set {\small $\mathbf{C}$} or effect modifier set {\small $\mathbf{Z}$}.
Both real-world networks lack edge attributes and we consider the HPI based on friends' degrees instead of the strength of ties. The detailed functional forms for treatment and outcome models are included in \S \ref{sec:ap-data-gen}.

\textbf{Evaluation metrics}. 
To evaluate the performance of heterogeneous effect estimation, we use the \textit{Precision in the Estimation of Heterogeneous Effects} ({\small ${\epsilon_{PEHE}}$})~\cite{hill-jcgs11} metric defined as {\small ${\epsilon_{PEHE}} = \sqrt{\frac{1}{N}\sum_i (\tau_i - \hat{\tau_i})^2},$} where {\small $\tau_i$} and {\small $\hat{\tau_i}=\hat{Y_i(1)} - \hat{Y_i(0)}$} are true and estimated IDEs. {\small $\epsilon_{PEHE}$} (lower better) measures the deviation of estimated IDEs from true IDEs. To evaluate the performance of average IDE estimation, we use {\small $\epsilon_{ATE}$} metric defined as {\small $\epsilon_{ATE}=|\frac{1}{N}\sum_{v_i \in \mathbf{V}} \tau_i - \frac{1}{N}\sum_{v_i \in \mathbf{V}} \hat{\tau}_i|$} that gives absolute deviation from true average IDEs. 

{\textbf{Baselines}. We compare our proposed estimator, IDE-Net, with recent methods: DWR~\cite{zhao-arxiv22}, 1-GNN-HSIC~\cite{ma-aistats21}, NetEst~\cite{jiang-cikm22}, \newrevision{TNet~\cite{chen-icml24}}, ND~\cite{guo-wsdm20}, and MLP-CFR~\cite{shalit-icml17}. Dual Weighting Regression (DWR) explicitly addresses HPI with attention weights and the calibration step reweighs samples for pseudo-randomization. 1-GNN-HSIC implicitly addresses HPI by summarizing covariates of treated peers with 1-GNN~\cite{morris-aaai19} in addition to homogeneous exposure embedding, i.e., the fraction of treated peers. NetEst uses GCN~\cite{kipf-iclr16} with adversarial training and \newrevision{TNet uses targeted learning for doubly robust estimation of causal effects but} both methods consider the fraction of treated peers as exposure embedding. ND uses GCN for feature representation and CFR~\cite{shalit-icml17} as an estimator but does not account for interference. MLP-CFR considers MLP for feature representation and does not address interference while MLP-CFR-INT is like MLP-CFR but uses homogeneous exposure embedding.  We include a baseline BART-NET-INT, based on BART~\cite{hill-jcgs11}, a tree-based estimator, with mean peer covariates and homogeneous exposure embedding because it had a competitive performance on real-world datasets~\cite{guo-wsdm20}. For the ablation study, we use IDE-Net-INT as a baseline that only utilizes feature embeddings and considers exposure embedding like NetEst.} Figure \ref{fig:cane_em_pehe} in the appendix, compares the performance of the causal network motifs approach (related work considering HPI due to local structure) with IDE-Net. 
The Appendix \S \ref{sec:hyperparam} discusses the hyperparameter settings in detail. 
\subsection{Results}\label{sec:results}
Here, we discuss the main takeaways from our experiments 
and include additional results in the Appendix \ref{sec:addn_results}.

\textbf{Our proposed method is robust when both treatment effect modification and HPI are present}. 
In the first experiment, {we generate ground truth IDEs due to effect modifications based on unit attributes and unit's degree compared to peers by varying the values of coefficients $\tau_d$, $\tau_a$, and $\tau_n$ in the outcome model.} We enable HPI based on tie strength ($\tau_{Z_r}=1$), mutual connections ($\tau_{E_{-r}}=1$), attribute similarity ($\tau_{Z_{-i}}=1$), or a linear combination of these contexts. We vary the strength of peer exposure, i.e., HPI coefficient $\tau_p$, and run the experiment with three configurations of BA and WS networks with sparse to dense edge densities $15$ times for each parameter setting. Figure \ref{fig:syn_cane_all_pehe} shows the $\epsilon_{PEHE}$ with $95\%$ confidence interval for different network configurations averaged for all four HPI mechanisms. We observe our approach is robust to unknown HPI and effect modification but the baselines suffer high bias depending on network topology and edge density. {Most baselines incur high bias even for low HPI coefficient. 1-GNN-HSIC \newrevision{and TNet} perform well for low HPI coefficient and the dense WS network but poorly for higher HPI coefficient. Interestingly, the MLP-CFR baseline that ignores network features and interference has a competitive performance to other baselines for higher HPI coefficient which demonstrates the bias due to misspecified peer exposure.}
Figure \ref{fig:semi_cane_all_pehe} shows the performance of estimators for each underlying peer influence mechanism for two real-world networks. Our proposed approach has robust performance for all four mechanisms in both real-world networks. {Baselines that partially handle HPI and BART-NET-INT have better performances than baselines considering homogeneous influence or ignoring interference. \newrevision{While TNet performs well for low HPI coefficient in the BlogCatalog dataset, the performance for higher HPI coefficient and Flickr dataset is poor.} Interestingly, IDE-Net-INT has competitive performance for semi-synthetic data highlighting the strength of our feature embedding.}
Our method is robust for ATE estimation as well (Figure \ref{fig:cane_ate}).

\textbf{Ignoring HPI results in biased heterogeneous and average direct effects}. In the second experiment, we evaluate the estimation of constant direct effects $\tau_d=-1$ for a similar setup as the first experiment. Here, we have no effect modifications to investigate bias only due to HPI. In Figure \ref{fig:ane_pehe} and \ref{fig:ane_ate}, we can observe the robustness of IDE-Net for estimating heterogeneous and average direct effects for both synthetic and semi-synthetic data settings. All baselines except BART-NET-INT suffer high bias depending on network topology and edge density. This may be because BART-NET-INT has access to aggregated peer features but GNN-based methods have to learn the relevant representation automatically. Although IDE-Net is designed to handle heterogeneity and effect modifications, it performs well for constant effects.
Figure \ref{fig:cane_em_pehe} in \S \ref{sec:addn_results} includes an ablation study to support the robustness of our method for estimating direct effects when peer exposures are effect modifiers.
\vspace{-1em}
\section{Conclusion}
We motivated and investigated the problem of estimating individual direct effects under unknown heterogeneous peer influence mechanisms. We proposed flexible models (NSCM and NAGG) to identify potential heterogeneous contexts as relational variables and an estimator (\emph{IDE-Net}) for representation learning of these relational variables to estimate robust individual direct effects. 
While our framework is not a replacement for related work dealing with specific HPI~\cite{tran-aaai22,yuan-www21,qu-arxiv21,ma-kdd22}, it is general enough to capture their sources of heterogeneity. 
For instance, Figure \ref{fig:cane_em_pehe} in the Appendix compares \emph{IDE-Net} with the causal network motifs approach designed to address HPI due to local neighborhood structure. We discuss the concrete mappings to each work in the Appendix \ref{sec:ap-discussion}.  

Our work can be extended to peer and total effect estimation which would require addressing counterfactual peer exposures with unknown HPI contexts. 
Future investigations could relax the assumption of pre-treatment network structure and study heterogeneity in complex interventions on/or affecting network ties. The implications of our work include identifying subpopulations with heterogeneous effects. The potential societal impacts could include designing targeted interventions or discovering intervention policies that maximize desired utility in online or offline social networks.

\small{
\bibliography{references}
}
\onecolumn
\appendix
\section{Appendix}
\subsection{Results (continued...)}\label{sec:addn_results}
This section shows Figures \ref{fig:ane_pehe}, \ref{fig:ane_ate}, and \ref{fig:cane_ate} referenced and discussed in the main paper. These results, along with Figure \ref{fig:cane_pehe} in the main paper, establish the robustness of our proposed estimator when heterogeneous peer influence (HPI) acts as a nuisance or bias in the estimation of individual direct causal effects. For experimental settings in Figures \ref{fig:cane_pehe}–\ref{fig:cane_ate}, IDE (as defined in Equation \ref{eq:dir_eff}) is equivalent to "insulated" individual effects~\cite{jiang-cikm22} because influences from peers do not interact with the treatment assignments of individuals. The robustness of our method, IDE-Net, for estimating heterogeneous and average effects in the presence of HPI and effect modification due to contexts related to individual attributes and network position is demonstrated in Figures \ref{fig:cane_pehe} and \ref{fig:cane_ate}, respectively. Figures \ref{fig:ane_pehe} and \ref{fig:ane_ate} illustrate the effectiveness of our methods when the underlying IDEs are constant but the peer influence is heterogeneous.

Figure \ref{fig:cane_em_pehe} shows the results of a follow-up experimental setting when there is effect modification due to peer exposure and the true IDEs capture heterogeneous influences from peers as referenced (but not discussed) in \S \ref{sec:results}. We generate the outcome as $Y_i = f_Y(\tau_p \phi(X_{-i},\mathcal{Z}_i) + \tau_p \phi(X_{-i},\mathcal{Z}_i)X_i + ...)$, where "$...$" denotes other terms, $\tau_p$ is the peer exposure coefficient, and $\phi(X_{-i},\mathcal{Z}_i)$ capture strength of peer exposure due to some underlying mechanism. The true IDEs in this setting are $\tau_p \phi(X_{-i},\mathcal{Z}_i)$ and we evaluate how well the estimators estimate the IDE with the $\epsilon_{PEHE}$ metric. The IDEs in this setting are different from "insulated" individual causal effects~\cite{jiang-cikm22} due to the interaction with peer exposure. This experimental setting also acts as an \textbf{ablation study} to examine the contributions of our feature and exposure embeddings. In previous experimental settings, we used the baseline estimators and mostly default hyperparameters (see \S \ref{sec:hyperparam} for details) implemented in the NetEst~\cite{jiang-cikm22} and Network Deconfounder (ND)~\cite{guo-wsdm20} papers. Here, baselines also use counterfactual prediction modules and hyperparameters like our IDE-Net estimator. We use GCN-TARNet and GCN-CFR with causal network motifs~\cite{yuan-www21} as exposure embeddings for baselines instead of NetEst~\cite{jiang-cikm22}, which performs poorly for IDE estimation. 

Causal network motifs~\cite{yuan-www21} capture recurrent subgraphs with treatment assignments to represent heterogeneous peer exposure based on local neighborhood structure. Let the symbol $\bullet$ indicate the treated peer and $\circ$ indicate the peer in the control group. In the ego's neighborhood, we use normalized counts of dyads ($\{\circ\}, \{\bullet\}$), open triads ($\{\bullet~\bullet\}, \{\bullet~\circ\}, \{\circ~\circ\}$), and closed triads ($\{\bullet-\bullet\}, \{\bullet-\circ\}, \{\circ-\circ\}$) to construct the exposure embedding following~\citet{yuan-www21}. We compare our method IDE-Net (with smoothing regularization) against GCN-TARNet and GCN-CFR consisting causal motif fractions (MOTIFS) and  GCN-TARNet-INT that considers fraction of treated peers (HOM-INT). We also use DWR, \newrevision{TNet}, and 1-GNN-HSIC as baselines. Additionally, we include IDE-Net-INT (with smoothing regularization) as a baseline for the ablation study. We generate heterogeneous peer influence due to influence mechanisms like mutual connections, attribute similarity, tie strength, and peer degree, as discussed in the main paper. Additionally, we include structural diversity as an underlying influence mechanism that considers the number of connected components among treated peers as peer exposure~\cite{yuan-www21}. We use peer exposure coefficient $\tau_p=1$ for structural diversity and $\tau_p=20$ for other influence mechanisms. Figure \ref{fig:cane_em_pehe} shows the performance of proposed and baseline estimators for synthetic scale-free (BA) and small-world (WS) networks with different edge densities. Due to the computational cost of extracting exposure embeddings with causal network motifs, we use a WS graph with a cluster size of K = 15 instead of K = 300 in the previous settings.

We observe in Figure \ref{fig:cane_em_pehe} that causal motifs capture peer influences due to structural diversity (with open triads) and mutual connections (with closed triads) well. However, as expected, causal motifs cannot capture peer influences due to attribute similarity, tie strength, and peer degree in scale-free networks with a non-uniform degree distribution. IDE-Net on the other hand is robust for different underlying peer influence mechanisms and competitive to or better than causal motifs for structural diversity and mutual connection mechanisms. Estimators assuming homogeneous influence suffer high bias unless the network topology and influence mechanisms produce uniform influences (e.g., peer degree in small-world networks). Estimators assuming no network interference have a large bias in every setting when there is effect modification due to peer exposure.

\begin{figure}[ht]
    \centering
    \subfigure[Synthetic networks and one or more peer influence mechanisms]{
        \includegraphics[width=0.99\linewidth,keepaspectratio]{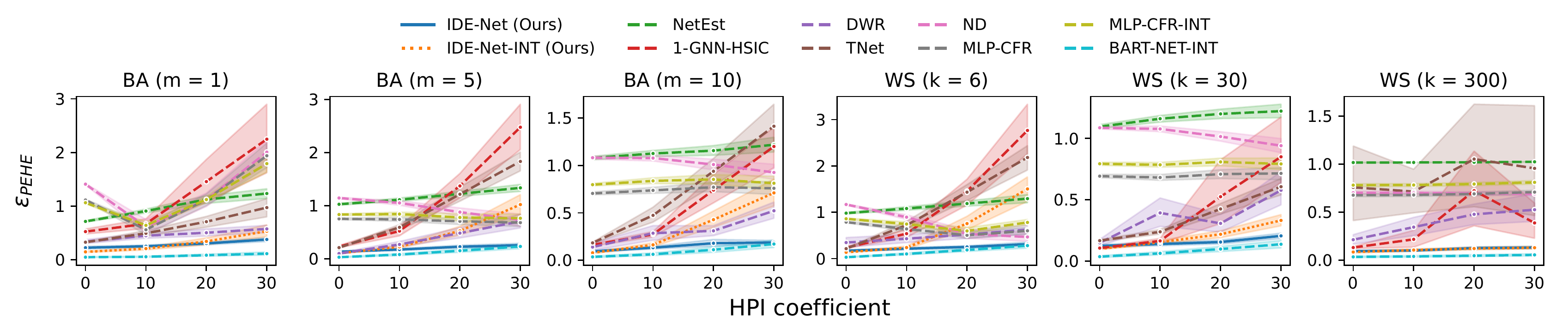}
    \label{fig:syn_all_pehe}
    \vspace{-1em}
    }
    
    \subfigure[Semi-synthetic networks and different peer influence mechanisms]{
        \includegraphics[width=0.99\linewidth,keepaspectratio]{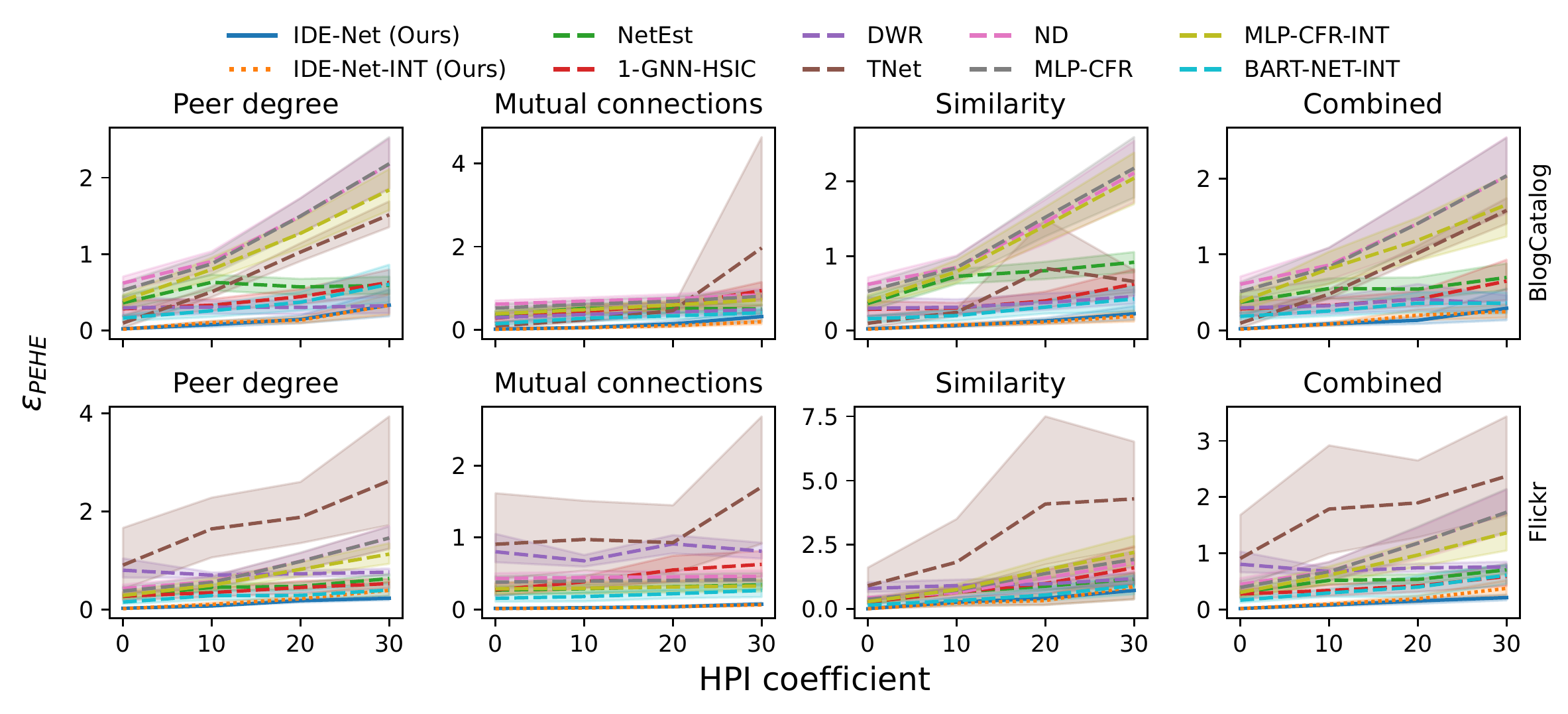}
    \label{fig:semi_all_pehe}
    \vspace{-1em}
    }
    \caption{$\epsilon_{PEHE}$ errors for \textbf{constant true individual direct effects} and heterogeneous peer influence.}
    \label{fig:ane_pehe}
    \vspace{-1em}
\end{figure}

\begin{figure}[th]
    \centering
    \subfigure[Synthetic networks and one or more peer influence mechanisms]{
        \includegraphics[width=0.99\linewidth,keepaspectratio]{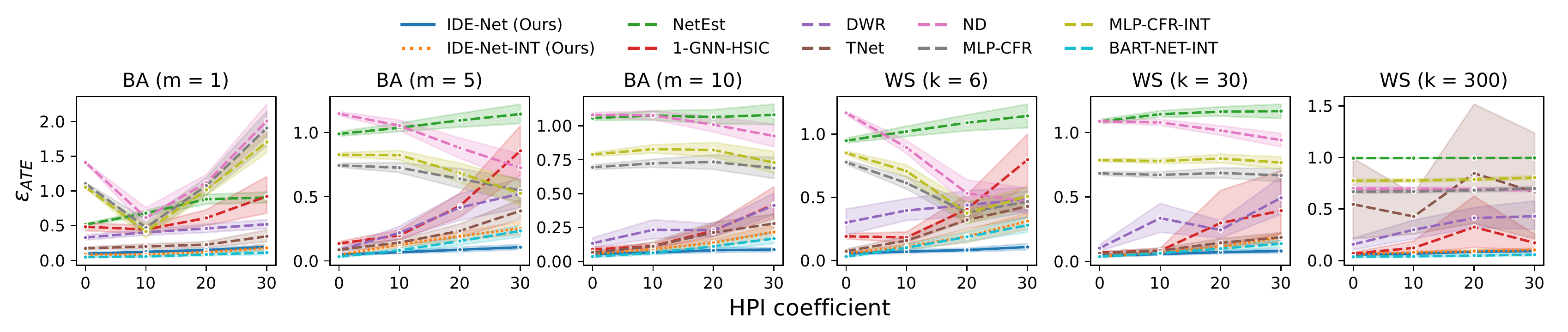}
    \label{fig:syn_all_ate}
    \vspace{-1em}
    }
    
    \subfigure[Semi-synthetic networks and different peer influence mechanisms]{
        \includegraphics[width=0.99\linewidth,keepaspectratio]{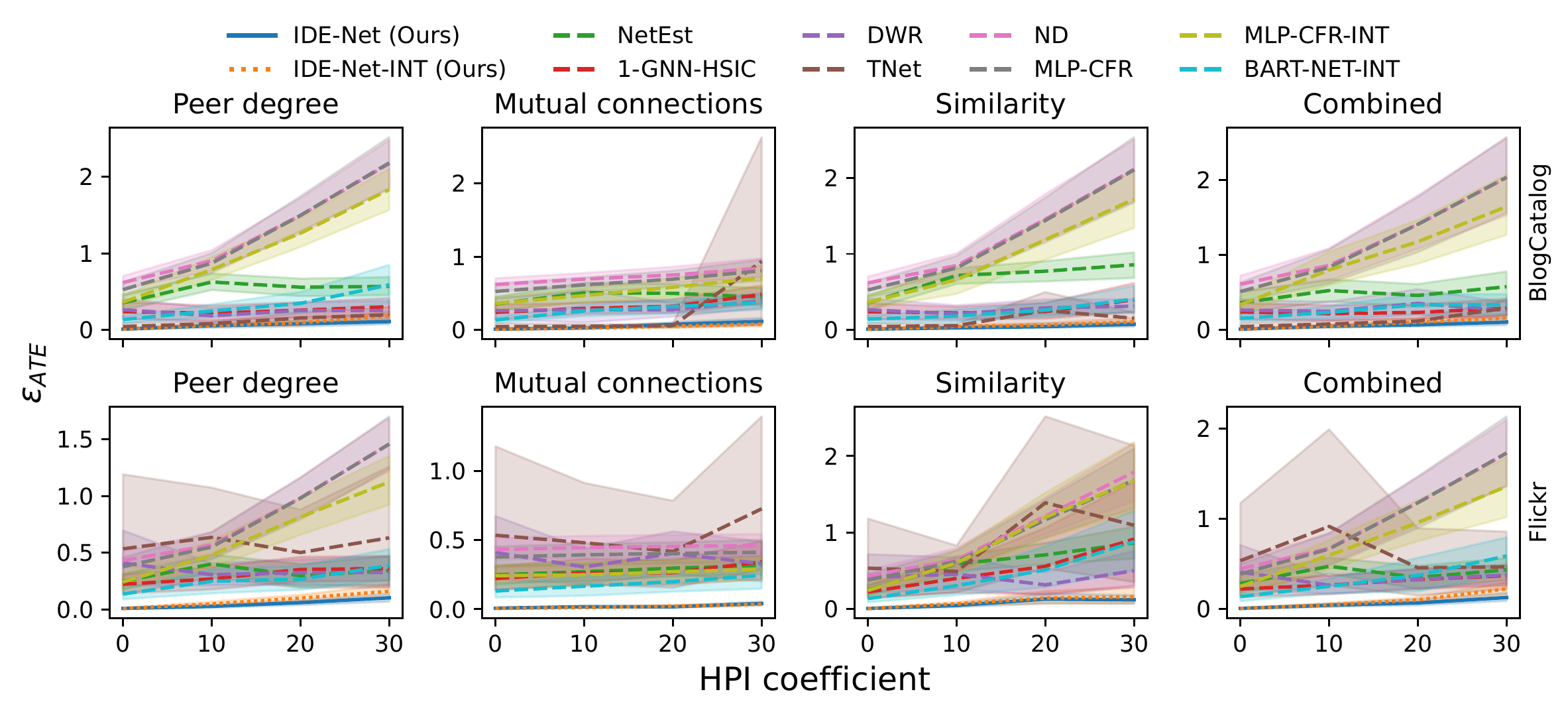}
    \label{fig:semi_all_ate}
    \vspace{-1em}
    }
    \caption{$\epsilon_{ATE}$ errors for \textbf{constant true individual direct effects} and heterogeneous peer influence.}
    \label{fig:ane_ate}
    \vspace{-1em}
\end{figure}

\begin{figure}[th]
    \centering
    \subfigure[Synthetic networks and one or more peer influence mechanisms]{
        \includegraphics[width=0.99\linewidth,keepaspectratio]{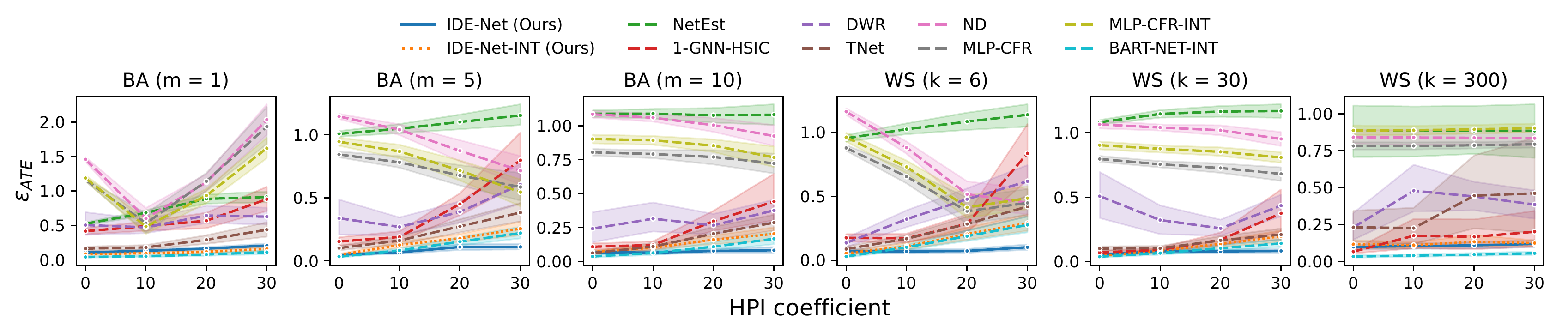}
    \label{fig:syn_all_ate_cane}
    \vspace{-1em}
    }
    
    \subfigure[Semi-synthetic networks and different peer influence mechanisms]{
        \includegraphics[width=0.99\linewidth,keepaspectratio]{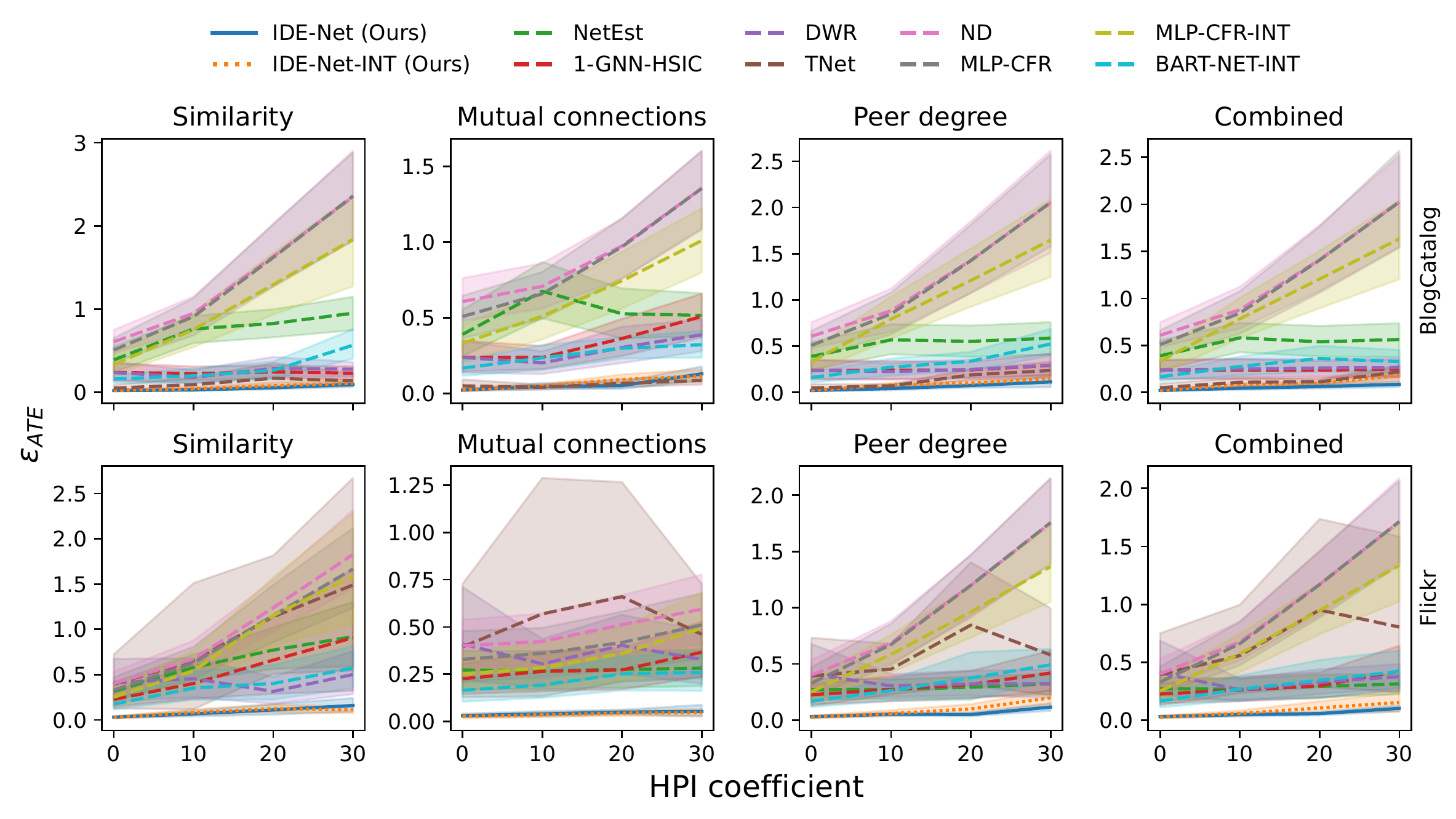}
    \label{fig:semi_all_ate_cane}
    \vspace{-1em}
    }
    \caption{$\epsilon_{ATE}$ errors for individual direct effects with {effect modification} and heterogeneous peer influence.}
    \label{fig:cane_ate}
    \vspace{-1em}
\end{figure}

\begin{figure}[th]
    \centering
    \includegraphics[width=\linewidth]{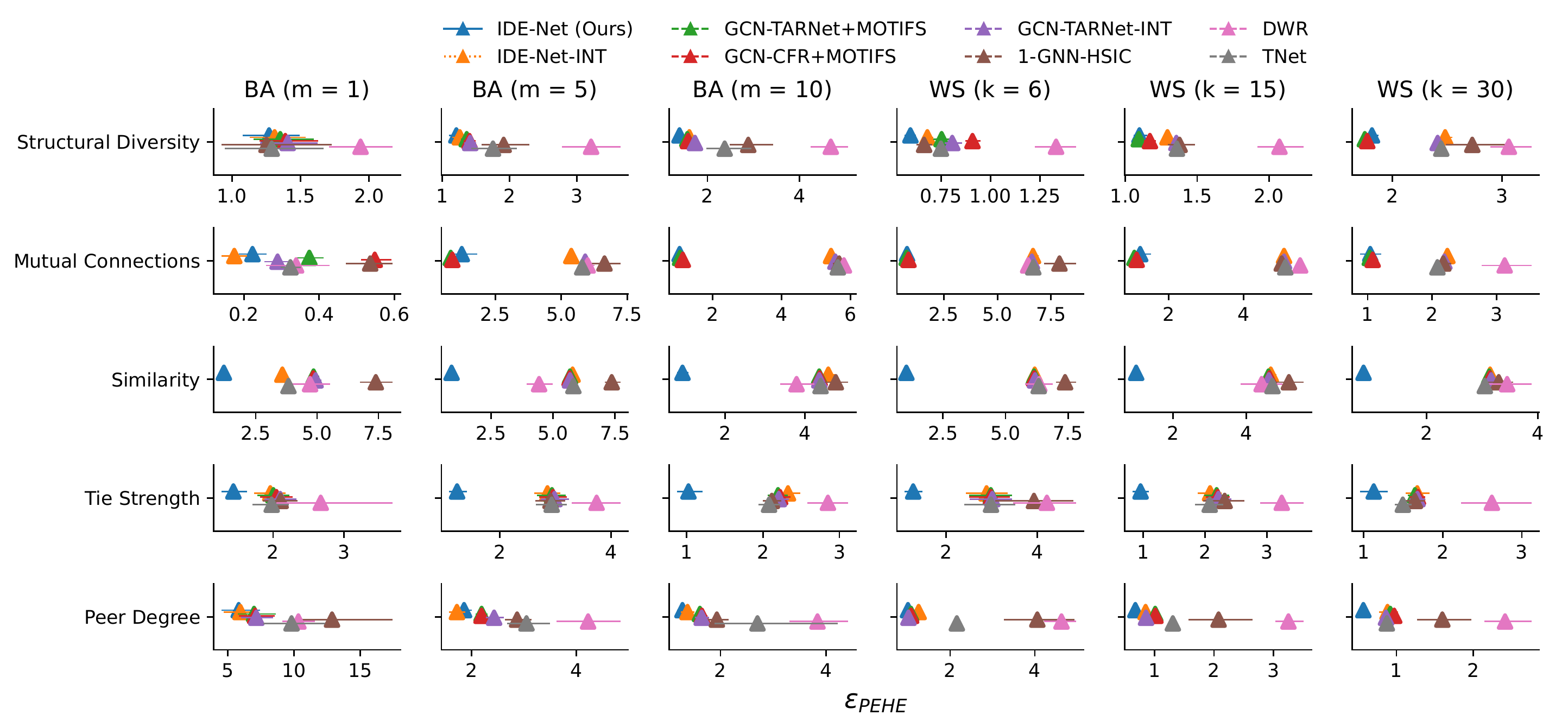}
    \caption{$\epsilon_{PEHE}$ errors, for different underlying heterogeneous peer influence mechanisms and synthetic networks with different topologies and edge densities, when \textbf{the peer exposure is an effect modifier} and defines individual effects. While causal motifs capture peer influences due to structural diversity and mutual connections well, they cannot capture peer influences due to attribute similarity, tie strength, and peer degree in scale-free networks with a non-uniform degree distribution. IDE-Net, on the other hand, is robust for different underlying peer influence mechanisms and competitive to causal motifs for structural diversity and mutual connection mechanisms.}
    \label{fig:cane_em_pehe}
\end{figure}

\clearpage

\subsection{Data generation}\label{sec:ap-data-gen}
First, we discuss the data generation for confounder $C_i$, context $Z_i$, and tie strength $Z_r$ for synthetic data as referenced in \S \ref{sec:exp_setup}. We generate $C_i \sim Beta(0.6,0.6)$ to model demographics predictive of political polarity, $Z_i \sim Categorical(K=5, \mathbf{p} \sim Dirichlet(5,5,5,5,5))$ to model membership in one of five demographic subgroups, and $Z_r \sim Uniform(1,10)$ to model friendship duration as tie strength. Here, the $Beta$ distribution with the parameters gives values in the range $(0,1)$ with a slightly polar distribution most values near $0$ and $1$ as fewer values toward $0.5$. The $Dirichlet$ distribution gives probabilities values that sum to $1$ for the $Categorical$ distribution.

\textit{Treatment model}. For observational data, a unit's treatment may be influenced by its own covariates as well as the treatments or covariates of other units, depending on the interference conditions. We generate treatment $X_i$ for unit $v_i$ as $X_i \sim \theta (a(\tau_c\mathbf{W}_x \cdot \frac{\sum E_r \odot C_{-i}}{\sum E_r} + (1-\tau_c)\mathbf{W}_x \cdot C_i))$, where $\theta$ denotes Bernoulli distribution, $a:\mathbb{R}\mapsto [0,1]$ is an activation function, $\tau_c\in[0,1]$ controls spillover influence from unit $v_i$'s peers, and $\mathbf{W}_x$ is the weight matrix.

\textit{Outcome model}. We define unit $v_i$'s outcome $Y_i$ as a function of unit's treatment ($\tau_d$), unit's attribute effect modifier ($\tau_a$), network effect modifier ($\tau_n$), heterogeneous influence from peers ($\tau_p, \tau_{Z_r},\tau_{Z_{-i}},\tau_{E_{-r}}$), confounder ($\tau_{c1}$), and random noise ($\epsilon \sim \mathcal{N}(0,1)$).
{\small
\begin{equation}
\begin{split}
Y_i= \tau_0 + \tau_{c1} C_i +\tau_d X_i + \tau_{a}X_iZ_i + \tau_{n} X_i\mathbb{1}[{(\textstyle \sum E_r)>g_p(E_{-r})}]+\epsilon + \\\tau_p X_{-i}\cdot g_n(\tau_{Z_r}.\phi(Z_r) +\tau_{Z_{-i}}.{\phi(E_r \odot rbf(Z_i, Z_{-i}))}+ \tau_{E_{-r}}.{\phi(E_r \odot g(E_r, E_{-r}))}),
\end{split}
\end{equation}
}
where $\tau_0$ is a constant, $g_p$ gives $p^{th}$ percentile value of peers' degree $g(E_{-r})$, $g_n(\vec{Z})=\frac{\vec{Z}}{\sum Z}$ gives normalized weight, $\phi(Z)$ is one of $\{Z^2, \sqrt{Z}, Z\}$, and $rbf$ is a radial basis function. The coefficients $\tau_a$ and $\tau_n$ control treatment effect modification based on node attributes (e.g., age group) and network location (e.g., degree centrality), respectively. We model variable influence from peers based on the strength of ties ($\tau_{Z_r}$), the attribute similarity and friendship existence ($\tau_{Z_{-i}}$), local structure captured by the count of mutual friends ($\tau_{E_{-r}}$), and peer exposure coefficient ($\tau_p$). 

The generated data is used in the treatment and outcome models. For synthetic data, we use $\tau_c=0.5$ in the treatment model to control spillover from peers' confounders. Similarly, we use weight matrix $\mathbf{W}_x=1$ and identity activation $a$ in the treatment model to generate treatment assignment $X_i$. For the outcome model, we use the square of tie strength and the square root of mutual friend count to introduce non-linear influences in the underlying model. The radial basis function in the outcome model uses $gamma=2$.

For the semi-synthetic data, we randomly assign 50-dimensional features of each unit to confounders or effect modifiers. Confounders are the features that affect both treatment and outcome whereas effect modifiers (except confounders) do not affect the treatment. We generate masks drawn from the Bernoulli distribution and weights drawn from the uniform distribution to get weights $\mathbf{W}_x$ and $\mathbf{W}_y$ for all 50-dimensional features, i.e.,
\begin{align*}
    M_{x} &\sim Bernoulli(0.6), \\
    M_{y} &\sim Bernoulli(0.6), \\
    W_{x} &\sim  M_x \times Uniform(-3,3), \text{ and } \\
    W_{y} &\sim  M_y \times Uniform(-3,3).
\end{align*}
The treatment model for semi-synthetic data is similar to that of synthetic data except we use the weights $\mathbf{W}_x$ and activation function $a$ is Sigmoid, i.e.,
\begin{align*}
    X_i \sim \theta (Sigmoid(\tau_c\mathbf{W}_x \cdot \frac{\sum E_r \odot C_{-i}}{\sum E_r} + (1-\tau_c)\mathbf{W}_x \cdot C_i)),
\end{align*}
where $C_{i}$ represent the 50-dimensional features. The outcome model is similar to the synthetic data but we multiply $C_i$ with $\mathbf{W}_y$ and we include features with $M_x=0$ and $M_y=1$ to $Z_i$.

\subsection{Feature and exposure embeddings}\label{sec:ap-mapping}
\textbf{Feature embeddings}. We propose GNN feature embeddings {\small $h_f$} by concatenating outputs of ego multi-layer perception (MLP) {\small $h_i$}, the peer graph convolution network (GCN) {\small $h_{-i}$}, and edge GCN {\small $h_r$} modules capturing relational variables {\small $\phi_i(Z_i)$}, {\small $\phi_{-i}(Z_{-i}, Z_r, E_r)$}, and {\small $\phi_{r}(Z_r, Z_{-r}, E_r, E_{-r})$}, respectively. Let {\small $\Theta$} denote MLP with ReLU, and $<\mathbf{v_i}, \mathbf{v_j}>$ be the indices of nodes with edges, i.e., non-zero $\mathbf{A}$, then, we define,
{\small
\begin{align*}
    \begin{split}
    &h_i = \Theta(\mathbf{Z_n}),\\
    &h_{-i}=\mathbf{D}^{-\frac{1}{2}}\overset{row}{\textstyle \sum}\mathbf{A} \odot \Theta(\mathbf{Z_n[v_j]} || \mathbf{Z_e}),\\
    &h'_{r}=\overset{row}{\textstyle \sum}\mathbf{A} \odot \Theta(\mathbf{Z_e}), h_r= h'_{r}||\mathbf{D}^{-\frac{1}{2}}\mathbf{A}h'_{r},\text{ and }\\ &h_f = h_i||h_{-i}||h_{r},
    \end{split}
\end{align*}
} where {\small $\mathbf{D}$} is a degree matrix used for normalization, {\small $\odot$} is index-wise product operator, and {\small $||$} is a concatenation operator. Here, {\small $h_{-i}$} is a normalized GCN and {\small $h_{r}$} is GCN after peer aggregation {\small ($h'_{r}$)}. Existing work has shown mean aggregation, i.e., normalization using {\small $\mathbf{D}^{-1}$} could make GNNs less expressive than sum aggregation~\cite{xu-iclr18}. However, using the sum aggregation may make representations unstable, especially in scale-free networks (e.g., online social networks) that have some nodes with high degrees, and we choose $\mathbf{D}^{-\frac{1}{2}}$ for normalization. 
We can extract l-hop deep features using $h^l = \mathbf{D}^{-\frac{1}{2}}\mathbf{A}\Theta(h^{l-1})$, where $h^1 = h_f$.

\textbf{Exposure embeddings}. Let $<\mathbf{v_i}, \mathbf{v_j}>$ be the indices of nodes with edges, i.e., non-zero $\mathbf{A}$, then, we define exposure embedding $h_e$ as a weighted mean, i.e.,
\begin{align*}
    h_e = \frac{\overset{row}{\textstyle \sum} \mathbf{A} \odot h_w \odot X_i[\mathbf{v_j}]}{\overset{row}{\textstyle \sum} \mathbf{A} \odot h_w}| h_w := \Theta(\mathbf{Z_e})||(\mathbf{A}\odot(\mathbf{A}\mathbf{A}^T))[\mathbf{v_i}, \mathbf{v_j}]||h_f[\mathbf{v_j}]||sim(h_f[\mathbf{v_i}], h_f[\mathbf{v_j}]),
\end{align*}
where $h_w \in \mathbb{R}^{m \times k}$ represents edge weights capturing underlying peer influence strength, [.] indicates indexing, and $sim(a,b)=e^{-(a-b)^2}$ captures similarity. The edge attributes, network connections, peer features, and similarity with peers are incorporated in $h_w$ as potential influence strengths.

\subsection{Discussion}\label{sec:ap-discussion}
\textbf{Connections to other HPI work}. 
We characterize existing work based on peer influence mechanisms as approaches considering homogeneous influence, known heterogeneous influence, or specific heterogeneous mechanisms. However, these works have different scopes and solve different problems. Our work highlights the importance of modeling unknown heterogeneous peer influence mechanisms for individual direct effect estimation. While our framework is not a replacement for existing approaches dealing with specific HPI~\cite{tran-aaai22,yuan-www21,qu-arxiv21,ma-kdd22}, it is general enough to capture their sources of heterogeneity. 
Moreover, these approaches could use our work to cover additional sources of heterogeneity. For instance, Figure \ref{fig:cane_em_pehe} in the Appendix compares IDE-Net with the causal network motifs approach designed to address HPI due to local neighborhood structure.
IDE-Net leverages graph convolution networks (GCN) and triangle counts ($AA^T$) for exposure embeddings to capture the representation of local neighborhoods. Other sources of heterogeneity due to variable susceptibilities of units ~\cite{tran-aaai22} and unit's covariates~\cite{qu-arxiv21} are captured by node attributes while group interactions captured with hypergraphs~\cite{ma-kdd22} could be modeled by IDE-Net with either node or edge attributes. 

\textbf{Scalability}. Extending our approach to make it scalable to larger networks is the next step for practical applications. A potential solution could rely on capturing 1-hop neighborhoods for each node vs. needing to feed the full network, as the current solution does. Let $N$, $M$, and $D$, respectively denote the number of nodes, edges, and the dimension of features or representations. The scalability of the current approach, like GCN, depends on time complexity $O(ND^2)$ for feature transformation and $O(MD)$, the number of edges (mostly linearly given constant dimension) for peer aggregation because we use sparse matrix multiplication whose complexity depends on the number of non-zero elements in the adjacency matrix ($A$). However, one term $AA^T$ in the exposure mapping introduced to explicitly capture structural property (number of walks of the path two between $v_i$ and $v_j$) incurs time complexity of $O(MN)$. This time complexity and overall space complexity can be improved by using batches of ego-network and ongoing work to apply our framework to large real-world networks for discovering populations with heterogeneous effects is addressing the scalability. The scope of the current work focuses on evaluating different network and data generation settings for IDE estimation under HPI.

\subsection{Preliminaries on Relational Causal Model (RCM)}\label{sec:rcm-bg}
{The relational causal model (RCM) framework provides a principled way to define random variables for causal modeling and reasoning in relational settings by accounting for relationships among units in addition to the attributes.
We focus on relational domains like social networks or contact networks, as depicted in Figure \ref{fig:eg1-sn}, abstracted by a \textbf{\textit{relational schema}} {\small $\mathcal{S} = (E,R, \mathcal{A})$} comprising an entity class {\small $E$} (e.g., User), a relationship class {\small $R=<E, E>$} (e.g., Friend) with many-to-many cardinality between the entities, entity attributes {\small $\mathcal{A}(E)$}, and relationship attributes {\small $\mathcal{A}(R)$}. The relational schema's partial instantiation, known as a \textbf{\textit{relational skeleton}}, specifies the entity instances (e.g., user nodes) and relationship instances (e.g., friendship edges) in the domain.}\\
{
A \textbf{\textit{relational random variable}} is defined from the perspective of an \textit{item class} (e.g., User or Friend) and consists of a \textbf{\textit{relational path}} and an attribute reached following the path. A relational path {\small $P=[I_j,...,I_k]$} is an alternating sequence of entity and relationship classes and captures path traversal in the schema, where \textit{item class} {\small $I\in\{E,R\}$} is either an entity or a relationship class. For the social network example, the relational paths capture the self, i.e., \texttt{[U]}, and friendship paths from the perspective of a user such as immediate friends, i.e., \texttt{[U,F,U]}, and friends of friends, i.e., \texttt{[U,F,U,F,U]}. The relational variables capture the self's attributes like stance \texttt{[U].St}, attributes of immediate friends like political affiliation \texttt{[U,F,U].Aff}, and so on. The relational variables with a trivial path of length one (e.g., \texttt{[U].St}) are known as \textit{canonical variables}.
The values of such relational random variables are determined using the concepts of \textbf{\textit{terminal set}} and \textbf{\textit{path semantics}} for traversal. A relational path's beginning and ending item classes are called the base item class and terminal item class, respectively. A terminal set comprises instances of the terminal item class reached from an instance of the base item class after traversing a relational path. For example, the terminal sets for a base user \texttt{Ava} and paths \texttt{[U]} and \texttt{[U,F,U]}, respectively, consists of the self, i.e., \{\texttt{Ava}\} and immediate friends, i.e., \{\texttt{Bob, Dan, Fin}\}. The path traversal follows \textit{bridge-burning semantics} (BBS)~\cite{maier-thesis14} where nodes at lower depth are not visited again. A breadth-first traversal preserves BBS by exploring all nodes at each depth and not visiting explored nodes. For example, the terminal set for the base user \texttt{Ava} and path \texttt{[U,F,U,F,U]} consists only \{\texttt{Eve}\} excluding the immediate friends and the self.
The value of a relational variable $P.X$ is a multi-set, i.e., a set that allows repeated elements, obtained by collecting values of the terminal item class's attribute $X$ in the terminal set of path $P$. For example, the values of the relational variables \texttt{[U].St} and \texttt{[U,F,U].Aff} for base user \texttt{Ava} are \{\texttt{Anti}\} and \{\texttt{Right,Right,Left}\}, respectively. These concepts apply to relational variables with edge attributes (e.g., \texttt{[U,F].Dur} with value \{\texttt{Long,Short,Short}\} for base user \texttt{Ava}) and from the perspective of relationship class (e.g., \texttt{[F,U].Aff} with value \{\texttt{Left,Right}\} for base friendship edge \texttt{Ava-Bob}).
}

{Similar to the structural causal model (SCM), the data-generating assumptions of node and edge attributes can be encoded by using \textbf{\textit{relational dependencies}} to link relational random variables to their causes. A relational dependency $P.X \rightarrow [I].Y$, where $X \ne Y$, links a canonical relational variable $[I].Y$ to its cause $P.X$ which is another relational variable of arbitrary length with the same base item class $I$. The use of dependencies with canonical variables enables capturing data generation from the perspective of an entity or a relationship class. For example, the dependency \texttt{[U,F,U].Aff}$\rightarrow$\texttt{[U].St} captures that the stance of a user is influenced by the political affiliations of immediate friends. A \textbf{\textit{relational causal model}} (RCM) {\small $\mathcal{M}(\mathcal{S}, \mathbf{D}, \mathcal{P})$} encapsulates a schema {\small $\mathcal{S}$}, a set of relational dependencies {\small $\mathbf{D}$}, and optionally parameters {\small $\mathcal{P}$}. For all canonical variables $[I].Y$ in the schema, the parameters {\small $\mathcal{P}$} consists of distributions $P([I].Y| pa([I].Y, \mathbf{D}))$, where $pa([I].Y, \mathbf{D}):=\{P.X| P.X \rightarrow [I].Y \in \mathbf{D}\}$ is a set of relational variables that are causes of $[I].Y$ in $\mathbf{D}$.
}\\
{
A graphical model constructed to reflect dependencies $\mathbf{D}$ in the RCM is invalid because it fails to capture all conditional independence in the data. RCM only defines a template for how dependencies should be applied to data instantiations. A ground graph is a directed graphical model with nodes as attributes of entity instances (like {\small \texttt{Bob.Aff}} and {\small \texttt{Ava.St}})  and relationship instances (like {\small \texttt{<Bob-Ava>.Dur}}), and edges as dependencies between the nodes applied from RCM. An acyclic ground graph has the same semantics as a standard graphical model~\cite{getoor-isrl07}, and it facilitates reasoning about the conditional independencies of attribute instances using standard d-separation~\cite{pearl-book88}. However, it is more desirable to reason about dependencies that generalize across all ground graphs by using expressive relational variables and abstract concepts. A higher-order representation known as an \textbf{\textit{abstract ground graph}} (AGG)~\cite{maier-thesis14,lee-thesis18} has been developed for this purpose. An AGG is a graphical model with relational variables as nodes, and the edges include dependencies $\mathbf{D}$ in the RCM and additional extended dependencies to capture all conditional independence. For example, dependency \texttt{[U,F,U].Aff}$\rightarrow$\texttt{[U].St}, suggesting influence from immediate friends, is extended to other dependencies like \texttt{[U].Aff}$\rightarrow$\texttt{[U,F,U].St} and \texttt{[U,F,U,F,U].Aff}$\rightarrow$\texttt{[U,F,U].St}. These extended dependencies are automatically identified by the \textit{extend operator}, which is described in the Appendix. AGG is theoretically shown to be sound and complete in abstracting all conditional independence in the ground graph for social networks~\cite{maier-thesis14}.
}
{
AGG can be used for causal reasoning, including using d-separation and do-calculus similar to SCM~\cite{pearl-book09} for expressing counterfactual term in Equation \ref{eq:dir_eff} in terms of observational or experimental distribution and identifying heterogeneous contexts $\mathcal{Z}$ in terms of relational variables.
}

\subsection{Network Structural Causal Model and Network Abstract Ground Graph (NAGG)}\label{sec:nscm-detail}

Here, we formally see the new theoretical properties due to Definitions \ref{def:skeleton} and \ref{def:ex_ind} first. Then, we present NSCM and NAGG in more detail including the theoretical guarantees of soundness and completeness. Definitions \ref{def:skeleton} and \ref{def:ex_ind} distinguish between relationship instances {\small $\sigma(R)$} and edges {\small $\mathcal{E}$}. \note{As depicted in Figure \ref{fig:eg2-paths}}, the relationship instances, \note{indicated by diamond shapes}, connect all nodes completely, but the {\small $Exists$} attribute indicates the presence of an edge.
\newchange{Next, we examine how these extended definitions alter RCM concepts like relational paths and their terminal sets and highlight some theoretical properties.}

We refer the network {\small $G(\mathcal{V}, \mathcal{E})$} as a single entity type and single relationship type (SESR) model~\cite{maier-thesis14} with a schema {\small $\mathcal{S} = (E,R, \mathcal{A})$}.
Formally, a \textit{terminal set} for a relational path $P=[I_j,...,I_k]$ and a base item {\small $i_j\in \sigma(I_j)$}, denoted by {\small $P|_j^\sigma$}, is a set of terminal items {\small $\mathbf{i_k} \subseteq \sigma(I_k)$} reachable after traversal according to \textit{bridge burning semantics} (BBS)~\cite{maier-thesis14} in a skeleton {\small $\sigma$}. \note{As shown in Figure \ref{fig:eg2-paths}}, BBS traverses {\small $\sigma$} along {\small $P$} in breadth-first order, beginning at {\small $i_j$}, to obtain terminal set {\small $P|_j^\sigma$} as tree leaves at level {\small $|P|$}. A relational path {\small $P$} is valid if its terminal set for any skeleton is non-empty, i.e. {\small $\exists_{\sigma \in \Sigma_{\mathcal{S}}},\exists_{j \in \sigma(I)},P|_j^\sigma \ne \emptyset$}.
The proofs are presented in \S \ref{appendix:proof} and are based on Definitions \ref{def:ex_ind} and \ref{def:skeleton} and BBS exploration. Lemma \ref{lemma:int} allows us to rule out the complexity of intersecting paths~\cite{maier-thesis14,lee-uai15} in the SESR model.
{As illustrated in Figure \ref{fig:eg2-paths}, the entire network is represented from the perspective of users by the terminal sets of paths {\small $[E],[E,R],[E,R,E],$} and {\small $[E,R,E,R]$} mapping to an ego, the ego's relationships, other nodes, and relationships between other nodes. This differs from RCM's SESR model~\cite{arbour-kdd16}, where relational paths can be infinitely long and capture entity or relationship instances at n-hops. In our case, we allow n-hop relations to change due to interventions or over time, and it is reflected in the {\small $Exists$} attribute of terminal sets of path {\small $[E, R]$} and {\small $[E, R, E, R]$}.

\textbf{}{Network structural causal model (NSCM)}.
A \textit{network structural causal model} {\small $\mathcal{M}(\mathcal{S}, \mathbf{D}, \mathbf{f})$} encapsulates a schema {\small $\mathcal{S}$}, a set of relational dependencies {\small $\mathbf{D}$}, and a set of functions {\small $\mathbf{f}$}. The functions relate each canonical variable {\small $[I].Y$} to its causes, i.e., {\small $[I].Y = f_{Y}(pa([I].Y, \mathbf{D}), \epsilon_Y)$}, where {\small $f_{Y} \in \mathbf{f}$}, {\small $pa([I].Y, \mathbf{D}):=\{P.X|P.X \rightarrow [I].Y \in \mathbf{D}\}$}, and {\small $\epsilon_Y$}} is an exogenous noise.
Generally, the functions {\small $\mathbf{f}$} in the model {\small $\mathcal{M}$} are unknown and estimated from the data. An NSCM to encode that a user's and her friends' affiliations affect the user's stance needs three dependencies {\small \texttt{[U].Aff}$\rightarrow$\texttt{[U].St}}, {\small \texttt{[U,F,U].Aff}$\rightarrow$\texttt{[U].St}}, and {\small \texttt{[U,F].Ex}$\rightarrow$\texttt{[U].St}}. 


NSCM, like RCM, defines a template for how dependencies should be applied to data instantiations. A realization of NSCM {\small $\mathcal{M}$} with a relational skeleton {\small $\sigma$} is referred to as a ground graph {\small $GG_{\mathcal{M}\sigma}$}~\cite{maier-thesis14} (definition in \S \ref{appendix:proof}). It is a directed graph with vertices as attributes of entity instances (like {\small \texttt{Bob.Aff}} and {\small \texttt{Ava.St}})  and relationship instances (like {\small \texttt{<Bob-Ava>.Dur}}), and edges as dependencies between them.
An acyclic ground graph has the same semantics as a standard graphical model~\cite{getoor-isrl07}, and it facilitates reasoning about the conditional independencies of attribute instances using standard d-separation~\cite{pearl-book88}. However, it is more desirable to reason about dependencies that generalize across all ground graphs by using expressive relational variables and abstract concepts. A higher-order representation known as an abstract ground graph (AGG)~\cite{maier-thesis14,lee-thesis18} has been developed for this purpose. Relational d-separation~\cite{maier-thesis14} is the standard d-separation applied on AGG.




\textbf{}{Network abstract ground graph (NAGG)}.
A network abstract ground graph ({\small $NAGG_{\mathcal{M}B}$}), for the NSCM {\small $\mathcal{M}$} and a perspective {\small $B \in I$}, is a representation to enable reasoning about relational d-separation consistent across all ground graphs using relational variables ({\small $RV$}) and edges between them ({\small $RVE$}). An {\small $RV$} is a set of all relational variables of the form {\small $[B,...,I_k].X$}, with {\small $|RV|$} dependent on {\small $B$} for the SESR model (Proposition \ref{prop:paths}). 
An {\small $RVE\subset RV \times RV$} is a set of edges between pairs of relational variable obtained after applying \textit{extend}~\cite{maier-thesis14} operator on relational dependencies, i.e., {\small $RVE=\{[B,...,I_k].X\rightarrow[B,...,I_j].Y|[I_j,...,I_k].X \rightarrow [I_j].Y \in \mathbf{D}\}$}, where  {\small $[B,...,I_k] \ \in extend([B,...,I_j], [I_j,...,I_k])$} and {\small $[B,...,I_k]$} is a valid path (Proposition \ref{prop:paths}).

A relational model is acyclic if its \textit{class dependency graph} with vertices as attribute classes and edges as dependencies between them 
is acyclic~\cite{getoor-isrl07}.
For simplicity of exposition, we will assume the NSCM model {\small $\mathcal{M}(\mathcal{S}, \mathbf{D})$} is acyclic.
The relational d-separation queries can be answered by applying standard d-separation on {\small $NAGG_{\mathcal{M}B}$} with vertices {\small $RV$} and edges {\small $RVE$}. We extend the theoretical properties of AGG~\cite{maier-thesis14} to NAGG with relationship existence uncertainty, latent attributes, and selection attributes. Let sets {\small $\mathbf{L}$} and {\small $\mathbf{S}$} also include latent and selection relational variables. The proofs, presented in \S \ref{appendix:proof}, follow the ~\citet{maier-thesis14}'s proof of soundness and completeness considering the implications of selection and latent variables.
\begin{theorem}\label{th:abstraction}
For every acyclic NSCM {\small $\mathcal{M}$} and perspective {\small $B \in \{E,R\}$}, the {\small $NAGG_{\mathcal{M}B}$} is sound and complete for all ground graphs {\small $GG_{M\sigma}$} with {\small $\sigma \in \Sigma_{\mathcal{S}}$}.
\end{theorem}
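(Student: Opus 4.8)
The plan is to mirror Maier's proof of soundness and completeness for the AGG~\cite{maier-thesis14} and to layer onto it the three new ingredients of the NSCM: relationship existence uncertainty, latent attributes, and selection attributes. First I would fix a perspective $B$ and make precise the abstraction map that sends each relational variable $[B,\dots,I_k].X \in RV$ to the instance-level set $\{i.X : i \in [B,\dots,I_k]|_b^\sigma\}$ of vertices in a ground graph $GG_{\mathcal{M}\sigma}$ for a base item $b$. The central structural claim is that $RVE$, obtained by applying the \emph{extend} operator to the explicit dependencies $\mathbf{D}$ and the implicit dependencies $\mathbf{D_i}$, captures exactly the arcs that can appear between such instance sets across all skeletons; the $Exists$ attribute and its implicit dependencies are ordinary attributes and arcs, so they are folded into $RV$ and $RVE$ with no special treatment, and the interpretation of soundness/completeness is the usual one in terms of $d$-separation.

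For soundness I would show that every arc of every ground graph projects to an arc of the NAGG: if $i.X \to j.Y$ appears in $GG_{\mathcal{M}\sigma}$ with $i,j$ reached along relational paths from $b$, then the corresponding relational variables are joined by an edge in $RVE$ produced by \emph{extend}. Consequently any $d$-connecting path in a ground graph projects to a $d$-connecting path in the NAGG, so $d$-separation in $NAGG_{\mathcal{M}B}$ forces $d$-separation in every $GG_{\mathcal{M}\sigma}$. Here the SESR structure does real work: by Lemma~\ref{lemma:int} terminal sets do not intersect and by Proposition~\ref{prop:paths} paths have bounded length, so the projection is well defined and the intersecting-path subtleties that complicated the general RCM argument~\cite{lee-uai15} simply do not arise.

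Completeness is the direction I expect to be the main obstacle. Given an active path in $NAGG_{\mathcal{M}B}$ between $\mathbf{X}$ and $\mathbf{Y}$ relative to a conditioning set $\mathbf{Z}$, I would construct a witnessing skeleton $\sigma$ and base item $b$ whose ground graph contains a corresponding active instance-level path; non-intersection of terminal sets guarantees that the relational variables along the abstract path can be realized by distinct instances without forcing unintended coincidences, and the bounded path length keeps the construction finite. The new ingredients enter as constraints on admissible conditioning sets rather than on graph structure: latent relational variables in $\mathbf{L}$ may never be placed in $\mathbf{Z}$, while selection relational variables in $\mathbf{S}$ must always be adjoined to $\mathbf{Z}$. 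Since conditioning on a selection variable (a collider or its descendant) can activate a path, I would verify two bookkeeping facts that make these semantics commute with the abstraction: a selection node is active in the NAGG path exactly when its instance counterpart is active in the ground-graph path, and excluding $\mathbf{L}$ removes no abstract separator that was available at the instance level. Once these are in place, Maier's instantiation argument, simplified by Lemma~\ref{lemma:int} and Proposition~\ref{prop:paths}, yields the witnessing ground graph and completes the proof.
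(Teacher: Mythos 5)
Your proposal is correct in its essentials but takes a genuinely different route from the paper, and there are two scope mismatches worth flagging. The paper's Theorem~\ref{th:abstraction} is a purely edge-level statement: soundness there means every arc of $NAGG_{\mathcal{M}B}$ is realized as an arc $i_k.X \rightarrow i_j.Y$ in some ground graph, and completeness means every ground-graph arc, in every skeleton, has corresponding arcs in $RVE$; the $d$-separation equivalence that you take as the meaning of soundness/completeness is a separate result, Theorem~\ref{th:rel-d-sep}, which the paper proves on top of Theorem~\ref{th:abstraction}. Consequently your labels are swapped relative to the paper's: what you call soundness (ground arcs project to NAGG arcs, so NAGG $d$-separation transfers down to all ground graphs) is the paper's \emph{completeness} direction, and your witnessing-skeleton construction for active NAGG paths subsumes the paper's \emph{soundness} direction together with the completeness half of Theorem~\ref{th:rel-d-sep}. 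Your merged, path-level argument is viable --- in this SESR setting the relationship skeleton is complete by Definition~\ref{def:skeleton}, so witnessing skeletons are cheap, and you correctly identify Lemma~\ref{lemma:int} and Proposition~\ref{prop:paths} as the facts that neutralize the intersecting-path pathologies behind the critique of \citet{lee-uai15} --- but it hides the technical core that the paper makes explicit: Lemma~\ref{lemma:extend}, which verifies both the sufficient condition ($i_k \in [B,\dots,I_k]|_b$, via a case analysis on the pivot in the definition of \textit{extend}, where the pivot case $c>1$ needs the fully connected relationship skeleton to traverse from $i_c$ to $i_j$ and back) and the necessary condition (the realized $i_k$ lies in the terminal set of no other path from $b$, which is precisely where BBS and Lemma~\ref{lemma:int} do their work), together with Claim~\ref{claim:2} showing \textit{extend} exhausts all relational-variable pairs that a ground arc can induce. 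Your ``central structural claim'' that $RVE$ captures exactly the realizable arcs is asserted rather than decomposed into these two halves, so an executed write-up would still have to supply essentially that lemma. Finally, note that latent and selection attributes play no role in the paper's proof of Theorem~\ref{th:abstraction}: they are pure constraints on conditioning sets and enter only in Theorem~\ref{th:rel-d-sep}, so your bookkeeping about $\mathbf{L}$ and $\mathbf{S}$ is sound but belongs to the later theorem, not this one.
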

\begin{corollary}\label{th:acyclicity}
For every acyclic NSCM {\small $\mathcal{M}$} and perspective {\small $B \in \{E,R\}$}, the {\small $NAGG_{\mathcal{M}B}$} is directed and acyclic.
\end{corollary}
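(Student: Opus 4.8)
The plan is to reduce acyclicity of $NAGG_{\mathcal{M}B}$ to acyclicity of the underlying class dependency graph via a projection onto attribute classes. Since $\mathcal{M}$ is acyclic, by definition its class dependency graph $G_{CD}$ — whose vertices are the attribute classes and whose arcs are the attribute-level projections of the dependencies in $\mathbf{D}$, together with the implicit dependencies $\mathbf{D_i}$ and any latent, selection, and time-lagged dependencies — is a DAG. First I would fix a topological ordering $\prec$ of the attribute classes consistent with $G_{CD}$, so that $X \prec Y$ whenever $X \to Y$ is an arc of $G_{CD}$.

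The crucial observation is that the \emph{extend} operator changes only the relational paths, never the attributes or the orientation: by the definition of $RVE$, every arc $[B,\dots,I_k].X \to [B,\dots,I_j].Y$ of the NAGG arises from a dependency $[I_j,\dots,I_k].X \to [I_j].Y \in \mathbf{D}$ with the same ordered attribute pair $(X,Y)$, and (by the definition of a relational dependency) $X \ne Y$. Hence the map sending each relational variable $P.X$ to its attribute $X$ carries every arc of $NAGG_{\mathcal{M}B}$ to an arc $X \to Y$ of $G_{CD}$, i.e.\ to a strict increase $X \prec Y$ in the topological order. Consequently, along any directed path in the NAGG the attribute strictly increases with respect to $\prec$, independently of whether the path grows ``outward'' or shrinks ``inward'' in hop length.

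From this monotonicity acyclicity is immediate: a directed cycle in $NAGG_{\mathcal{M}B}$ would return to its starting relational variable, hence to its starting attribute, forcing $X \prec X$ for that attribute, a contradiction. Equivalently, the attribute projection of such a cycle would be a closed walk of positive length in $G_{CD}$, which a DAG cannot contain. Directedness holds by construction, since $RVE$ consists of oriented arcs, so $NAGG_{\mathcal{M}B}$ is a directed acyclic graph for both perspectives $B \in \{E,R\}$.

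The main obstacle I anticipate is bookkeeping rather than conceptual: I must verify that every kind of arc that can appear in $RVE$ — explicit dependencies, the implicit dependencies $[R].Exists \to [R].Y$, the latent-confounding and selection arcs $P.L_i \to [I].Y$ and $P.X \to [I].S_i$, and the time-lagged contagion arcs $[I].Y_L \to [I].Y$ — projects to an arc of $G_{CD}$ already covered by the acyclicity hypothesis. The contagion case is the delicate one and is acyclic precisely because $Y_L$ is a distinct time-lagged (latent) attribute from $Y$, so $Y_L \prec Y$ holds and no self-loop on $Y$ is created; I would therefore state explicitly that acyclicity of $\mathcal{M}$ is taken over the full dependency set, including $\mathbf{D_i}$ and the latent attributes. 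This argument parallels Maier's proof of acyclicity of the AGG, now extended to the enriched NSCM dependency structure, and is consistent with the abstraction guarantee of Theorem \ref{th:abstraction}, under which each ground graph $GG_{\mathcal{M}\sigma}$ is itself acyclic for every $\sigma \in \Sigma_{\mathcal{S}}$.
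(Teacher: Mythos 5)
Your proof is correct, and it rests on the same mathematical fact as the paper's, but you arrive at it by a genuinely more self-contained route. The paper's proof of Corollary~\ref{th:acyclicity} is a two-step reduction: it invokes Theorem~\ref{th:abstraction} to argue that the NAGG with relationship-existence uncertainty correctly abstracts all ground graphs of the acyclic model, observes that latent and selection attributes are merely ``special roles'' assigned to attributes, and then cites \citet{maier-thesis14}'s Theorem~4.5.3 to conclude acyclicity. You instead reprove the content of that cited theorem directly: since every arc of $RVE$ is, by construction, of the form $[B,\dots,I_k].X \to [B,\dots,I_j].Y$ arising from a dependency $[I_j,\dots,I_k].X \to [I_j].Y$ with the attribute pair $(X,Y)$, $X \ne Y$, preserved by \emph{extend}, the attribute projection maps any directed NAGG cycle to a closed positive-length walk in the class dependency graph, contradicting its acyclicity. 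What your approach buys is independence from Theorem~\ref{th:abstraction}: acyclicity is a structural property of the NAGG itself, and the soundness/completeness of the abstraction is not logically needed for it, so your argument is cleaner in its dependencies. Your explicit bookkeeping over the implicit dependencies $\mathbf{D_i}$ (where $Exists' = \mathcal{A}(R)\setminus Exists$ guarantees no self-loop), the latent/selection arcs, and the time-lagged contagion arcs $Y_L \to Y$ is also more careful than the paper's ``special roles'' hand-wave, and your caveat that acyclicity of $\mathcal{M}$ must be read over the full dependency set is exactly the right condition to state.

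One small observation worth making explicit in your writeup: your projection argument covers \emph{all} arcs of the NAGG only because, in this SESR setting, Lemma~\ref{lemma:int} rules out intersection variables, so every arc of $RVE$ is of the extend form you analyze. In Maier's general AGG, arcs incident to intersection variables require separate treatment in the acyclicity proof, and your argument tacitly relies on their absence here; citing Lemma~\ref{lemma:int} at that point would close the loop.
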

\begin{theorem}\label{th:rel-d-sep}
Relational d-separation is sound and complete for NAGG. Let {\small $\mathcal{M}$} be an acylic NSCM and let {\small $\mathbf{X}$}, {\small $\mathbf{Y}$}, and {\small $\mathbf{Z}|\mathbf{L} \not\in \mathbf{Z} \wedge \mathbf{S} \in \mathbf{Z} \wedge \mathbf{L} \cup \mathbf{S} \in \mathcal{A}$} be three distinct sets of relational variables for perspective {\small $B \in \{E,R\}$} defined over schema {\small $\mathcal{S}$}. Then, {\small $\mathbf{X}$} and {\small $\mathbf{Y}$} are d-separated by {\small $\mathbf{Z}$} on the {\small $NAGG_{\mathcal{M}B}$} if and only if for all skeletons {\small $\sigma \in \Sigma_{\mathcal{S}}$} and for all {\small $b \in \sigma(B)$}, {\small $\mathbf{X}|_b$} and {\small $\mathbf{Y}|_b$} are d-separated by {\small $\mathbf{Z}|_b$} in ground graph {\small $GG_{\mathcal{M}\sigma}$}.
\end{theorem}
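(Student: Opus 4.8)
The plan is to derive Theorem~\ref{th:rel-d-sep} from the abstraction guarantee of Theorem~\ref{th:abstraction} by a path-lifting argument, proving each direction of the biconditional by contraposition. The essential device is that $NAGG_{\mathcal{M}B}$ soundly and completely abstracts the family of ground graphs $\{GG_{\mathcal{M}\sigma}\}_{\sigma \in \Sigma_{\mathcal{S}}}$: every edge in a ground graph projects to an edge of $NAGG_{\mathcal{M}B}$ between the relational variables whose terminal sets contain the two instances, and conversely every edge of $NAGG_{\mathcal{M}B}$ is realized by an edge in some ground graph. Corollary~\ref{th:acyclicity} guarantees both levels are acyclic, so d-separation is well defined. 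Throughout, the hypotheses $\mathbf{L}\cap\mathbf{Z}=\emptyset$ and $\mathbf{S}\subseteq\mathbf{Z}$ are what make the collider/non-collider status of a node transfer between the two levels: latent variables are never available to block a path, and selection variables are always conditioned, so a ground-graph collider that activates a path has the matching activating node in $NAGG_{\mathcal{M}B}$ and vice versa.

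For soundness (NAGG d-separation $\Rightarrow$ ground-graph d-separation for every $\sigma$ and every $b\in\sigma(B)$), I would argue the contrapositive. Suppose some ground graph $GG_{\mathcal{M}\sigma}$ contains a path d-connecting $\mathbf{X}|_b$ and $\mathbf{Y}|_b$ given $\mathbf{Z}|_b$. Projecting this path node-by-node through the sound abstraction map of Theorem~\ref{th:abstraction} yields a walk in $NAGG_{\mathcal{M}B}$ between $\mathbf{X}$ and $\mathbf{Y}$. The key step is to show this walk contains a simple d-connecting path given $\mathbf{Z}$: each non-collider of the ground-graph path projects to a non-collider not in $\mathbf{Z}$ (using $\mathbf{L}\cap\mathbf{Z}=\emptyset$ for latent nodes), and each active collider projects to a collider whose conditioned descendant lies in $\mathbf{Z}$ (using $\mathbf{S}\subseteq\mathbf{Z}$). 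Lemma~\ref{lemma:int} is what keeps this clean: because terminal sets of distinct paths from a common base do not intersect in the SESR model, the projected walk cannot self-intersect in a way that would collapse distinct relational variables, so it reduces to an honest d-connecting path, contradicting NAGG d-separation.

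For completeness I would again use the contrapositive: if $\mathbf{X}$ and $\mathbf{Y}$ are d-connected given $\mathbf{Z}$ in $NAGG_{\mathcal{M}B}$, exhibit one skeleton $\sigma$ and one base item $b$ for which the corresponding instances are d-connected in $GG_{\mathcal{M}\sigma}$. Starting from a d-connecting path $\rho$ in the NAGG, I would invoke the completeness half of Theorem~\ref{th:abstraction} to realize every edge of $\rho$ simultaneously, building $\sigma$ (with appropriately assigned $Exists$ attributes along paths $[E,R]$ and $[E,R,E,R]$) so that relational-path traversal under bridge-burning semantics lays the edges of $\rho$ end to end into a single path of instances rooted at $b$. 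For every collider on $\rho$ whose descendant is in $\mathbf{Z}$, the construction must include a conditioned instance of that descendant, and for every non-collider it must avoid placing the instance in $\mathbf{Z}|_b$; the constraints $\mathbf{S}\subseteq\mathbf{Z}$ and $\mathbf{L}\cap\mathbf{Z}=\emptyset$ guarantee these requirements are mutually consistent. Bounded path length (Proposition~\ref{prop:paths}) keeps the skeleton finite, and Lemma~\ref{lemma:int} guarantees the realized instances remain distinct so the path cannot short-circuit.

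I expect the completeness direction's skeleton construction to be the main obstacle, specifically reconciling the $Exists$ semantics with terminal-set traversal. Unlike Maier's setting, an edge of $\rho$ that passes through the relationship class carries an existence constraint, so realizing $\rho$ is not merely a matter of adding relationship instances but of assigning $Exists$ values consistently with both the implicit dependencies $\mathbf{D_i}$ and the bridge-burning traversal that defines the terminal sets. Handling latent and selection variables adds bookkeeping but follows Maier's template once the abstraction soundness and completeness of Theorem~\ref{th:abstraction} are in hand; the genuinely new work is ensuring the witnessing skeleton respects relationship-existence uncertainty while still instantiating $\rho$ as a single d-connecting path.
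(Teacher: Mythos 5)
Your proposal is correct and takes essentially the same route as the paper: both directions are proved by contradiction, lifting a d-connecting path between $GG_{\mathcal{M}\sigma}$ and $NAGG_{\mathcal{M}B}$ via the abstraction guarantee of Theorem \ref{th:abstraction}, with the constraints $\mathbf{L}\cap\mathbf{Z}=\emptyset$ and $\mathbf{S}\subseteq\mathbf{Z}$ ensuring that conditioning status transfers between the two levels, and with the residual detail delegated to Maier's template exactly as the paper does. One minor label swap: in your soundness direction the half of Theorem \ref{th:abstraction} actually invoked is its \emph{completeness} (every ground-graph edge is captured in the NAGG), not its soundness; your mechanism is nonetheless the paper's, and your explicit skeleton construction for the completeness direction merely fills in what the paper asserts by citing the soundness of the abstraction.
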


With these theoretical properties, NAGG enables reasoning about the identification of causal effects in networks using counterfactual reasoning with the SCM~\cite{pearl-book09}.

\subsection{Proofs for NSCM and NAGG} \label{appendix:proof}

The proof for Lemma \ref{lemma:int}, similar to ~\citet{arbour-kdd16}, is as follows:
\begin{proof}
Proof by contradiction. Let $P=[I,...,I_k]$ and $Q=[I,...,I_l]$ be two distinct relational paths sharing base item class $I \in \{E,R\}$. Assume for contradiction that $P$ and $Q$ intersect such that their terminal sets $P|_{i\in \sigma(I)} \cap Q|_{i\in \sigma(I)} \ne \phi$. Since we know both $P$ and $Q$ share base item class and are two different paths, one path is a prefix of another path for the SESR model. According to the bridge burning semantics, terminal set of prefix path is not visited again by the other path. This contradicts with the previous assumption. Therefore $P$ and $Q$ do not intersect.
\end{proof}

The proof for Proposition \ref{prop:paths} is as follows:
\begin{proof}
Let $\mathcal{S}=(E,R,\mathcal{A})$ be a relational schema for a single-entity and single-relationship (SESR) model with arbitrary skeleton $\sigma \in \Sigma_{\mathcal{S}}$. For enabling relationship existence uncertainty, Definition \ref{def:skeleton} defines relational skeleton in terms of entity instances $\sigma(E)$ and fully-connected relationship instances $\sigma(R) = \Sigma_{\sigma(E)}$. Let $N=|\sigma(E)|$, then there are $|\sigma(R)| = \frac{N\times N-1}{2}$ relationship instances connecting each entity instance with other entities instances. Let us consider an arbitrary path $P=[I_j,...,I_k]$ to be traversed under bridge burning semantics (BBS).
\begin{claim}\label{claim:pathE}
$P|I_j \in E$ has a maximum length of valid relational path of four.
\end{claim}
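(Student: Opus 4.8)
The plan is to carry out the bridge-burning breadth-first traversal explicitly, starting from an arbitrary base entity $i_0 \in \sigma(E)$, and to keep a running tally of which items (both entity and relationship instances) have been burned at each depth. The essential structural fact I would exploit is Definition \ref{def:skeleton}: relationship existence uncertainty forces $\sigma(R) = \Sigma_{\sigma(E)}$ to be the complete graph on the $N = |\sigma(E)|$ entities, so $i_0$ is joined to every other entity by a relationship instance. This completeness is precisely what distinguishes the SESR model considered here from a classical RCM skeleton, and it is what caps the maximum valid length at a finite value.

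First I would enumerate the three lowest levels. At depth $0$ the terminal set of $[E]$ is $\{i_0\}$. At depth $1$ (path $[E,R]$) BBS reaches every relationship instance incident to $i_0$, namely $\{r_{0j} : j \neq 0\}$, which by completeness is all $N-1$ of them. At depth $2$ (path $[E,R,E]$) each $r_{0j}$ has endpoints $i_0$ and $i_j$; since $i_0$ was already burned at depth $0$, BBS yields exactly $\sigma(E)\setminus\{i_0\}$. The crux of the argument, which I would isolate as a standalone observation, is that by the end of depth $2$ \emph{every} entity instance has been burned: $i_0$ at depth $0$ together with all remaining $N-1$ entities at depth $2$.

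Next I would extend the traversal one level to confirm that length four is attainable and length five is not. At depth $3$ (path $[E,R,E,R]$), from each $i_j$ ($j \neq 0$) we reach its incident relationship instances $\{r_{jk} : k \neq j\}$, discarding $r_{0j}$ which was burned at depth $1$; the surviving terminal set is $\{r_{jk} : j,k \neq 0,\ j \neq k\}$, the relationship instances internal to $\sigma(E)\setminus\{i_0\}$, which is non-empty whenever $N \geq 3$. Hence $[E,R,E,R]$ is a valid path of length four. At depth $4$ (path $[E,R,E,R,E]$) each such $r_{jk}$ has entity endpoints $i_j,i_k$ with $j,k \neq 0$, but both were already burned at depth $2$; by BBS no entity may be revisited, so the terminal set is empty. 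Since the saturation of the entity set by depth $2$ holds for \emph{every} skeleton (it uses only completeness, not any particular edge configuration), the length-five path has empty terminal set for all $\sigma \in \Sigma_{\mathcal{S}}$ and is therefore invalid, establishing that four is the maximum valid length for a base entity class.

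The main obstacle I anticipate is the careful bookkeeping of the burned set across depths: making it rigorous that BBS burns both entity and relationship instances, and that it is the completeness of $\sigma(R)$ that saturates the entity set precisely at depth $2$. A secondary point requiring care is the distinction between validity quantifiers — a path is valid if its terminal set is non-empty for \emph{some} skeleton, so I must verify that the length-five emptiness is not an artifact of a particular $\sigma$ but holds uniformly, which is exactly why I frame the saturation step as skeleton-independent.
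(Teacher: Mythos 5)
Your proof is correct and follows essentially the same route as the paper's: a level-by-level BBS enumeration of terminal sets ($1$ entity, then $N-1$ relationship instances, then the remaining $N-1$ entities, then the remaining relationship instances) using the completeness of $\sigma(R)$ from Definition~\ref{def:skeleton}, concluding that $[E,R,E,R,E]$ has an empty terminal set. Your explicit handling of the validity quantifier (emptiness holding uniformly over skeletons, since every skeleton is complete by construction) and the $N \geq 3$ condition is slightly more careful than the paper's version, but it is the same argument.
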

\begin{proof}
As illustrated in Figure \ref{fig:eg1-sn}, the terminal set for $P|_{i_j\in \sigma(E)}$ has 1 entity instance for path $E$, $N-1$ relationship instances for path $[E,R]$, $N-1$ entity instances for path $[E,R,E]$, and remaining $|\sigma(R)|-(N-1)$ relationship instances for path $[E,R,E,R]$. By this point all entity and relationship instances are visited and path $[E,R,E,R,E]$ produces an empty terminal set, making it an invalid path.
\end{proof}
\begin{claim}\label{claim:pathR}
$P|I_j \in R$ has a maximum length of valid relational path of five.
\end{claim}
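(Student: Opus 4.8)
The plan is to mirror the argument for Claim \ref{claim:pathE}, now starting the bridge-burning breadth-first traversal from a base relationship instance rather than an entity. Fix an arbitrary skeleton $\sigma \in \Sigma_{\mathcal{S}}$ with $N = |\sigma(E)|$ entities and the complete set of $|\sigma(R)| = \binom{N}{2}$ relationship instances guaranteed by Definition \ref{def:skeleton}. Let the base item be a relationship instance $r = \langle e_1, e_2 \rangle$. I would then track, level by level, the terminal set obtained at each depth, invoking Lemma \ref{lemma:int} (equivalently, the no-revisiting rule of BBS) to guarantee that instances burned at a lower level are never re-collected, so that the terminal sets at successive depths partition the finite sets $\sigma(E)$ and $\sigma(R)$.

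Concretely, the level counts I expect are: path $[R]$ yields the single base instance; $[R,E]$ yields its two endpoints $\{e_1,e_2\}$; $[R,E,R]$ yields the $2(N-2)$ relationship instances incident to $e_1$ or $e_2$ other than $r$ (the two families $\langle e_1, e_k\rangle$ and $\langle e_2, e_k\rangle$ for $k \ge 3$ being disjoint and excluding $r$); $[R,E,R,E]$ yields the remaining $N-2$ entities $\{e_3,\dots,e_N\}$, at which point all $N$ entities have been visited; and $[R,E,R,E,R]$ yields the $\binom{N-2}{2}$ relationship instances among $\{e_3,\dots,e_N\}$, which is exactly the complement of the previously burned relationships since $1 + 2(N-2) + \binom{N-2}{2} = \binom{N}{2}$. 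For $N \ge 4$ this last set is nonempty, so a path of length five is valid; and since every entity is already burned by depth four, the next alternating step $[R,E,R,E,R,E]$ has an empty terminal set for every skeleton, so no valid path of length six (or more) can exist. This establishes that the maximum valid length is exactly five.

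The main obstacle I anticipate is the careful bookkeeping under BBS rather than any deep idea: I must verify that the two relationship families reached at depth three are genuinely disjoint and omit $r$, and that the depth-five terminal set is precisely the complement of size $\binom{N-2}{2}$ via the arithmetic identity above, so that nothing is double counted and nothing is missed. I would also be explicit that validity (the nonemptiness criterion of Definition \ref{def:skeleton}) is existential over skeletons, so exhibiting a single $\sigma$ with $N \ge 4$ suffices to show that length five is attainable, whereas the emptiness argument for length six must hold for all $\sigma$. Combined with Claim \ref{claim:pathE}, this completes the proof of Proposition \ref{prop:paths}.
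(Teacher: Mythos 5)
Your proposal is correct and takes essentially the same route as the paper's proof of Claim \ref{claim:pathR}: a level-by-level bridge-burning breadth-first count from a base relationship instance, yielding terminal sets of sizes $1$, $2$, $2(N-2)$, $N-2$, and the remaining relationship instances, with the depth-six path $[R,E,R,E,R,E]$ empty in every skeleton because all entities are burned by depth four. Your extra bookkeeping --- the disjointness of the two depth-three families, the identity $1+2(N-2)+\binom{N-2}{2}=\binom{N}{2}$, and the observation that validity is existential over skeletons while invalidity must hold universally --- only makes explicit what the paper leaves implicit.
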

\begin{proof}
Similar to Claim \ref{claim:pathE}, the terminal set for $P|_{i_j \in \sigma(R)}$ has 1 relationship instance for path $R$ and two entity instances for path $[R,E]$. Next, the terminal set for path $[R,E,R]$ consists of $2\times N-2$ relationship instances. The terminal sets from previous level connect to remaining $N-2$ entity instances for path $[R,E,R,E]$. Finally, the terminal set for path $[R,E,R,E,R]$ traverses remaining relationship instances. By this point all entity and relationship instances are visited and path $[R,E,R,E,R,E]$ produces empty terminal set, making it an invalid path.
\end{proof}
\end{proof}

Here, we present formal definition of ground graph following \citet{maier-thesis14}'s definition.
\begin{definition}[Ground graph]
A \textit{ground graph} $GG_{\mathcal{M}\sigma}(V,E)$ is a realization of the NSCM $\mathcal{M}(\mathcal{S},\mathbf{D},\mathbf{f})$ for skeleton $\sigma$ with nodes $V=\{i.X|i \in \sigma(I) \wedge X \in \mathcal{A}(I)\}$ and edges $E=\{i_k.X \rightarrow i_j.Y|[I_j,...,I_k].X\rightarrow [I_j].Y \in \mathbf{D}\wedge i_k \in [I_j,...,I_k]|_{i_j}^\sigma \wedge \{i_k.X,i_j.Y\}\in V\}$.
\end{definition}

The proof for Theorem \ref{th:abstraction} follows the proof by \citet{maier-thesis14} and is provided below:

\begin{proof}
Let $\mathcal{M(\mathcal{S}, \mathbf{D})}$ be an arbitrary acyclic NSCM and let $B \in \{E,R\}$ be an arbitrary perspective. As a consequence of Lemma \ref{lemma:int} and Proposition \ref{prop:paths}, $NAGG_{\mathcal{M}B}$ does not have intersection variables and contains limited relational variables. Let $P_k = [B,...,I_k]$ and $P_j=[B,...,I_j]$ be two arbitrary valid paths for perspective $B$.

\textbf{Soundness}: To prove the soundness of $NAGG_{\mathcal{M}B}$, we must show that for every edge $P_k.X \rightarrow P_j.Y$ in the $NAGG_{\mathcal{M}B}$, there exist a corresponding edge $i_k.X \rightarrow i_j.Y|\exists_{b \in \sigma(B)}, i_k \in P_k|_b \wedge i_j \in P_j|_b$ in the ground graph $GG_{\mathcal{M}\sigma}$ for some skeleton $\sigma \in \Sigma_{\mathcal{S}}$.
By construction, for the perspective $B=E$, any skeleton with $N \ge 3$, where $N=|\sigma(E)|$, has all possible paths $[E], [E,R], [E,R,E],$ and $[E,R,E,R]$ with non-empty terminal sets, making them valid paths (Proposition \ref{prop:paths}). Similarly, for $B=R$, any skeleton with $N \ge 4$ has all possible paths $[R], [R,E], [R,E,R], [R,E,R,E],$ and $[R,E,R,E,R]$ with non-empty terminal sets, making them valid paths. This shows the existence of $b \in \sigma(B)$ with $i_k \in P_k|_b \wedge i_j \in P_j|_b$ for some skeleton $\sigma \in \Sigma_{\mathcal{S}}$.

Assume for contradiction, there exists no edge $i_k.X \rightarrow i_j.Y$ in any ground graph i.e., $\forall_{\sigma \in \Sigma_{\mathcal{S}}} \forall_{b \in \sigma(B)} \forall_{i_k \in [B,...,I_k]|_b} \forall_{i_j \in [B,...,I_j]|_b}, i_k.X \rightarrow i_j.Y \not\in GG_{\mathcal{M}\sigma}$. However, according to the definition of $NAGG_{\mathcal{M}B}$, if the relational edge $[B,...,I_k].X \rightarrow [B,...,I_j].Y$ is present, then there must be a dependency $[I_j,...,I_k].X \rightarrow [I_j].Y \in \mathbf{D}$ where $[B,...,I_k]$ $\in extend([B,...,I_j],[I_j,...,I_k])$. This implies the ground graphs must have an edge such that: $$\forall_{\sigma \in \Sigma_{\mathcal{S}}} \forall_{i_j \in \sigma(I_j)} \forall_{i_k \in [I_j,...,I_k]|_{i_j}}, i_k.X \rightarrow i_j.Y \in GG_{\mathcal{M}\sigma}.$$
Since $[B,...,I_j]$ is a valid path, we know $\exists_{i_j \in \sigma(I_j)},i_j \in [B,...,I_j]|_b$. To contradict the previous assumption, we prove the following Lemma about sufficient and necessary conditions for the soundness of $extend$~\cite{maier-thesis14} operator:
\begin{lemma}\label{lemma:extend}
If $[B,..,I_k] \in extend([B,...,I_j], [I_j,...,I_k]),$ then the following must be true: $$\exists_{i_j \in [B,...,I_j]|_b}\exists_{i_k \in [I_j,...,I_k]|_{i_j}}, i_k \in [B,...,I_k]|_b \wedge \forall_{S \in \mathbf{P_B}\setminus \{[B,...,I_k]\}}, i_k \not \in S|_b,$$ where $\mathbf{P_B}$ is a set of valid paths for the perspective B.
\end{lemma}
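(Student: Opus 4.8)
The plan is to split the conclusion into an \emph{existence} claim (some $i_j\in[B,\dots,I_j]|_b$ and $i_k\in[I_j,\dots,I_k]|_{i_j}$ witness the composition, with $i_k\in[B,\dots,I_k]|_b$) and a \emph{uniqueness} claim (this $i_k$ lies in no other valid path's terminal set from $b$), and to prove them separately. I would work throughout inside the SESR model, where Proposition~\ref{prop:paths} bounds the set $\mathbf{P_B}$ of valid paths to a small finite collection and Lemma~\ref{lemma:int} guarantees that terminal sets of distinct valid paths from a common base item are pairwise disjoint. The disjointness immediately discharges the uniqueness conjunct, so the real content lies in the existence conjunct.

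For existence, the essential fact is that the extend operator is defined to output precisely those paths $[B,\dots,I_k]$ that admit a bridge-burning-consistent composition of a traversal of $[B,\dots,I_j]$ with a traversal of $[I_j,\dots,I_k]$. I would unfold this definition: membership distinguishes a pivot together with a realizing skeleton in which some $i_j$ reached from $b$ along $[B,\dots,I_j]$ can be continued by $[I_j,\dots,I_k]$ to an instance $i_k$ that survives as a terminal of the (possibly folded) composite path $[B,\dots,I_k]$. Since membership also entails that $[B,\dots,I_k]$ is valid, the argument of Theorem~\ref{th:abstraction}'s soundness proof (any skeleton with $N\ge 3$ for $B=E$, $N\ge 4$ for $B=R$) gives a non-empty $[B,\dots,I_k]|_b$; the bounded path lengths from Proposition~\ref{prop:paths} then turn the construction of the witnessing pair $(i_j,i_k)$ into a finite case analysis over the admissible shapes of $[B,\dots,I_j]$ and $[I_j,\dots,I_k]$.

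The uniqueness conjunct follows in one line from Lemma~\ref{lemma:int}: once $i_k\in[B,\dots,I_k]|_b$, disjointness of terminal sets from the base $b$ forces $i_k\notin S|_b$ for every $S\in\mathbf{P_B}\setminus\{[B,\dots,I_k]\}$. The main obstacle is the existence step's reconciliation of bridge-burning across the two component traversals: because items explored at lower depth are never revisited, the composite traversal of $[B,\dots,I_k]$ can ``burn'' items that a naive concatenation of the two paths would produce, so I must certify that the chosen $i_j$ is not consumed before the suffix traversal begins and that the continuation to $i_k$ coincides with an honest BBS traversal of $[I_j,\dots,I_k]$ restarted at $i_j$. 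This is exactly the point at which the extend operator's definition must be used with care rather than appealed to informally, though the short maximal path lengths in the SESR model keep the interaction finite and checkable.
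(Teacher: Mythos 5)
Your high-level architecture coincides with the paper's proof: the same split into a sufficiency (existence) conjunct and a necessity (uniqueness) conjunct, with the necessity conjunct discharged exactly as the paper does it, by noting that bridge-burning semantics together with Lemma~\ref{lemma:int} make terminal sets of distinct valid paths from a common base pairwise disjoint. That half of your proposal is complete and correct.

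The gap is in the sufficiency half, and it sits precisely at the point you flag but do not resolve. First, you describe membership in $extend$ as distinguishing ``a pivot together with a realizing skeleton,'' but the $extend$ operator is purely syntactic: it slices and concatenates the two paths at pivots where the reversed prefix of $[B,\dots,I_j]$ agrees with a prefix of $[I_j,\dots,I_k]$, and checks validity of the result. Reading semantic witnesses $(i_j,i_k)$ off the definition is therefore close to circular --- the semantic realizability is exactly what the lemma asserts and must be proved. Second, the paper closes this step with a specific structural fact you never invoke: under Definition~\ref{def:skeleton}, the relationship instances $\sigma(R)$ form a \emph{complete} graph on the entity instances. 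In the pivot case $c=1$ (plain concatenation), validity of the composite path plus the intermediate occurrence of $I_j$ yields the witnesses directly; in the folded case $c>1$, where $[B,\dots,I_j]$ and $[I_j,\dots,I_k]$ overlap along $[I_c,\dots,I_j]$, the paper shows $\exists_{i_j \in [I_c,\dots,I_j]|_{i_c}},\ i_c \in [I_j,\dots,I_c]|_{i_j}$, i.e., the back-and-forth traversal through the pivot item is always realizable because every pair of entity instances is joined by a relationship instance. This is what certifies that the chosen $i_j$ is not burned before the suffix traversal begins. Your fallback --- a finite case analysis over the admissible path shapes licensed by Proposition~\ref{prop:paths} --- could in principle substitute for this abstract two-case pivot argument, and the finiteness claim is sound, but you do not carry out any of the cases, and without the complete-skeleton observation it is not evident the folded cases would close. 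As written, the one nontrivial step of the lemma remains an acknowledged obligation rather than a proof.
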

\begin{proof}
(1) Sufficient condition~\cite{lee-thesis18}: $i_k \in [B,...,I_k]|_b$.

The proof follows directly from definition of $extend$ and \citet{maier-thesis14}'s Lemma 4.4.1. Let $P_{o}=[B,...,I_j]$ and $P_{e}=[I_j,...,I_k]$ denote original and extension paths. The $extend$ function is defined as follows:
\begin{equation*}
\begin{split}
        extend(P_o, P_e)=\{P=P_o^{1:n_o-i+1}+P_e^{i+1:n_e}|i\in pivots(reverse(P_o),P_e) \\ \wedge isValid(P)\},
        \text{ with }pivots(P_1,P_2)=\{i|P_1^{1:i},P_2^{1:i}\},
\end{split}
\end{equation*}
where $n_o=|P_o|$, $n_e=|P_e|$, $P^{i:j}$ is i-inclusive and j-inclusive path slicing, $+$ denotes path concatenation, and $reverse$ is a function that reverses the order of a path. Let $c \in pivots(reverse(P_o),P_e)$ be a value obtained after applying the $pivots$ function, then there are two subcases:

(a) $c=1$ and $P=[B,...,I_j,...,I_k]$. Since $P$ is a valid path, the terminal set is not empty i.e., $\exists_{i_k \in \sigma(I_k)}, P|_b$.
As $I_j$ is in the intermediate path, $\exists_{i_j \in [B,...,I_j]|_b}\exists_{i_k \in [I_j,...,I_k]|_{i_j}}, i_k \in P|_b$ for $P|_b$ to be non-empty.

(b) $c>1$ and $P=[B,...,I_c,...,I_k]$. We must have $P_o=[B,...,I_c,...I_j]$ and $P_e=[I_j,...,I_c,...,I_k]$ for pivot $c>1$ to produce $P$.
Similar to case (a), since $P$ is a valid path, the terminal set is not empty i.e., $\exists_{i_k \in \sigma(I_k)}, P|_b$. Also, we know from case (a), $\exists_{i_c \in [B,..,I_c]|_b}\exists_{i_k \in [I_c,...,I_k]|_{i_c}}, i_k \in P|_b$. We then must show that $\exists_{i_j \in [I_c,...,I_j]|_{i_c}}, i_c \in [I_j,...,I_c]|_{i_j}$. The fully connected skeleton with $N=|\sigma(E)|$ entity instances and $|\sigma(R)|=\frac{N\times N-1}{2}$ relationship instances according to Definition \ref{def:skeleton} allows traversal from $i_c$ to $i_j$ and vice versa.

(2) Necessary condition~\cite{lee-thesis18}: $\forall_{S \in \mathbf{P_B}\setminus \{[B,...,I_k]\}}, i_k \not \in S|_b$

\citet{lee-thesis18} points out this necessary condition is required to ensure the parents of $[B,...,I_j].Y$ should not be redundant and propose a method called \textit{newextend} for handling both necessary and sufficient conditions. Here, we show that the $extend$ method satisfies necessary condition as well for $NAGG_{\mathcal{M}B}$. Suppose for contradiction the necessary condition is false. This means the item $i_k$ appears in the terminal sets of $[B,...,I_k]$ (sufficient condition proved before) as well as some other path from the perspective of item $b$.
However, due to the BBS traversal and Lemma \ref{lemma:int}, the same item cannot appear in terminal sets of two different paths, and no paths can intersect. This contradicts the assumption.
\end{proof}

\textbf{Completeness}: To prove the completeness of $NAGG_{\mathcal{M}B}$, we must show that every edge $i_k.X \rightarrow i_j.Y$ in every ground graph $GG_\mathcal{M}\sigma$, where $\sigma \in \Sigma_{\mathcal{S}}$, has a set of corresponding edges in $NAGG_{\mathcal{M}B}$. Let $\sigma \in \Sigma_{\mathcal(S)}$ be an arbitrary skeleton and $i_k.X \rightarrow i_j.Y \in GG_\mathcal{M}\sigma$ be an arbitrary edge as a result of dependency $[I_j,...,I_k].X \rightarrow [I_j].Y \in \mathbf{D}$. The edge yields two sets of relational variables $\mathbf{P_k.X}=\{P_k.X|\exists_{b \in \sigma(B)}, i_k \in P_k|_b\}$ and $\mathbf{P_j.Y}=\{P_j.Y|\exists_{b \in \sigma(B)}, i_j \in P_j|_b\}$.
The construction of $NAGG_{\mathcal{M}B}$ adds edges corresponding to $i_k.X \rightarrow i_j.Y \in GG_\mathcal{M}\sigma$ as follows:
\begin{enumerate}
    \item If $P_k \in extend(P_j, [I_j,...,I_k])$, then $P_k.X \rightarrow P_j.Y$ is added according to the definition of $NAGG_{\mathcal{M}B}$.
    \item If $P_k \not \in extend(P_j, [I_j,...,I_k])$, but $\exists_{P_k'},P_k'.X \in \mathbf{P_k.X} \wedge P_k' \in extend(P_j, [I_j,...,I_k])$, then $P_k'.X \rightarrow P_j.Y$ is added according to the definition of $NAGG_{\mathcal{M}B}$.
\end{enumerate}
\begin{claim}\label{claim:2}
Above two are only possible conditions for $NAGG_{\mathcal{M}B}$ i.e.,
$\forall_{P_j|P_j.Y \in \mathbf{P_j.Y}}\exists_{P_k|P_k.X \in \mathbf{P_k.X}}, P_k \in extend(P_j, [I_j,...,I_k])$.
\end{claim}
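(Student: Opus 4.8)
The plan is to fix an arbitrary $P_j$ with $P_j.Y \in \mathbf{P_j.Y}$ and exhibit a witness $P_k$ with $P_k.X \in \mathbf{P_k.X}$ that lies in $extend(P_j, [I_j,...,I_k])$; since the two enumerated construction cases record exactly the ways an edge is added to $NAGG_{\mathcal{M}B}$, establishing this existence shows they are the only possibilities. First I would unpack the hypothesis: because $i_k.X \rightarrow i_j.Y$ belongs to $GG_{\mathcal{M}\sigma}$ as a realization of the dependency $[I_j,...,I_k].X \rightarrow [I_j].Y \in \mathbf{D}$, we have $i_k \in [I_j,...,I_k]|_{i_j}^\sigma$, i.e. $i_k$ is reachable from $i_j$ along the dependency path under BBS.

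Next, fixing $b \in \sigma(B)$ with $i_j \in P_j|_b$, I would exhibit $i_k$ as reachable from $b$ by composing the BBS traversal of $P_j$ (which reaches $i_j$) with the traversal of $[I_j,...,I_k]$ (which reaches $i_k$ from $i_j$). Because the SESR skeleton is fully connected (Definition \ref{def:skeleton}) and, by Proposition \ref{prop:paths}, only finitely many valid paths emanate from $B$, Lemma \ref{lemma:int} guarantees that the terminal sets of these paths are pairwise disjoint and hence partition the instances reachable from $b$. Consequently $i_k$ lies in the terminal set of exactly one valid path $P_k$ from perspective $B$, so $i_k \in P_k|_b$ and $P_k.X \in \mathbf{P_k.X}$.

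It then remains to verify $P_k \in extend(P_j, [I_j,...,I_k])$, which is the completeness of the $extend$ operator specialized to the present setting. I would argue that the concatenation of $P_j$ with the extension path, taken at the appropriate pivot where the two overlap under BBS (so that revisited lower-depth items are not double-counted), is exactly one of the candidate paths enumerated by $extend$, and that validity of $P_k$ (guaranteed since its terminal set contains $i_k$) places it in the output set. Here I would reuse the sufficiency analysis of Lemma \ref{lemma:extend}, run in reverse: the pivot bookkeeping of $extend$ is designed to realize precisely the path whose terminal set, from $b$, receives $i_k$.

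The main obstacle is this last step. In general relational domains $extend$ need not be complete, and one must show that the existence-uncertainty modifications (Definitions \ref{def:skeleton} and \ref{def:ex_ind}) together with the boundedness of Proposition \ref{prop:paths} remove the problematic configurations. I expect to close it by a finite case analysis over the handful of admissible paths for $B \in \{E,R\}$: for each pairing of $P_j$ with an extension path $[I_j,...,I_k]$, one checks that the unique valid path reaching $i_k$ from $b$ coincides with a pivot-concatenation produced by $extend$, so that no ground-graph edge escapes the two construction cases.
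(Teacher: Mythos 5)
Your proposal reaches the claim by a genuinely different---and strictly harder---route than the paper's. You fix the concrete ground-graph witnesses $(b, i_j, i_k)$, use disjointness of terminal sets (Lemma \ref{lemma:int}) together with the coverage fact inside the proof of Proposition \ref{prop:paths} (every instance is visited by depth four, resp.\ five, so the bounded family of valid paths from $b$ partitions all instances) to locate the \emph{unique} $P_k$ with $i_k \in P_k|_b$, and then aim to show this specific $P_k$ lies in $extend(P_j,[I_j,\dots,I_k])$. That final step is per-instance completeness of $extend$, exactly the property that fails in general relational models, and it is the one step you defer to a promised finite case analysis rather than carry out; it does in fact hold in this SESR existence-uncertainty setting (e.g., $extend([E,R,E],[E,R,E]) = \{[E],[E,R,E]\}$ covers both possible locations of $i_k$ relative to $b$, and $extend([E,R,E],[E,R]) = \{[E,R],[E,R,E,R]\}$ likewise), so your plan is completable, but as written its crux is asserted, not proved. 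The paper sidesteps this burden entirely: because the skeleton is fully connected (Definition \ref{def:skeleton}) and $\mathbf{P_k.X}$ is defined with an existential over $b \in \sigma(B)$, the set $\mathbf{P_k.X}$ contains \emph{all} valid paths of the form $[B,\dots,I_k]$; since $extend(P_j,[I_j,\dots,I_k])$ outputs only valid paths of that form, \emph{any} element of its (non-empty, via the pivot construction) output serves as the witness $P_k$, which is all Claim \ref{claim:2} asks for---the claim does not require the witness to be the path whose terminal set actually contains $i_k$. In short, you prove (modulo the deferred enumeration) a stronger edge-tracking statement of which the claim is a corollary, while the paper's observation that $\mathbf{P_k.X}$ is maximal dissolves the $extend$-completeness obstacle and yields the claim almost immediately; both arguments, yours and the paper's, tacitly rely on non-emptiness of the $extend$ output, which is worth making explicit in either write-up.
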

\begin{proof}
From Proposition \ref{prop:paths}, we know we can have paths of maximum length four for $B=E$ and five for $B=R$. Moreover, due to the fully connected skeleton with $N=|\sigma(E)|$ entity instances and $|\sigma(R)|=\frac{N\times N-1}{2}$ relationship instances according to Definition \ref{def:skeleton}, the sets $\mathbf{P_j.Y}$ and $\mathbf{P_k.X}$ consist all valid paths of the form $[B,...,I_j]$ and $[B,...,I_k]$, respectively. Thus, $extend([B,...,I_j], [I_j,...,I_k])$ is always a subset of valid paths of the form $[B,...,I_k]$, and two conditions above are always true.
\end{proof}
\end{proof}

The proof for Corollary \ref{th:acyclicity} is as follows:
\begin{proof}
The proof follows directly from Theorem \ref{th:abstraction} that shows $NAGG_{\mathcal{M}\sigma}$ with extension of relationship existence uncertainty correctly abstracts all ground graphs for the acyclic model $\mathcal{M}$. Other extensions of latent and selection attributes are special roles given to the attributes. Thus, \citet{maier-thesis14}'s Theorem 4.5.3 can be directly applied to $NAGG_{\mathcal{M}\sigma}$ to prove $NAGG_{\mathcal{M}\sigma}$ is acylic.
\end{proof}

The proof for Theorem \ref{th:rel-d-sep} follows the proof by \citet{maier-thesis14} and is provided below:
\begin{proof}
To establish the soundness of relational d-separation, we must prove that d-separation on a $NAGG_{\mathcal{M}B}$ implies d-separation on all ground graphs it represents. To show the completeness of relational d-separation, we must prove that d-separation facts that hold across all ground graphs are also entailed by the d-separation on the $NAGG_{\mathcal{M}B}$. The introduction of latent and selection attributes only constraint the conditioning set $\mathbf{Z}$ such that it must always include selection attributes $\mathbf{S}$ and never include latent attributes $\mathbf{L}$ for reasoning about d-separation on both ground graphs and $NAGG_{\mathcal{M}B}$. The extension for relationship existence uncertainty constraints the relational skeleton and adds implicit dependencies in both ground graphs and $NAGG_{\mathcal{M}B}$. Thus,~\citet{maier-thesis14}'s proof of soundness and completeness of relational d-separation applies for $NAGG$. \citet{lee-uai15,lee-uai16} have challenged the completeness of AGG and \citet{maier-thesis14}'s proof of relational d-separation which is based on completeness of abstraction. But by Theorem \ref{th:abstraction}, we have shown that $NAGG$ is sound and complete.

\textbf{Soundness}: Assume $\mathbf{X}$ and $\mathbf{Y}$ are d-separated by $\mathbf{Z}$ on $NAGG_{\mathcal{M}B}$. Assume for contradiction that there exists an item instance $b \in \sigma(B)$, for an arbitrary skeleton $\sigma$, such that $\mathbf{X}|_b$ and $\mathbf{Y}|_b$ are \textit{not} d-separated by $\mathbf{Z}|_b$ in the ground graph $GG_{\mathcal{M}\sigma}$. This implies there must exist a d-connecting path $p$ from some $x \in \mathbf{X}|_b$ to some $y \in \mathbf{Y}|_b$ given all $z \in \mathbf{Z}|_b$. Theorem \ref{th:abstraction} shows that $NAGG_{\mathcal{M}B}$ is complete suggesting all the edges in the $GG_{\mathcal{M}\sigma}$ is captured by $NAGG_{\mathcal{M}B}$. So, the path $p$ must be represented by some node in $\{N_x|x \in N_x|_b\}$ connecting to some node in $\{N_y|y \in N_y|_b\}$, where $N_x$ and $N_y$ are nodes in $NAGG_{\mathcal{M}B}$. If the path $p$ is d-connected in $GG_{\mathcal{M}\sigma}$, then it is also d-connected in $NAGG_{\mathcal{M}B}$, implying that $\mathbf{X}$ and $\mathbf{Y}$ are \textit{not} d-separated by $\mathbf{Z}$. Thus, $\mathbf{X}|_b$ and $\mathbf{Y}|_b$ must be d-separated by $\mathbf{Z}|_b$.

\textbf{Completeness}: Assume $\mathbf{X}|_b$ and $\mathbf{Y}|_b$ are d-separated by $\mathbf{Z}|_b$ in the ground graph $GG_{\mathcal{M}\sigma}$ for all skeleton $\sigma$ and for all $b \in \sigma(B)$. Assume for contradiction that $\mathbf{X}$ and $\mathbf{Y}$ are \textit{not} d-separated by $\mathbf{Z}$ on $NAGG_{\mathcal{M}B}$. This implies there must exist a d-connecting path $p$ for some relational variable $X \in \mathbf{X}$ to some $Y \in \mathbf{Y}$ given all $Z \in \mathbf{Z}$. Theorem \ref{th:abstraction} shows that $NAGG_{\mathcal{M}B}$ is sound suggesting every edge in $NAGG_{\mathcal{M}B}$ must correspond to some pair of variables in some ground graph. So, if the path $p$ is responsible for d-connection in $NAGG_{\mathcal{M}B}$, then there must exist some skeleton $\sigma$ such that $p$ is d-connecting in $GG_{\mathcal{M}\sigma}$ for some $b \in \sigma(B)$, implying that d-separation does not hold for that ground graph. Thus, $\mathbf{X}$ and $\mathbf{Y}$ must be d-separated by $\mathbf{Z}$ on $NAGG_{\mathcal{M}B}$.
\end{proof}

\subsection{Proofs for identification of individual direct effects}\label{sec:ap-iden}

The proof for Lemma \ref{prop:iden} first formalizes the Assumption \ref{assum:pre} in terms of the Network Structural Causal Model (NSCM), then uses Network Abstract Ground Graph (NAGG) for reasoning about possible adjustment sets that satisfy the backdoor criterion~\cite{pearl-book09} in the presence of selection and latent variables.

\begin{proof}
    
    The formal implications of Assumption \ref{assum:pre} ``The network $G$ and its attributes are measured before treatment assignments and treatments are immutable from assignment to outcome measurement" on the NSCM $\mathcal{M}(\mathcal{S},\mathbf{D},\mathbf{f})$ are as follows:
    \begin{enumerate}
        \item The entity and relationship attributes , i.e., ${\mathbf{Z_n}\subseteq \mathcal{A}(E)}$ and ${\{Exists, \mathbf{Z_n}\}\subseteq \mathcal{A}(R)}$, are never the descendants of treatment attribute $X \in \mathcal{A}(E)$ and outcome attribute $Y \in \mathcal{A}(E)$. These attributes are commonly referred to as background covariates.
        \item There is no selection bias on treatment and outcome but there could be selection bias on the background covariates, i.e., $\mathcal{A(E)} \cup \mathcal{A(R)} \setminus \{X, Y\} \in \mathbf{S}$, where $\mathbf{S}$ indicates a set of attributes marked as selected. Note that the absence of selection bias is the stronger assumption and the experimental or observational studies for causal inference studies typically select certain populations of interest based on the background covariates. For the selected population, there is no further selection based on treatment and outcome is a reasonable assumption for experimental or prospective observational data.
        \item Although all attributes could be auto-regressive (e.g., $[E].Y^L \rightarrow [E].Y$) or contagious (e.g., $[E,R,E].Y^L \rightarrow [E].Y$), the treatment is assumed to be immutable during the study period. This implies the time-lagged treatment attributes do not directly affect the outcome, i.e.,\{$[E].X^L \rightarrow [E].Y, [E,R,E].X^L \rightarrow [E].Y\} \not \in \mathbf{D}$. All time-lagged attributes $\{X^L, Y^L, \mathbf{Z_n}^L, \mathbf{Z_e}^L, Exists^L\} \subseteq \mathbf{L}$, where $\mathbf{L}$ is a set of attributes marked as latent.
    \end{enumerate}

    For node attributes {\small $\mathbf{Z_n}$} and edge attributes {\small $\mathbf{Z_e}$}, we denote relational variables {\small $[E].\mathbf{Z_n}, [E,R].\mathbf{Z_e}$, $[E,R,E].\mathbf{Z_n}$}, and {\small $[E,R,E,R].\mathbf{Z_e}$} with the notations {\small $Z_i\in \mathbb{R}^{d}, Z_r\in \mathbb{R}^{N'\times d'}, Z_{-i}\in \mathbb{R}^{N'\times d},$} and {\small $Z_{-r}\in \mathbb{R}^{N'\times N'-1 \times d'}$}, respectively, where {\small $N'=|\mathcal{V}|-1$ and $<d,d'>$} are constants. Let {\small $E_r \in \{0,1\}^{N'\times 1}$} and {\small $E_{-r} \in \{0,1\}^{N'\times N'-1 \times 1}$} be the relationship existence indicator variables {\small $[E,R].Exists$} and {\small $[E,R,E,R].Exists$}, respectively.
    Let $\mathbf{L_v}$ and $\mathbf{S_v}$ denote sets of latent and selection variables, respectively.

    Since our causal estimand of individual direct effects (IDE), has peer treatments as a conditional in Equation \ref{eq:dir_eff}, the identification should consider $X_{-i}$, i.e., $[E,R,E].X$ as observed and we need to find an adjustment set $\{X_{-i}, \mathcal{Z}_i\}$. To prove Lemma \ref{lemma:adjustment}, we have to show that if there exists any valid adjustment set $\mathbf{W}$, where $X_{-i} \in \mathbf{W} \wedge \mathbf{W} \cap \mathbf{L_v} = \emptyset \wedge \mathbf{W} \cap \mathbf{S_v}=\mathbf{S_v}$, then the set $X_{-i}, \mathcal{Z}_{i}, \mathbf{S_v}\}$, where $\mathcal{Z}_i=\{Z_i, Z_r, Z_{-i}, Z_{-r}, E_r, E_{-r}\}$ is also a valid adjustment set. The restrictions in $\mathbf{W}$ follow from the implications of latent and selection variables in the adjustment set.

    Although the unconfoundedness condition, i.e., $\{Y_i(X_i=1),Y_i(X_i=0)\} \perp\!\!\!\!\perp  X_i | \mathbf{W}$, is untestable, causal reasoning of d-separation between $X_i$ and $Y_i$ can be used in the ``mutilated" SCM ($\mathcal{G}_{\bar{X_i}}$) where outgoing edges from treatment $X_i$ are removed~\cite{pearl-book09,bareinboim-pnas16}.

    There can be arbitrary dependencies between all possible background variables and these variables may or may not affect treatment and outcome.
    
    Proof by contradiction. Let $\mathbf{W}$ satisfy the unconfoundedness condition, i.e., $\mathbf{W}$ d-separates $X_i$ and $Y_i$ in $\mathcal{G}_{\bar{X_i}}$, but the set $\{X_{-i}, Z_i, Z_r, Z_{-i}, Z_{-r}, E_r, E_{-r}\}$ does not for contradiction.

    \textbf{Case I}: \textit{There is latent confounding between $X_i$ and $Y_i$, i.e., $X_i \leftrightarrow Y_i$.}\\
    This contradicts the condition that $\mathbf{W}$ d-separates $X_i$ and $Y_i$ in $\mathcal{G}_{\bar{X_i}}$ because there exists an open path due to unobserved confounding.

    \textbf{Case II}: \textit{There exist paths $X_i \leftrightarrow C$ and $D \leftrightarrow Y_i$ for any colliders $C$ and $D$ such that $\{C,D\} \not \in \{X_i,Y_i,Y_{-i}\}$, where $C=D$ or there exist open path from $C$ to $D$ (e.g., collider chaining), with a latent common cause between $C$ and the treatment $X_i$ and a separate latent common cause between $D$ and the outcome $Y_i$.}\\
    In this case, if collider $C$ is peer treatments, i.e., $X_{-i}$, then it must be observed for estimating Equation \ref{eq:dir_eff}. The conditioning opens a path from $X_i$ to $Y_i$ via the collider. If the collider is any background covariate then, the selection of the background covariates opens the path $X_i$ to $Y_i$ via the collider. Any open paths from $C$ and $D$ for $C\ne D$ follows the same reasoning. These cases contradict the condition that $\mathbf{W}$ d-separates $X_i$ and $Y_i$ in $\mathcal{G}_{\bar{X_i}}$.

    \textbf{Case III}: \textit{There exist a path $X_i \leftarrow L_v \rightarrow Y_{-i}$, where $L_v \in \{L_i, L_r, L_{-i}, L_{-r}\}$.}\\
    In this case, although we have $X_i \leftrightarrow Y_{-i} \leftrightarrow Y_i$, $Y_{-i}$ is not observed and does not belong to the selection set and cannot directly open paths like case II. However,  $X_i \leftrightarrow Y_{-i}$ suggests a dependency $L_v \rightarrow Y_{-i}$, where $L \in \mathbf{L}$ be a latent node or edge attribute and $L_v$ the corresponding variable. NSCM specifies dependencies in the canonical form. Therefore, for this extended dependency $L_v \rightarrow Y_{-i}$ (purple arrows in Figure \ref{fig:proof-adjust}), there must be some explicit dependency that needs to be considered for the identifiability of causal effects. Figure \ref{fig:proof-adjust} illustrates equivalence for $L_v \in \{L_i, L_r, L_{-i}, L_{-r}\}$ to cases II, I, II, and I respectively. The purple arrow shows $L_v \rightarrow Y_{-i}$ dependency while the black arrow shows $L_v \rightarrow X_i$ dependency. These dependencies are extended and shown as dotted edges. For this case the condition that $\mathbf{W}$ d-separates $X_i$ and $Y_i$ in $\mathcal{G}_{\bar{X_i}}$ does not hold.

    For all other cases, from Proposition \ref{prop:paths}, we have a maximum of four paths for entity perspective. The adjustment set $\{X_{-i}, Z_i, Z_r, Z_{-i}, Z_{-r}, E_r, E_{-r}\}$ covers all observed variables except $Y_{-i}$. Adjusting on $Y_{-i}$ can open the colliders. Thus, $\{X_{-i}, Z_i, Z_r, Z_{-i}, Z_{-r}, E_r, E_{-r}\}$ is the maximal adjacency set that does not open any more colliders and block all the mediated backdoor paths if first three cases do not exist. This contradicts the set $\{X_{-i}, Z_i, Z_r, Z_{-i}, Z_{-r}, E_r, E_{-r}\}$ does not satisfy unconfoundedness when a set $\mathbf{W}$ does.

\begin{figure}
    \centering
    \includegraphics[width=0.4\textwidth]{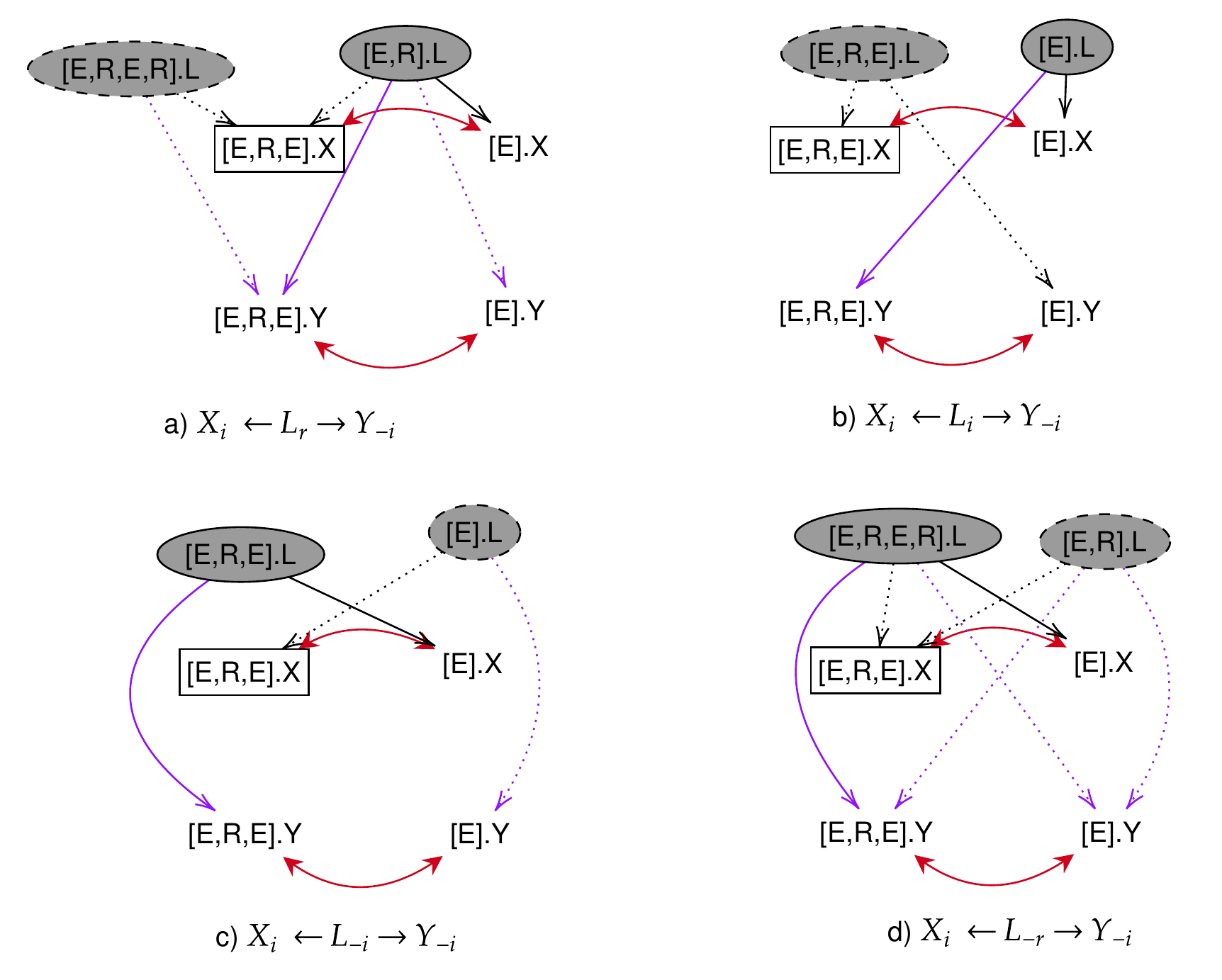}
    \caption{Illustration of contradiction of condition $\mathbf{W}$ d-separates $X_i$ and $Y_i$ in $\mathcal{G}_{\bar{X_i}}$ that for Case III: \textit{There exist a path $X_i \leftarrow L_v \rightarrow Y_{-i}$, where $L_v \in \{L_i, L_r, L_{-i}, L_{-r}\}$.} Depending on the path of the latent variable in Case III, other dependencies derived (dotted) from NSCM/NAGG show equivalence with Case I and Case II.}
    \label{fig:proof-adjust}
\end{figure}
\end{proof}
\textbf{Discussion}. The implication of Lemma \ref{lemma:adjustment} is that under Assumption \ref{assum:pre}, the underlying data generation of treatment and outcome can be described by functions {\small $f_X$} and {\small $f_Y$} in our general NSCM as: \\
\begin{align}
    &X_i = f_X(Z_i, Z_r, Z_{-i}, Z_{-r}, E_r, E_{-r}, X_i^L, X_{-i}^L,\epsilon_X) \text{ and} \\
    &Y_i = f_Y(X_i, X_{-i}, Z_i, Z_r, Z_{-i}, Z_{-r}, E_r, E_{-r}, Y_i^L, Y_{-i}^L, \epsilon_Y),
\end{align}
where {\small $\{X_i^L, X_{-i}^L,Y_i^L, Y_{-i}^L\}$} are latent time-lagged variables for treatment and outcome. The NSCM allows modeling latent homophily, contagious treatments or outcomes, and selection bias. Although all relational variables may not affect the treatment/outcome, the above NSCM does not change the identifiability of causal effects. Considering the maximal adjustment set is helpful to close the path between separate colliders $C$ and $D$ for case II as well as the backdoor paths~\cite{pearl-book09} mediated by the maximal set. Also, having a maximal adjustment set helps to capture underlying influence mechanisms and effect modifications. We have identified conditions under which observational data is not sufficient to guarantee the identifiability of causal effects. Further assumptions or experiments are required in such cases.

The proof of Corollary \ref{prop:iden} applies do-calculus for experimental data with incoming edges to unit treatment and peer treatments removed but allowing correlation between treatments. For observational data, we use Lemma \ref{lemma:adjustment}'s condition, i.e., the unconfoundedness assumption.
\begin{proof}

\textbf{Case 1: Experimental data}. In this case, the treatments are assigned to units randomly or according to some experimental design. Therefore, the treatments $<X_i, X{-i}>$ are exogenous and do not depend on observed contexts $\mathcal{Z}_{-i}$ or unobserved confounders. The treatment assignments, however, can be correlated depending on the experimental design. Figure \ref{fig:prf_exp} depicts the causal graph for the experimental data with exogenous but correlated treatments (green bi-directed edge). Although the relational variables $\{Z_i, Z_r, Z_{-i}, Z_{-r}, E_r, E_{-r}\}$ can have arbitrary causal dependence among themselves, we represent these variables with a single node in the causal graph for simplicity. The black edges in the causal diagram show dependencies between treatment, outcome, and context variables. The blue edges show the dependencies due to contagion where a unit's outcome, measured some time steps after treatment assignment, may be influenced by peers' past outcomes. The dashed blue bi-directed edge indicates the time-lagged outcomes may be correlated due to contagion in earlier time steps. Figure \ref{fig:prf_exp} depicts treatment $X_i$ in the blue circle, outcome $Y_i$ in the red circle, conditional variables $\mathcal{Z}_i$ and $X_{-i}$ in rectangles, and latent variables in the shaded circle.

The counterfactual $E[Y_i(X_i=\pi_i)|X_{-i},\mathcal{Z}_i]$ can be written in terms of do-expression as $E[Y_i|do(X_i=\pi_i),X_{-i},\mathcal{Z}_i]$ because both $\{X_{-i},\mathcal{Z}_i\}$ are non-descendants of $X_i$~\cite{pearl-book09}.
Then, $E[Y_i|do(X_i=\pi_i),X_{-i},\mathcal{Z}_i]$ can be estimated with $E[Y_i|X_i=\pi_i,X_{-i},\mathcal{Z}_i]$ in the experimental data because $\{X_{-i},\mathcal{Z}_i\}$ satisfies backdoor criterion~\cite{pearl-book09} from $X_i$ to $Y_{i}$. In the causal graph with edges outgoing from treatment $X_i$ removed, the conditional set $\{X_{-i},\mathcal{Z}_i\}$ d-separates $X_i$ and $Y_i$ and satisfies Pearl's second rule of do-calculus~\cite{bareinboim-pnas16}, enabling the substitution.

\begin{figure}
    \centering
    \includegraphics[width=0.4\linewidth]{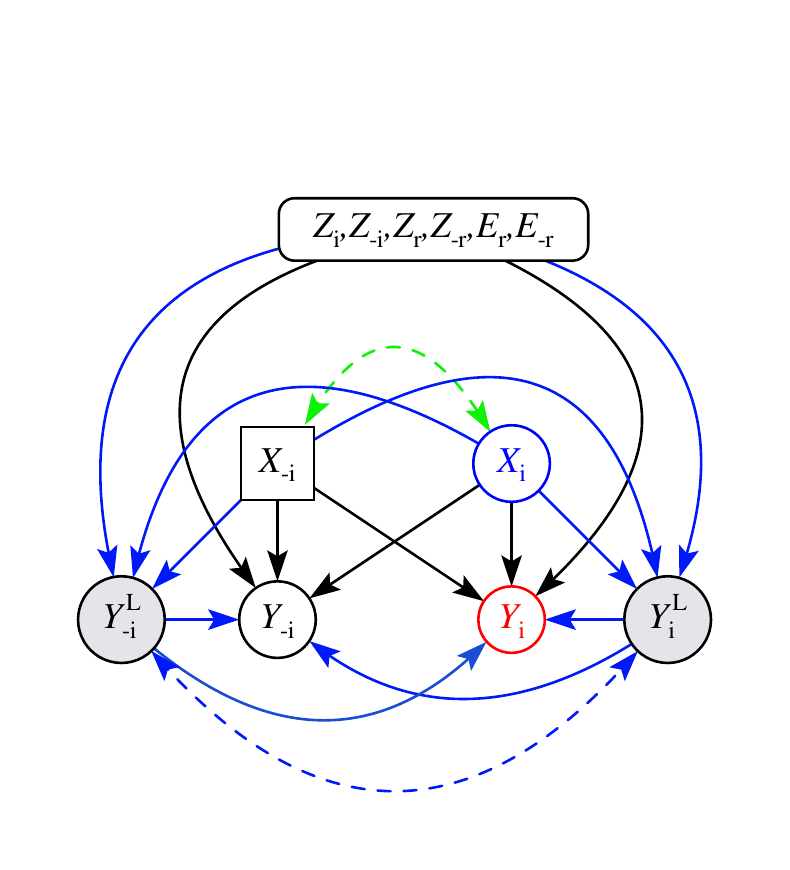}
    \caption{Causal diagram for experimental data ($G^{exp}$).}
    \label{fig:prf_exp}
    \vspace{-1em}
\end{figure}

\textbf{Case 2: Observational data}. For the identification of causal effects from observational data, we have an additional assumption of unconfoundedness, i.e., $\{Y_i(X_i=1),Y_i(X_i=0)\} \perp\!\!\!\!\perp  X_i | X_{-i},\mathcal{Z}_i$ (Lemma \ref{lemma:adjustment}). The counterfactual $E[Y_i(X_i=\pi_i)|X_{-i},\mathcal{Z}_i]$ can be written in terms of do-expression as $E[Y_i|do(X_i=\pi_i),X_{-i},\mathcal{Z}_i]$ because both $\{X_{-i},\mathcal{Z}_i\}$ are non-descendants of $X_i$~\cite{pearl-book09}. The term $E[Y_i|do(X_i=\pi_i),X_{-i},\mathcal{Z}_i]$ can be replaced by conditional $E[Y_i|X_i=\pi_i,X_{-i},\mathcal{Z}_i]$ using Pearl's second rule of do-calculus~\cite{bareinboim-pnas16} if  $\{X_{-i},\mathcal{Z}_i\}$ d-separates $X_i$ and $Y_i$ in the mutilated graph with outgoing edges from $X_i$ removed. Assume, for contradiction, this d-separation does not exist. However, this contradicts the unconfoundedness assumption where $X_i$ is independent of counterfactual outcomes $<Y_i(1), Y_i(0)>$ given $\{X_{-i},\mathcal{Z}_i\}$. Therefore, the counterfactual $E[Y_i(X_i=\pi_i)|X_{-i},\mathcal{Z}_i]$ can be estimated with $E[Y_i|X_i=\pi_i,X_{-i},\mathcal{Z}_i]$ under unconfoundedness assumption.
\end{proof}

\subsection{Estimators and hyperparameters}\label{sec:hyperparam}
We start with the description of hyperparameters of the proposed IDE-Net estimator. Then, we discuss hyperparameters for baselines in each experimental setting. Finally, we describe the computational resources used for the experiments.

We separate IDE-Net into two components: individual direct effects (IDE) encoder to learn feature and exposure mapping and TARNet for counterfactual prediction. Since our feature and exposure embeddings try to capture all potential heterogeneity contexts, most of these contexts could be irrelevant for underlying data generation. So, we include a Batch Normalization layer with $tanh$ activation function after the first $MLP$ layer in the TARNet module. The Batch Normalization layer adds regularization to mitigate the effect of irrelevant features. 

The fundamental challenge with causal effect estimation tasks is that we do not have ground truth for any causal effects to tune hyperparameters and the models have to rely on generalizing for the observed factual outcomes. The regularization to avoid over/underfitting relies on smoothing the variance in estimated causal effects. This is particularly important for our causal inference under heterogeneous peer influence (HPI) task to maintain invariance to non-relevant contexts while retaining expressiveness. Also, we train our models starting with a higher learning rate and gradually decreasing the learning rates.

\textbf{Hyperparameters for IDE-Net estimator}.
\begin{itemize}
    \item \texttt{maxiter}: Maximum number of epochs to train (default 300)
    \item \texttt{val}: Fraction of nodes to be used in validation set for model selection or early stopping (default $0.2$)
    \item \texttt{lr}: Learning rate for the encoder, i.e., feature and exposure embeddings, (default $0.02$)
    \item \texttt{lrest}: Learning rate for the estimator, i.e., counterfactual predictors, (default $0.2$)
    \item \texttt{lrstep}: Change learning rate after l epochs (default $50$)
    \item \texttt{lrgamma}: Decay learning rate by multiplying it with this value (default $0.5$)
    \item \texttt{clip}: Clip gradient values of estimator (default $3$)
    \item \texttt{max\_patience}: Early stopping if validation loss (without regularization) does not improve for given epochs (default $300$, i.e., no early stopping)
    \item \texttt{weight\_decay}: L2 regularization parameter for the Adam optimizer (default $1e-5$)
    \item \texttt{fdim}: Dimension of hidden units for node and peer features (default $32$ for synthetic and $64$ for semi-synthetic data)
    \item \texttt{edim}: Dimension of hidden units for edge features (default $4$)
    \item \texttt{inlayers}: MLP layers for the encoder (default $2$). We chose $2$ layers to capture non-linear feature mapping before summarization with GCN.
    \item \texttt{dropout}: Dropout probability (default $0$). We rely on L2 regularization, batch normalization, and smoothing regularization and do not use dropout.
    \item \texttt{normY}: Whether outcome should be normalized for training (default False)
    \item \texttt{alpha}: Parameter for representation balancing regularization~\cite{shalit-icml17,guo-wsdm20} (default $0$ for TARNet and $0.5$ for CFR). Although IDE-Net could be converted to IDE-CFR by passing any alpha except $0$, we do not report experiments for this estimator because it has additional computation cost and hyperparameter selection. We substitute the regularization obtained by representation balancing with cheaper smoothing regularization explained next.
    \item \texttt{reg}: Whether smoothing regularization should be enabled (default True). If set to true, we fix decay parameter $\gamma=3$ and select the model with the best validation loss (without regularization) between two settings of scaling parameter $\lambda_s=0.1$ and $\lambda_s=1.0$. The regularization is only enabled after $150$, i.e., $50\%$, epochs to estimate the variance of predicted causal effects and decide the strength of smoothing. Without smoothing, the performance of IDE-Net is slightly worse, when the underlying causal effects are uniform.
\end{itemize}

\textbf{Hyperparmeters for baselines}. We compare the performance of our method with the NetEst~\cite{jiang-cikm22} estimator and baselines (Network Deconfounder (ND)~\cite{guo-wsdm20}, ND-INT, GCN-TARNet, GCN-TARNet-INT, MLP-CFR, and MLP-CFR-INT) implemented in the NetEst paper\footnote{https://github.com/songjiang0909/Causal-Inference-on-Networked-Data}. NetEst~\cite{jiang-cikm22} focuses on estimating "insulated" individual effects, peer effects, and overall effects for within-sample data (i.e., data available for training the model) as well as out-of-sample data (i.e., data unavailable for training the model). NetEst uses regularization by making features less predictive of treatment assignments and homogenous peer exposure. We adapt NetEst to estimate individual effects instead of "insulated" individual effects but find the performance very poor. This may be because NetEst concatenates features, peer exposure, and treatment before counterfactual prediction, and it is not agnostic to treatment like TARNet and CFR architectures~\cite{shalit-icml17}. The value of treatment may be lost in our settings where individual effects have a small coefficient and peer exposure has a large coefficient. Although we focus on individual direct effect estimation for within-sample data, we use NetEst with  "insulated" individual effects as a baseline for two experiment settings because "insulated" individual effects are equivalent to individual effects without interaction with peer exposure.

We use all default hyperparameters of NetEst except we set the epochs to train the treatment predictor and peer exposure predictor to $1$ instead of $50$ to save computation time. Interestingly, we find the performance of NetEst with parameters $dstep=1$ and $d\_zstep=1$ is better than the default settings and report these performances in Figures \ref{fig:cane_pehe}-\ref{fig:cane_ate}. For the BART baseline, we use python-based implementation\footnote{https://github.com/JakeColtman/bartpy} and use the default parameters. BART is popular for good performances without hyperparameters tuning~\cite{hill-jcgs11}. For other baselines, we use default parameters but adapt them to estimate individual direct effects instead of "insulated" individual effects. For the ablation study in Figure \ref{fig:cane_em_pehe}, the baselines use our TARNet/CFR module and similar hyperparameters to that of the IDE-Net estimator. For the ablation study, we follow \citet{yuan-www21}'s implementation\footnote{https://github.com/facebookresearch/CausalMotifs} to generate data for structural diversity as a peer influence mechanism and to extract causal motifs.

The DWR~\cite{zhao-arxiv22} baseline is trained for $300$ epochs with $5$ epoch each for calibration by setting a learning rate of $0.2$. 
1-GNN-HSIC~\cite{ma-aistats21} is trained for $300$ epochs with representation balancing parameter $alpha=0.5$ by setting a learning rate of $0.2$. Other hyperparameters like gradient clipping, feature dimension, and weight\_decay are similar to IDE-Net for both variants of DWR as well as 1-GNN-HSIC. \newrevision{TNet is trained for $10$ epochs with $50$ iterations on each epoch for targeted learning with other default hyperparameters.}

\textbf{Computational resources}. All the experiments are performed in a machine with the following resources.
\begin{itemize}
    \item CPU: AMD EPYC 7662 64-Core Processor (128 CPUs)
    \item Memory: 256 GB RAM
    \item Operating system: Ubuntu 20.04.4 LTS
    \item GPU: NVIDIA RTX A5000 (24 GB)
    \item CUDA Version: 11.4
\end{itemize}

\subsection{Results in Tabular Form}
\subsubsection{Estimation error for {IDEs with effect modification} and HPI based on similarity, mutual connections, and peer degree (Fig \ref{fig:cane_pehe}).}
\begin{table}[!h]
\centering
\label{synCaneBA}
\begin{adjustbox}{max width=\textwidth}

\end{adjustbox}
\end{table}

\end{document}